\def\@listI{\leftmargin\leftmargini
            \parsep 0\p@ \@plus1\p@ \@minus\p@
            \topsep 4\p@ \@plus2\p@ \@minus2\p@
            \itemsep0\p@}
\let\@listi\@listI
\def\@listii {\leftmargin\leftmarginii
              \labelwidth\leftmarginii
              \advance\labelwidth-\labelsep
              \parsep 0\p@ \@plus1\p@ \@minus\p@
              \topsep 0\p@ \@plus2\p@ \@minus2\p@}
\def\@listiii{\leftmargin\leftmarginiii
              \labelwidth\leftmarginiii
              \advance\labelwidth-\labelsep
              \parsep 0\p@ \@plus1\p@ \@minus\p@
            	\topsep 0pt \@plus1pt \@minus1pt
            	\itemsep0\p@}
\newtheorem{thm}{Theorem}
\newtheorem{lem}[thm]{Lemma}
\newtheorem{lemma}[thm]{Lemma}
\newtheorem{prop}[thm]{Proposition}
\newtheorem{defn}[thm]{Definition}
\newtheorem{definition}[thm]{Definition}
\numberwithin{equation}{section}
\newtheorem{Example}{Example}
\newcommand{\To}{\Rightarrow}
\newcommand{\defeater}{\leadsto}
\newcommand{\PROP}{\ensuremath{\mathrm{PROP}}\xspace}
\newcommand{\MOD}{\ensuremath{\mathrm{MOD}}\xspace}
\newcommand{\LIT}{\ensuremath{\mathrm{Lit}}\xspace}
\newcommand{\MODLIT}{\ensuremath{\mathrm{ModLit}}\xspace}
\newcommand{\FACTS}{\ensuremath{F}\xspace}
\newcommand{\LAB}{\ensuremath{\mathrm{Lab}}\xspace}
\newcommand{\non}{\ensuremath{\mathord{\sim}}}
\newcommand{\seq}[2][n]{#2_{1},\dots,#2_{#1}}
\newcommand{\opseq}[4][\otimes]{#2_{#3}#1\cdots#1#2_{#4}}
\newcommand{\bigoslash}{\oslash}
\newcommand{\set}[2][\relax]{%
 \ifthenelse{\equal{#1}{auto}}{\left\{#2\right\}}{#1\{#2#1\}}%
}
\newcommand{\OBL}{\ensuremath{\mathrm{O}}\xspace}
\newcommand{\PERM}{\ensuremath{\mathrm{P}}\xspace}
\newcommand{\red}[1]{{\color{red} #1}}
\let\arrow\hookrightarrow
\begin{document}

\begin{frontmatter}                           

\setlength{\abovedisplayskip}{100pt}
\title{Computing Strong and Weak\\ Permissions in Defeasible Logic%
}
\runningtitle{Computing Defeasible Permissions}

\author[A]{\fnms{Guido} \snm{Governatori}},
\author[A,B,C]{\fnms{Francesco} \snm{Olivieri}}\linebreak
\author[D]{\fnms{Antonino} \snm{Rotolo}} and
\author[A,B,C]{\fnms{Simone} \snm{Scannapieco}}
\address[A]{NICTA, Queensland Research Laboratory, Australia}
\address[B]{Department of Computer Science, University of Verona, Italy}
\address[C]{Institute for Integrated and Intelligent Systems, Griffith University, Australia}
\address[D]{CIRSFID, University of Bologna, Italy}
\runningauthor{Governatori et al.}

\begin{abstract}
\; In this paper we propose an extension of Defeasible Logic to represent and
compute three concepts of defeasible permission. In particular, we discuss
different types of explicit permissive norms that work as exceptions to
opposite obligations. Moreover, we show how strong permissions can be
represented both with, and without introducing a new consequence relation for
inferring conclusions from explicit permissive norms. Finally, we illustrate
how a preference operator applicable to contrary-to-duty obligations can be
combined with a new operator representing ordered sequences of strong
permissions which derogate from prohibitions. The logical system is studied
from a computational standpoint and is shown to have liner computational
complexity.
\end{abstract}

\end{frontmatter}

\pagestyle{plain}

\makeatletter
\abovedisplayskip 5.5pt \@plus 3pt \@minus 3pt
\belowdisplayskip\abovedisplayskip
\abovedisplayshortskip 3pt\@plus 1pt\@minus 1pt
\belowdisplayshortskip\abovedisplayshortskip
\makeatother

\section{Introduction} 
\label{sec:introduction}

The concept of permission plays an important role in many normative domains in
that it may be crucial in characterising notions such as those of
authorisation and derogation \cite{Boella:2005,sartor:2005,stolpe-jal:2010}.
For example, sometimes it may happen that we mistakenly drive to a building
site, or a road-work restricted area, with signs out saying ``No admittance.
Authorised personnel only''. Or consider when we subscribe to an on-line sale
agreement accepting to enter our personal data on the condition that this
information is only used for shipping, and other necessary purposes to
communicate with us or deliver the products to us. In both cases, a permission
(to enter a restricted area or to use our personal data) is stated as an
exception to a general prohibition.

Despite this fact, the concept of permission is still elusive in this field of
literature and has not been extensively investigated in deontic logic as the
notion of obligation. For a long time, deontic logicians mostly viewed
permission as the dual of obligation: $\PERM a \equiv \neg \OBL \neg a$. This
view is unsatisfactory, as it hardly allows us to grasp the meaning of
examples like the ones previously mentioned. For this, and other reasons, the
attempt to reduce permissions to duals of obligations has been criticised (see
\cite{alchourron-bulygin:1984,alchourron:1993}).

One important distinction that has traditionally contributed to a richer
account of this concept is the one between \emph{weak} (or \emph{negative})
and \emph{strong} (or \emph{positive}) permission \cite{vonwright:1963}. The
former corresponds to saying that some $a$ is permitted if $\neg a$ is not
provable as mandatory. In other words, something is allowed by a code iff(only
when) it is not prohibited by that code. At least when dealing with
unconditional obligations, the notion of weak permission is trivially
equivalent to the dual of obligation \cite{makinson-torre:2003}.

The latter concept of strong permission is more complicated, as it
amounts to saying that some $a$ is permitted by a code iff such a
code explicitly states that $a$ is permitted. It follows that a strong
permission is not derived from the absence a prohibition, but is explicitly
formulated in a permissive norm. The complexities of this concept depend on
the fact that, besides ``the items that a code explicitly pronounces to be
permitted, there are others that in some sense follow from the explicit
ones''. The problem is hence ``to clarify the inference from one to the
other'' \cite[p. 391--2]{makinson-torre:2003}. For example, if some $b$
logically follows from $a$, which is strongly permitted, can we say that $b$
is also strongly permitted?

Features such as the distinction between strong and weak permission show the
multi-faceted nature of permission and permissive norms, which has been
overlooked by most logicians for a long time. Nevertheless, some
exceptions have recently offered significant contributions to the logical
understanding of permission
\cite{makinson-torre:2003,boella-torre-icail:2003,boella-torre-nrac:2003,brown:2000,stolpe-jal:2010,stolpe-deon:2010}.
These contributions can be roughly summarised into the following points:

\begin{itemize}
 \item despite some scepticism
   \cite{ross:1968,opalek-wolenski:1991} and critical remarks
   \cite{alchourron-bulygin:1981,alchourron-bulygin:1984} (a discussion of
   this related work can be found in Section \ref{sec:conclusions}), the
   distinction between weak and strong permission seems to be needed,
   otherwise it is rather hard to account for the fact that certain
   permissions make sense because they explicitly derogate to \emph{existing}
   prohibitions while other permissions are not explicit and occur
   \emph{precisely} because opposite prohibitions \emph{do not exist};
\item we may have different types of strong permissions (specifically
permissions that logically follow from explicit permissive norms), according
to whether
\begin{itemize}
\item we statically determine what is actually permitted given what is
obligatory and what is explicitly permitted;
\item we dynamically determine ``the limits on what may be prohibited
without violating static permissions'' \cite{boella-torre-icail:2003};
\end{itemize}
 \item especially in the \emph{law}, strong permissions state exceptions to
   obligations \cite{bobbio:1958}: indeed, derogating with a permission,
   for example, to a general prohibition to use private protected data
   provides an exception to such a prohibition;
\item strong permissions make sense even when any incompatible prohibitions
are not in the legal system; permissions have a dynamic behaviour and prevent
future prohibitions from holding in general, or applying to specific contexts \cite{bulygin:1986}.
\end{itemize} 

This paper moves from the above points with the specific
purpose of studying the different conceptual and computational aspects of weak
and strong permissions. More precisely, the current contribution works in the
following directions:
\begin{description}
	\item[Permissions and defeasibility.] The concept of permission exhibits
strong connections with the idea of defeasibility. Indeed, an example of a
natural way to capture strong permissions acting as exceptions to obligations
is the one where permissions rebut the conclusions of incompatible
prescriptive norms
\cite{makinson-torre:2003,boella-torre-icail:2003,igpl09policy,stolpe-jal:2010}
or undercut them (i.e., challenge an inference rule of an argument supporting
an opposite obligation) \cite{boella-torre-nrac:2003}).

	\item[Permissions and preferences.] Sometimes explicit derogations of
(existing or possible) prohibitions can be ranked according to some preference
orderings. In other words, given any prescriptive norm prohibiting $a$, more
derogations to this norm can be stated and ranked in a certain preference
sequence. This situation may occur in domains such as the law, where for
instance the lawmaker, when imposing duties for citizens, establishes
conditions to lessen the effect of violating such duties to different degrees,
or exempt people to comply with the duties. We will study these mechanisms and see that
ordered sequences of strong permissions, derogating or making exceptions to
prohibitions, have interesting similarities with ordered sequences of
contrary-to-duty obligations \cite{ajl:ctd,clima}. This is a specific novelty
of our contribution, as such sequences regard permissions (i.e., exceptions)
which are not necessarily incompatible with each other.

	\item[Permissions and computation.] If, as we have mentioned, the concept
   of permission is mostly overlooked in literature, being that its computational
   treatment is basically neglected. To the best of our knowledge, no work
   in deontic logic has extensively explored the computational complexity
   of reasoning about different types of permission. Here, we will attempt
   a first analysis of the problem in the context of a modal extension of
   Defeasible Logic \cite{Antoniou_2001}. Modal Defeasible Logic is a
   computationally efficient logical framework able to capture various
   aspects of non-monotonic and modal reasoning, as well as the defeasible
   character of permissive norms, and recently a possible-world semantics for
   it has been proposed \cite{GovernatoriRC12}. We will study how to
   compute weak and strong permissions with and without introducing a new
   non-monotonic consequence relation for permission. The choice of Defeasible Logic is
   motivated by the fact that it is very efficient. Also, the formal
   language of its (non-modal) modulo is simple, thus allowing us to isolate
   the deontic aspects of permissions and investigate their specific
   computational characteristics.
\end{description}

The layout of the paper is as follows.
Section~\ref{sec:three_concepts_of_permission} introduces and informally
discusses three types of defeasible permission in Defeasible Logic. Section~\ref{sec:logics}
presents the technical machinery and states coherency and consistency
results of the proposed extension. In Section~\ref{sec:algos} we develop the
algorithmic means to state what is mandatory and what is permitted in a given
theory, along with the corresponding computational results in
Section~\ref{sec:CompRes}. Section~\ref{sec:discussion} discusses the system
and illustrates how the logical framework presented in Section
\ref{sec:logics} is able to capture the three types of permission.
Section~\ref{sec:conclusions} discusses some related work and provides a
summary of the paper.

\section{Three concepts of permission}\label{sec:three_concepts_of_permission}

This section is meant to offer a brief and gentle introduction to our formal
language and logic, and to discuss three different types of permission and
their relation with the concept of normative defeasibility. Moreover, we
illustrate the idea of preference over permissions that explicitly derogate to
prohibitions.

These aspects will be formally handled in Section~\ref{sec:logics}.
The whole discussion of the computational aspects of permission in Defeasible Logic is postponed to Sections \ref{sec:algos} and \ref{sec:CompRes}.

\subsection{Informal presentation of the logic}

Let us summarise the basic logical intuitions behind our framework.
%
\begin{enumerate}
  \item\label{en:rules} Permissive and prescriptive norms are represented by
  means of defeasible rules, whose conclusions normally follow unless they are
  defeated by contrary evidence. For example, the rule 
\[
\mathit{Order}\To_{\OBL}
  \mathit{Pay}
\]
 says that, if we send a purchase order, then we will be
  defeasibly obliged to pay; the rule 
\[
\mathit{Order}, \mathit{Creditor}
  \To_{\PERM} \neg \mathit{Pay}
\]
states that if we send an order, in general we are not obliged to pay if we
are creditors towards the vendor for the same amount.

\item\label{en:modalities} Rules introduce modalities: if we have the rule
$a\To_{\OBL} b$ and $a$ holds, then we obtain $\OBL b$. That is to say, in the
scenario where conditions described by $a$ hold, the obligation of doing $b$
is active as well. The advantage is that explicitly deriving modal literals
such as $\OBL b$ adds expressive power to the language, since $\OBL b$
may appear in the antecedent of other rules, which can then be
triggered.

\item\label{en:iteration} For the sake of simplicity, modal literals can only
occur in the antecedent of rules.  In other words, we do not admit nested modalities, i.e., rules such as $a
\To_{\OBL} \PERM b$. This is in line with our idea that the
applicability of rules labeled with modality $\Box$ (where $\Box$ can be
$\OBL$ for obligation or $\PERM$ for permission) is the condition for deriving literals
modalised with $\Box$. 

\item The symbols $\OBL$ and $\PERM$ are not simple labels: they are modalities.
$\OBL$ is non-reflexive\footnote{As it is well-known, in a non-reflexive modal
logic $\Box a$ does not imply $a$, where $\Box$ is a modal operator.}:
consequently, we do not have a conflict within the theory when $\neg a$
is the case and we derive that $a$ is mandatory ($\OBL a$); this amounts to having a
violation. The modality $\PERM$
works in such a way that two rules for $\PERM$ supporting $a$ and $\neg a$ do
not clash, but a rule like $\To_{\PERM} b$ attacks a rule such as $\To_{\OBL}
\neg b$ (and vice versa).

\item Like standard Defeasible Logic, our extension is able to establish the relative
  strength of any rule (thus to solve rule conflicts) and has two types of
  attackable rules: defeasible rules and defeaters. Defeaters in Defeasible Logic are a
  special kind of rules: they are used to prevent conclusions but not to
  support them. For example, the defeater
\[
\mathit{SpecialOrder},\mathit{PremiumCustomer}\leadsto_{\OBL}
\neg \mathit{PayBy7Days}
\]
can prevent the derivation of the obligation for premium customers placing
special orders to pay within the deadline of 7 days, but cannot be used to directly derive any conclusion.
\end{enumerate}


\subsection{Permissions and defeasibility}

The above framework, though simple, allows us to express three basic
types of permissions as well as illustrate interesting connections with the idea
of defeasibility.

\paragraph*{Weak permission.} A first way to define permissions in Defeasible Logic is by
simply considering weak permissions and stating that the opposite of what is
permitted is not provable as obligatory. Let us consider a normative system
consisting of the following two rules:
\[
\begin{array}{llll}
r_1:& \mathit{Park}, \mathit{Vehicle} \To_{\OBL} \neg \mathit{Enter}\\
r_2: &  \mathit{Park}, \mathit{Emergency} \To_{\OBL} \mathit{Enter}.
\end{array}
\]
Here the normative system does not contain any permissive norm. However, since
Defeasible Logic is a sceptical non-monotonic logic, in case both $r_1$ and $r_2$ fire
we neither conclude that it is prohibited nor that it
is obligatory to enter, because we do not know which rule is stronger. Hence, in this
context, both $\neg \mathit{Enter}$ and $\mathit{Enter}$ are weakly permitted.

%

As already argued, this is the most direct way to define the idea of weak
permission: some $q$ is permitted by a code iff $q$ is not prohibited by that
code. Accordingly, saying that any literal $q$ is weakly permitted corresponds to
the failure of deriving $\neg q$ using rules for $\OBL$. Notice that, in Defeasible Logic, this does not amount to obtain $\neg \OBL \neg q$.

\paragraph*{Explicit permissions are defeaters.} In Defeasible Logic any rule can be used to
prevent the derivation of a conclusion. For instance, suppose there exists a
norm that prohibits to U-turn at traffic lights unless there is a ``U-turn
permitted'' sign:
\[
\begin{array}{lllll}
  r_1: \mathit{AtTrafficLight} \To_{\OBL} \neg \mathit{Uturn}\\
  r_2: \mathit{AtTrafficLight, UturnSign} \To_{\OBL} \mathit{Uturn}.
\end{array}
\]
In this example we use a defeasible rule for obligation to block the prohibition to
U-turn. However, this is not satisfactory: if we do not know whether $r_2$ is
stronger than $r_1$, then the best we can say is that U-turn is weakly
permitted. Furthermore, if $r_2$ prevails over $r_1$, we derive that U-turn is
obligatory.

Thus, there are good reasons to argue that defeaters for $\OBL$ are suitable
to express an idea of strong permission\footnote{The idea of using defeaters
to introduce permissions was introduced in \cite{icail05}.}. Explicit rules
such as $r:a\defeater_{\OBL} q$ state that $a$ is a specific reason for
blocking the derivation of $\OBL\neg q$ (but not for proving $\OBL q$). In
other words, this rule does not support any conclusion, but states that $\neg q$ is deontically undesirable. Consider this
example:
\[
\begin{array}{llll}
r_{1}:& \mathit{Weekend},\mathit{AirPollution}
    \To_{\OBL} \neg \mathit{UseCar}\\
r_{2}:& \mathit{Weekend}, \mathit{Emergency} \defeater_{\OBL}
\mathit{UseCar}.
\end{array}
\]
Rule $r_{1}$ states that on weekends it is forbidden to use private cars if a
certain air pollution level is exceeded. Defeater $r_{2}$ is in fact an
exception to $r_1$, and so it seems to capture the above idea that
explicit permissive norms (especially in the law) provide exceptions to
obligations.
%
%
%
%
\paragraph*{Explicit permissions using permissive rules.} Another approach is based on introducing
specific rules for deriving permissions
\cite{makinson-torre:2003,boella-torre-icail:2003}.
Let us consider the following situation:
\[
\begin{array}{llll}
r_{1}:& \mathit{Weekend},\mathit{AirPollution}
    \To_{\OBL} \neg \mathit{UseCar} \\
r'_{2}:& \mathit{Emergency} \To_{\PERM} \mathit{UseCar}.
\end{array}
\]
As $r_2$ in the previous scenario, $r'_2$ looks like an exception to $r_1$.
The apparent difference between $r_2$ and $r'_2$ is that the latter is
directly used to prove that the use of the car is permitted
($\PERM\mathit{UseCar}$) in case of emergencies. The question is:
does it amount to a real difference?

Although $r_2$ is a defeater, it is specifically used to derive the strong
permission to use the car, like $r'_{2}$. In addition, rules such as $r'_2$ do
not attack other permissive rules, but are in conflict only with rules for
obligation intended to prove the opposite conclusion. This precisely holds for
defeaters.

Moreover, let us suppose to have the defeater $s:a\defeater_{\PERM} b$. Does $s$
attack a rule like $\To_{\PERM} \neg b$? 

If this is the case, $s$ would be close to an obligation. The fact that
$\PERM b$ does not attack $\PERM \neg b$ makes it pointless for $s$ to
introduce defeaters for $\PERM$. But, if this is not the case, $s$ could only
attack $\To_{\OBL} \neg b$, thus being equivalent to $s':a\defeater_{\OBL} b$.

Therefore, although it is admissible to have defeaters,
we do not need to distinguish defeaters for $\OBL$ from those for $\PERM$. One
way to mark the difference between $\defeater$ and $\To_{\PERM}$ is by
stating that only the latter rule type admits ordered sequences of strong
permissions in the head of a rule, which are supposed to derogate or make
exceptions to prohibitions. This matter will be discussed in the next
subsection.

\subsection{Permissions, obligations, and preferences}

The introduction of ordered sequences of strong
permissions in the head of a rule, which derogate or make
exceptions to prohibitions, can be logically modelled by enriching the formal
language and following these guidelines:

\begin{enumerate}
\item\label{en:otimes} In many domains, such as the law, norms often specify
mandatory actions to be taken in case of their violation. In general,
obligations in force after the violation of some other obligations correspond
to contrary-to-duty (CTD) obligations. These constructions affect the formal
characterisation of compliance since they identify situations that are not
ideal, but still acceptable. A compact representation of CTDs may resort to
the non-boolean connective $\otimes$ \cite{ajl:ctd}: a formula like
$x\To_{\OBL} a\otimes b$ means that, if $x$ is the case, then $a$ is
obligatory, but if the obligation $a$ is not fulfilled, then the obligation
$b$ is activated and becomes in force until it is satisfied, or violated.

\item Concepts introduced at point \ref{en:otimes} can be extended to
permissive rules with the subscripted arrow $\To_{\PERM}$ by introducing the
non-boolean connective $\odot$ for sequences of permissions. As in the case of
$\otimes$, given a rule like $\To_{\PERM} a \odot b$, we can proceed through
the \mbox{$\odot$-chain} to obtain the derivation of $\PERM b$. However,
permissions cannot be violated, and consequently it does not make sense to
obtain $\PERM b$ from $\To_{\PERM} a \odot b$ and $\neg a$. In this case, the
reason to proceed in the chain is rather that the normative system allows us
to prove $\OBL \neg a$. Hence, $\odot$ still establishes a preference
order among strong permissions and, in case the opposite obligation is in
force, another permission holds. This is significant especially when strong
permissions are exceptions to obligations.
\end{enumerate}

In this paper we take a neutral approach as to whether ordered sequences of
obligations or permissions are either given explicitly, or
inferred from other rules. However, we point out that
normative documents often explicitly contains provision with such structures.
A clear example of this is provided by the Australian ``National Consumer
Credit Protection Act 2009'' (Act No.~134 of 2009) which is structured in
such a way that for every section establishing an obligation or a prohibition,
the penalties for violating the provision are given in the section itself.

\begin{Example}[National Consumer Credit Protection Act 2009] Section 29
(Prohibition on engaging in credit activities without a licence) of the act recites:
\begin{quote}
  (1) A person must not engage in a credit activity if the person does not
  hold a licence authorising the person to engage in the credit activity.\\
  Civil penalty: 2,000 penalty units.\\
  {}[\dots]\\
  Criminal penalty: 200 penalty units, or 2 years imprisonment, or both.
\end{quote}
This norm can be represented as
\begin{gather*}
  r_{1}: \Rightarrow_{\OBL} 
  \neg \mathit{CreditActivity}
  \otimes 
  \mathit{2000CivilPenaltyUnits}\\
  r_{2}: \mathit{CreditLicence} \Rightarrow_{\PERM}
  \mathit{CreditActivity}
\end{gather*}
where $r_{2}>r_{1}$. The first rules state that in absence of other
information a person is forbidden to engage in credit activities
($\OBL\neg\mathit{CreditActivity}$), and then the second rule establish an
exception to the prohibition, or in other terms it recites a condition under
which such activities are permitted. The section then continues by giving
explicit exceptions (permissions) to the prohibition to engage in credit
activity, even without a valid licence.
\end{Example}
Sequences of permissions are a natural fit for expressions like ``the subject
is authorised, in order of preference, to do the following: (list)'' or ``the
subject is entitled, in order of preference, to one of the following:
(list)''. This is illustrated in the next example.

\begin{Example}[U.S. Copyright Act]
A concrete instance of sequences of permissions is given by Section 504(c)(1) (Remedies for infringement:
Damages and profits) of the U.S. Copyright Act (17 USC \S 504).
\begin{quote}
  Except as provided by clause (2) of this subsection, the copyright
  owner may elect, at any time before final judgment is rendered, to
  recover, instead of actual damages and profits, an award of statutory
  damages for all infringements involved in the action, with respect to any
  one work, for which any one infringer is liable individually, or for which
  any two or more infringers are liable jointly and severally, in a sum of
  not less than \$750 or more than \$30,000 as the court considers just.
  [\dots]
\end{quote}
The above provision can be modelled as 
\[
  \mathit{infringment}, \mathit{beforeJudgment} \Rightarrow_{\PERM}
  \mathit{ActualDamages} \odot \mathit{StatutoryDamages}
\]
The above rendering of the textual provision is based on the interpretation of
the term `instead', which suggests that the copyright owners are entitled by
default the award of the actual damages and profits, but they may elect to
recover statutory damages, which is then the second option if exercised by the
relevant party.\footnote{Here we speak of \emph{entitlements} or
\emph{rights}. A \emph{right} is a permission on one party (in this case the
copyright owner) generating an obligation on another party (in this case the
infringer). For a more detailed discussion on the concept of right see
\cite{sartor:2005}.}
\end{Example}
As we have just seen, chains of obligations are appropriate to capture the
obligations and the penalties related to them. Furthermore, this kind of
structure has been successfully used for applications in the area of business
process compliance \cite{ruleml12demo}.
In a situation governed by the rule $\Rightarrow_{\OBL} a \otimes b$ and where
$\neg a$ and $b$ hold, the norm has been complied with (even if to a
lower degree than if we had $a$). On the contrary, if we had two rules
$\Rightarrow_{\OBL} a$ and $\neg a\Rightarrow_{\OBL}b$, then the first norm
would have been violated, while the second would have been complied with. But
in overall, the whole case would be not compliant
\cite{Governatori_Sadiq_2008}.

Consider the following example: 
\[
\begin{array}{ll}
r_{1}: & \mathit{Invoice} \Rightarrow_{\OBL}  \mathit{PayWithin7days} \\
r_{2}: & \OBL  \mathit{PayWithin7days}, \neg \mathit{PayWithin7days} \Rightarrow_{\OBL} \mathit{Pay5\%Interest} \\
r_{3}: & \OBL \mathit{Pay5\%Interest}, \neg \mathit{Pay5\%Interest} \Rightarrow_{\OBL} \mathit{Pay10\%Interest}.
\end{array}
\]
What happens if a customer violates both the obligation to pay within 7 days
after the invoice and the obligation to pay the 5\% of interest, but she pays
the total amount plus the 10\% of interest? In the legal perspective the
customer should be still compliant, but in this representation contract
clauses $r_{1}$ and $r_{2}$ have been violated. However, if we represent the
whole scenario with the single rule
\[
  \mathit{Invoice} \Rightarrow_{\OBL} 
    \mathit{PayBy7days} \otimes
    \mathit{Pay5\%Interest} \otimes
    \mathit{Pay10\%Interest},
\]
then the rule is not violated, and the customer is compliant with the contract.

Even when the text of legal provisions does not explicitly have this
form, there are cases where the joint interpretation of several legal
provisions still leads to formulate applicable norms with orders among
derogations.

\begin{Example}[Formal equality and affirmative action]
Art. 3, 1st paragraph, of the Italian constitution ensures formal equality of
citizens (in fact, all individuals) before the law, namely, an equal legal
treatment for everybody:

\begin{quote}
All citizens have equal social dignity and are equal before the law, without
distinction of sex, race, language, religion, political opinion, personal and
social conditions. [\dots]
\end{quote}
This general principle can be sometimes derogated, for example, when
derogations are meant ``to remove those obstacles of an economic or social
nature which constrain the freedom and equality of citizens'' (art. 3, 2nd
par.). In fact, one may argue that permitting (which is different from
imposing as mandatory) the adoption of affirmative action policies in favour
of women (e.g., introducing quotas for women in politics and the job market)
is a flexible legal measure to remove some of those obstacles. Now, suppose a
quota for women is guaranteed in public institutions in hiring and promoting
employees, but another similar derogation can be applied to disabled people.
Imagine that, in a specific case, it is not possible to apply both derogations
(for example, this would lead to exceeding the number of jobs available) and
so we have to choose to hire a woman or a disabled man. In absence of any
further legal provision, one possible solution is to balance both options with
respect to the specific facts $X$ of the case, thus ranking, in a rule $r$,
these options in order of preference, given the facts $X$ (on balancing, see
\cite{Alexy:2003,Sartor:2010}). For instance, if disabled men should be
favoured over non-disabled women (because disability in this case reinforces a
more serious discrimination or disadvantage) then $r$ is the following:
\[
r: X \To_{\PERM} \mathit{Hire\_Disabled\_Men} \odot \mathit{Hire\_NonDisabled\_Women}
\]
The reason why we should still keep as a second option
$\mathit{Hire\_NonDisabled\_Women}$ depends on the fact that we can draw only
defeasibly the permission of $\mathit{Hire\_Disabled\_Men}$. Indeed, we have
only considered art. 3 of the Italian constitution but other legal provisions
or factual reasons could block this conclusion. For example, suppose that the
disabled person applying for the job was some years earlier convicted of the
crime of belonging to a mafia organisation, while the law prohibits in general
and without exceptions for public institutions to hire people who committed
that crime. Or imagine that, in the meantime, the disabled man has withdrawn
his request for a job. In both cases, despite $X$ occurs, the first option
does not hold and, all things considered, the second one can be applied in
order to derogate to art. 3, 1st par., of the Italian constitution.
\end{Example}

\section{Defeasible Deontic Logic with strong permission} \label{sec:logics}

This section begins by introducing the language adopted to formalise
obligations and strong permissions in DL, and describing the inferential
mechanism in the form of proof conditions defining the logic. Finally, we show
that the proposed formalisation enjoys properties appropriate to model the
notion of strong permission. We will proceed incrementally: this section, as
well as Section \ref{sec:algos}, works only with obligations and
strong permissions expressed by rules for $\PERM$. In
Section~\ref{sec:discussion} we will show how weak permissions and strong
permissions based on defeaters can be easily captured in the framework.

\medskip

\noindent We consider a logic whose language is defined as follows.

\begin{defn}[Language]
	Let \PROP be a set of propositional atoms, $\MOD = \{\OBL, \PERM\}$ the
set of modal operators {where $\OBL$ is the modality for the obligation and $\PERM$ for permission}.
\begin{itemize}
	\item The set $\LIT=\PROP\cup \{ \neg p\,|\,p\in\PROP\}$ denotes the set of \emph{literals}.
	\item The \emph{complementary} of a literal $q$ is denoted by $\non q$; if $q$ is a positive literal $p$, then $\non q$ is $\neg p$, and if $q$ is a negative literal $\neg p$, then $\non q$ is $p$.
	\item The set of \emph{modal literals} is $\MODLIT=\{\Box l, \neg \Box l\,|\,l\in \LIT,\ \Box\in \MOD\}$.
\end{itemize}
\end{defn}
 
\noindent We introduce two preference operators, $\otimes$ for obligations and
$\odot$ for permissions, and we will use $\oslash$ when we refer to one of
them generically. These operators are used to build chains of preferences,
called $\oslash$-expressions. The formation rules for well-formed
$\oslash$-expressions are:
\begin{enumerate}[label=(\alph*)]
  \item every literal $l \in \LIT$ is an $\oslash$-expression;
  \item if $A$ is an $\otimes$-expression, $B$ is an $\odot$-expression and 
    $c_1, \dots, c_k \in \LIT$, then $A \otimes c_1 \otimes \dots \otimes c_k$ 
    is an $\otimes$-expression, $B \odot c_1 \odot \dots \odot c_k$ is an 
    $\odot$-expression, $A \odot B$ is an $\oslash$-expression;
  \item every $\otimes$-expression and $\odot$-expression is an 
    $\oslash$-expression;
  \item nothing else is an $\oslash$-expression.
\end{enumerate}

\noindent In addition we stipulate that $\otimes$ and $\odot$ obey the
following properties:

\begin{enumerate}
  \item $a\oslash (b \oslash c) = (a \oslash b) \oslash c$ (associativity); 
  \item $\bigoslash_{i=1}^{n} a_i = (\bigoslash_{i=1}^{k-1}a_i) \oslash
(\bigoslash_{i=k+1}^{n\phantom{k}}a_i)$ where there exists $j$ such that $a_j= a_k$
and $j<k$ (duplication and contraction on the right).  
\end{enumerate}
\noindent Given an $\oslash$-expression $A$, the \emph{length} of $A$ is the
number of literals in it. Given an $\oslash$-expression $A\oslash b\oslash C$
(where $A$ and $C$ can be empty), the \emph{index} of $b$ is the length of $A
\oslash b$. We also say that $b$ appears at index $n$ in $A \oslash b$ if the
length of $A \oslash b$ is $n$.

We adopt the standard Defeasible Logic definitions of \emph{strict rules},
\emph{defeasible rules}, and \emph{defeaters} \cite{Antoniou_2001}. However
for the sake of simplicity, and to better focus on the non-monotonic aspects
that Defeasible Logic offers, in the remainder we use only defeasible rules and defeaters.
In addition, we have to take the modal operators into account.

\begin{defn}[Rules]
	Let \LAB be a set of arbitrary labels. Every rule is of the type
\begin{displaymath}
r: A(r) \arrow C(r)
\end{displaymath}
where
\begin{enumerate}
\item $r \in \LAB$ is the name of the rule;
\item $A(r) = \set{\seq{a}}$, the \emph{antecedent} (or \emph{body}) of the
rule, is the set of the premises of the rule (alternatively, it can be
understood as the conjunction of all the literals in it). Each $a_i$ is either
a literal, or a modal literal;
\item $\arrow\in\{\Rightarrow_{\Box}, \leadsto\}$ denotes the type of the rule.
If $\arrow$ is $\Rightarrow_{\Box}$, the rule is a \emph{defeasible rule}, while
if $\arrow$ is $\leadsto$, the rule is a \emph{defeater}. The subscript $\Box \in
\MOD$ in defeasible rules represents the modality introduced by the
rule itself: the mode of a rule tells us what kind of conclusion we obtain
from the rule. As we argued in Section \ref{sec:three_concepts_of_permission}, we do not need to label $\defeater$ with any modality;
\item $C(r)$ is the \emph{consequent} (or \emph{head}) of the rule, which is
an $\oslash$-expression. Two constraints apply on the consequent of a rule:
(a) if $\arrow$ is $\leadsto$, then $C(r)$ is a single literal; (b) if $\Box =
\PERM$, then $C(r)$ must be an $\odot$-expression.
\end{enumerate}
\end{defn}

\noindent Given a set of rules $R$, we will use the following abbreviations
for specific subsets of rules:
\begin{itemize}
\item $R_{def}$ denotes the set of all defeaters in the set $R$;
\item $R[q,n]$ is the set of rules where $q$ appears at index $n$ in the
consequent. The set of (defeasible) rules where $q$ appears at any index $n$ is denoted by $R[q]$;
\item $R^{\Box}$ with $\Box \in \MOD$ denotes the set of all rules in $R$ introducing modality $\Box$;
\item $R^{\OBL}[q,n]$ is the set of (defeasible) rules where $q$ appears
at index $n$ and the operator preceding it is $\otimes$ for $n>1$ or the mode
of the rule is \OBL for $n=1$. The set of
(defeasible) rules where $q$ appears at any index $n$ satisfying the above
constraints is denoted by $R^{\OBL}[q]$;
\item similarly $R^{\PERM}[q,n]$ is the set of rules where $q$ appears at
index $n$, and the operator preceding it is $\odot$ for $n>1$ or the mode of the rule is \PERM for $n=1$. The set of (defeasible)
rules where $q$ appears at any index $n$ satisfying the above constraints is
denoted by $R^{\PERM}[q]$.
\end{itemize}


\begin{defn}
A \emph{Defeasible Theory} is a structure $D = (F,R,>)$, where $F$, the set of
facts, is a set of literals and modal literals, $R$ is a set of rules and $>$,
the superiority relation, is a binary relation over $R$.
\end{defn}

\noindent A theory corresponds to a normative system, i.e., a set of norms
which are modelled by rules. The superiority relation is used for conflicting
rules, i.e., rules whose conclusions are complementary literals, in case both
rules fire. Notice that we do not impose any restriction on the superiority
relation: it is just a binary relation determining the relative strength of
two rules.



\begin{defn}
A \emph{proof} P in a defeasible theory $D$ is a linear sequence $P(1)\dots
P(n)$ of \emph{tagged literals} in the form of $+\partial_{\Box} q$ and
$-\partial_{\Box} q$ with $\Box \in \MOD$, where $P(1)\dots P(n)$ satisfy the
proof conditions given in Definitions~\ref{def:+pO}--\ref{def:-pP}.
\end{defn}

\noindent The tagged literal $+\partial_{\Box } q$ means that $q$ is
\emph{defeasibly provable} in $D$ with modality $\Box$, while
$-\partial_{\Box} q$ means that $q$ is \emph{defeasibly refuted} with modality
$\Box$. The initial part of length $i$ of a proof $P$ is denoted by $P(1..i)$.

The first thing to do is to define when a rule is applicable or discarded. A
rule is \emph{applicable} for a literal $q$ if $q$ occurs in the head of the
rule, all non-modal literals in the antecedent are given as facts and all the
modal literals have been defeasibly proved (with the appropriate modalities).
On the other hand, a rule is \emph{discarded} if at least one of the modal
literals in the antecedent has not been proved (or is not a fact in the case
of non-modal literals). However, as literal $q$ might not
appear as the first element in an $\oslash$-expression in the head of the
rule, some additional conditions on the consequent of rules must be satisfied.
Defining when a rule is applicable or discarded is essential to characterise
the notion of provability for obligations ($\pm\partial_{\OBL}$) and
permissions ($\pm\partial_{\PERM}$).

\begin{defn}\label{defn:APPL+pO}
A rule $r \in R[q,j]$ such that $C(r) = c_{1} \otimes \cdots \otimes c_{l-1}
\odot c_{l} \odot \dots \odot c_{n}$ is \emph{applicable} for literal $q$ at
index $j$, with $1 \leq j < l$, in the condition for $\pm\partial_{\OBL}$ iff

\smallskip
\noindent
\begin{minipage}{\textwidth}
\begin{tabbing}
  $(1)$ \= for all $a_i \in A(r)$: \\
		\> $(1.1)$ if $a_i = \OBL l$ then $+\partial_{\OBL} l \in P(1..n)$; \\
		\> $(1.2)$ if $a_i = \neg \OBL l$ then $-\partial_{\OBL} l \in P(1..n)$; \\
		\> $(1.3)$ if $a_i = \PERM l$ then $+\partial_{\PERM} l \in P(1..n)$; \\
		\> $(1.4)$ if $a_i = \neg \PERM l$ then $-\partial_{\PERM} l \in P(1..n)$; \\
		\> $(1.5)$ if $a_i = l \in \LIT$ then $l \in F$, and \\
   $(2)$ for all $c_k \in C(r)$, $1 \leq k < j$, $+\partial_{\OBL}c_k\in P(1..n)$
   and $ (c_k \not\in F$ or $\non c_k \in F)$.
\end{tabbing}
\end{minipage}
\end{defn}

\smallskip

\noindent Conditions (1.1)--(1.5) represent the requirements on the antecedent
as informally described above; condition (2) on the head of the rule states
that each element $c_k$ prior to $q$ must be derived as an obligation, and a
violation of such obligation has occurred.

\begin{defn}\label{defn:APPL+pP}
A rule $r \in R[q,j]$ such that $C(r) = c_{1} \otimes \cdots \otimes c_{l-1} \odot c_{l} \odot \dots \odot c_{n}$ is \emph{applicable} for literal $q$ at index $j$, with $l \leq j \leq n$ in the condition for $\pm\partial_{\PERM}$ iff

\smallskip
\noindent
\begin{minipage}{\textwidth}
\begin{tabbing}
  $(1)$ \=for all $a_i \in A(r)$: \\
		\> $(1.1)$ if $a_i = \OBL l$ then $+\partial_{\OBL} l \in P(1..n)$; \\
		\> $(1.2)$ if $a_i = \neg \OBL l$ then $-\partial_{\OBL} l \in P(1..n)$; \\
		\> $(1.3)$ if $a_i = \PERM l$ then $+\partial_{\PERM} l \in P(1..n)$; \\
		\> $(1.4)$ if $a_i = \neg \PERM l$ then $-\partial_{\PERM} l \in P(1..n)$; \\
		\> $(1.5)$ if $a_i = l \in \LIT$ then $l \in F$, and \\
   $(2)$ for all $c_k \in C(r)$, $1 \leq k < l$, $+\partial_{\OBL}c_k\in P(1..n)$
   and ($c_k \not\in F$ or $\non c_k \in F$), and\\
   $(3)$ for all $c_k \in C(r)$, $l \leq k < j$, $-\partial_{\PERM} c_k\in P(1..n)$.
\end{tabbing}
\end{minipage}
\end{defn}

\smallskip


\noindent The only difference with respect to $\pm\partial_{\OBL}$ is the
presence of an additional condition, stating that all permissions prior to $q$
must be refuted (condition (3)).


\begin{defn}\label{defn:DISC+pO+pP}
A rule $r \in R[q,j]$ such that $C(r) = c_{1} \otimes \cdots \otimes c_{l-1}
\odot c_{l} \odot \dots \odot c_{n}$ is \emph{discarded} for literal $q$ at
index $j$, with $1 \leq j \leq n$ in the condition for $\pm\partial_{\OBL}$ or
$\pm\partial_{\PERM}$ iff

\smallskip
\noindent
\begin{minipage}{\textwidth}
\begin{tabbing}
  $(1)$ \= there exists $a_i \in A(r)$ such that\\
		\> $(1.1)$ if $a_i = \OBL l$ then $-\partial_{\OBL} l \in P(1..n)$; \\
		\> $(1.2)$ if $a_i = \neg \OBL l$ then $+\partial_{\OBL} l \in P(1..n)$; \\
		\> $(1.3)$ if $a_i = \PERM l$ then $-\partial_{\PERM} l \in P(1..n)$; \\
		\> $(1.4)$ if $a_i = \neg \PERM l$ then $+\partial_{\PERM} l \in P(1..n)$; \\
		\> $(1.5)$ if $a_i = l \in \LIT$ then $l \not\in F$, or \\
   $(2)$ there exists $c_k \in C(r)$, $1 \leq k < l$, such that either
   $-\partial_{\OBL} c_k\in P(1..n)$ or $c_k \in F$, or\\
   $(3)$ there exists $c_k \in C(r)$, $l \leq k < j$, such that $+\partial_{\PERM} c_k\in P(1..n)$.
\end{tabbing}
\end{minipage}
\end{defn}

\noindent In this case, condition (2) ensures that an obligation prior to $q$
in the chain is not in force or has already been fulfilled (thus, no
reparation is required), while condition (3) states that there exists at least
one explicit derived permission prior to $q$.


\smallskip

\noindent We now introduce the proof conditions for $\pm\partial_{\OBL}$ and $\pm\partial_{\PERM}$.

\begin{defn}\label{def:+pO}

The proof condition of \emph{defeasible provability for obligation} is

\smallskip \noindent
\begin{minipage}{.3\textwidth}
\begin{tabbing}
  $+\partial_{\OBL}$: If $P(n+1)=+\partial_{\OBL} q$ then\\
  (1) \= $\OBL q \in \FACTS$ or\\
  \> (2.1) $\OBL \non q \not\in \FACTS$ and $\neg \OBL q \not\in \FACTS$ and $\PERM \non q \not\in \FACTS$ and\\ 
  \> (2.2) $\exists r\in R^{\OBL}[q,i]$ such that $r$ is applicable for $q$, and \\
  \> (2.3) \= $\forall s\in R[\non q, j]$, either \\
      \> \> (2.3.1) $s$ is discarded, or either\\
      \> \> (2.3.2) $s \in R^{\OBL}$ and $\exists t\in R[q,k]$ such that $t$ is applicable for $q$ and $t> s$, or\\
	  \> \> (2.3.3) $s \in R^{\PERM} \cup R_{def}$ and $\exists t\in R^{\OBL}[q,k]$ such that $t$ is applicable for $q$ and $t> s$.
	
\comment{\red{, and
  (3) \= $\forall s\in R^{\PERM}[\non q, j]$, either \\
	      \> (2.1) $s$ is discarded for $\non q$, or \\
	      \> (2.2) $\exists t\in R^{\OBL}[q,k]$ such that $t$ is applicable for $q$ and $t> s$.}}
\end{tabbing}
\end{minipage}

\end{defn}

\smallskip
\noindent To show that $q$ is defeasibly provable as an obligation, there are
two ways: (1) the obligation of $q$ is a fact, or (2) $q$ must be derived by
the rules of the theory. In the second case, three conditions must hold: (2.1)
$q$ does not appear as not obligatory as a fact, and $\non q$ is neither
provable as an obligation nor as a permission using the set of modal facts at
hand; (2.2) there must be a rule introducing the obligation for $q$ which can
apply; (2.3) every rule $s$ for $\non q$ is either discarded or defeated by a
stronger rule for $q$. If $s$ is an obligation rule, then it can be
counterattacked by any type of rule; if $s$ is a defeater or a permission
rule, then only an obligation rule can counterattack it.

The strong negation of the above definition gives us the negative proof
condition for obligation. Notice that the \emph{strong negation} of a formula
is closely related to the function that simplifies a formula by moving all
negations to an inner most position in the resulting formula, and replaces the
positive tags with the respective negative tags, and the other way around
\cite{ecai2000-5,igpl09policy}.


\begin{defn}\label{def:-pO}

The proof condition of \emph{defeasible refutability for obligation} is

\smallskip \noindent
\begin{minipage}{.3\textwidth}
\begin{tabbing}
  $-\partial_{\OBL}$: If $P(n+1)=-\partial_{\OBL} q$ then\\
  (1) \= $\OBL q \not\in \FACTS$ and either\\
  \> (2.1) $\OBL \non q \in \FACTS$ or $\neg \OBL q \in \FACTS$ or $\PERM \non q \in \FACTS$ or\\
  \> (2.2) $\forall r \in R^{\OBL}[q,i]$ either $r$ is discarded for $q$, or \\
  \> (2.3) \= $\exists s \in R[\non q, j]$ such that\\
      \> \> (2.3.1) $s$ is applicable for $\non q$, and \\
      \> \> (2.3.2) if $s \in R^{\OBL}$ then $\forall t \in R[q,k]$, either $t$ is discarded or $t\not> s$, and \\
	  \> \> (2.3.3) if $s \in R^{\PERM} \cup R_{def}$ then $\forall t \in R^{\OBL}[q,k]$, either $t$ is discarded or $t\not> s$.\\

\comment{\red{, and
  (3) \= $\forall s\in R^{\PERM}[\non q, j]$, either \\
	      \> (2.1) $s$ is discarded for $\non q$, or \\
	      \> (2.2) $\exists t\in R^{\OBL}[q,k]$ such that $t$ is applicable for $q$ and $t> s$.}}
\end{tabbing}
\end{minipage}

\end{defn}

\noindent We now introduce and briefly explain the proof conditions for permission.

\begin{defn}\label{def:+pP}

The proof condition of \emph{defeasible provability for permission} is

\smallskip \noindent
\begin{minipage}{.3\textwidth}
\begin{tabbing}
  $+\partial_{\PERM}$: If $P(n+1)=+\partial_{\PERM} q$ then\\
  (1) \= $\PERM q \in \FACTS$ or\\
  \> (2.1) $\OBL \non q \not\in \FACTS$ and $\neg \PERM q \not\in \FACTS$ and\\
  \> (2.2) $\exists r\in R^{\PERM}[q,i]$ such that $r$ is applicable for $q$, and \\
  \> (2.3) \= $\forall s\in R^{\OBL}[\non q, j]$, either \\
      \> \> (2.3.1) $s$ is discarded for $\non q$, or \\
      \> \> (2.3.2) $\exists t\in R[q,k]$ such that $t$ is applicable for $q$ and $t> s$.
\end{tabbing}
\end{minipage}

\end{defn}

\smallskip

\noindent This proof condition differs from its counterpart for obligation in
two aspects: we allow scenarios where both $+\partial_{\PERM} q$ and
$+\partial_{\PERM} \non q$ hold, but $+\partial_{\OBL} \non q$ must not hold
(clause 2.1); any applicable rule $s$ supporting $\non q$ can be counterattacked by
any type of rule $t$ supporting $q$, as $s$ must be an obligation rule, and
permission rules can only be attacked by obligation rules (clause 2.3).

As argued above, we define the negative proof condition for permission as the
strong negation of that for $+\partial_{\PERM}$.

\begin{defn}\label{def:-pP}

The proof condition of \emph{defeasible refutability for permission} is

\smallskip \noindent
\begin{minipage}{.3\textwidth}
\begin{tabbing}
  $-\partial_{\PERM}$: If $P(n+1)=-\partial_{\PERM} q$ then\\
  (1) \= $\PERM q \not\in \FACTS$ and either\\
  \> (2.1) $\OBL \non q \in \FACTS$ or $\neg \PERM q \in \FACTS$, or\\
  \> (2.2) $\forall r\in R^{\PERM}[q,i]$, either $r$ is discarded, or \\
  \> (2.3) \= $\exists s\in R^{\OBL}[\non q, j]$ such that \\
      \> \> (2.3.1) $s$ is applicable for $\non q$, and \\
      \> \> (2.3.2) $\forall t\in R[q,k]$, either $t$ is discarded or $t\not> s$.
\end{tabbing}
\end{minipage}

\end{defn}

\smallskip

\noindent The logic resulting from the above proof conditions enjoys
properties describing the appropriate behaviour of the modal operators. 

\begin{defn}\label{def:consistency}
A Defeasible Theory $D = (F, R, >)$ is \emph{consistent} iff $>$ is acyclic
and $\FACTS$ does not contain pairs of complementary (modal) literals, that is
if $D$ does not contain pairs like $\OBL l$ and $\neg\OBL l$, $\PERM l$ and
$\neg\PERM l$, or $l$ and $\non l$. The theory $D$ is
$O$-\emph{consistent} iff $>$ is acyclic and for any literal $l$, \FACTS does
not contain any of the following pairs: $\OBL l$ and $\OBL\non l$, $\OBL l$
and $\PERM\non l$.
\end{defn}

As usual, given a Defeasible Theory $D$, we will use $D\vdash\pm
\partial_{\Box}l$ iff there is a proof $P$ in $D$ such that $P(n) = \pm \partial_{\Box}l$ for an index $n$.
\begin{prop}
  \label{prop:consistence}
  Let $D$ be a consistent Defeasible Theory, and $\Box\in\MOD$.
  For any literal $l$, it is not possible to have both
  $D\vdash+\partial_{\Box}l$ and $D\vdash-\partial_{\Box}l$.
\end{prop}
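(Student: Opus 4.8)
The plan is to prove the proposition by strong induction on the length of the proof $P$, treating the four modalities $\pm\partial_{\OBL}$ and $\pm\partial_{\PERM}$ together, and exploiting the fact that the negative proof conditions (Definitions~\ref{def:-pO} and~\ref{def:-pP}) are by construction the \emph{strong negations} of the positive ones (Definitions~\ref{def:+pO} and~\ref{def:+pP}). Suppose for contradiction that for some literal $l$ and some $\Box\in\MOD$ we have both $D\vdash+\partial_{\Box}l$ and $D\vdash-\partial_{\Box}l$; take a proof $P$ containing both tagged literals and let $n+1$ be the position of whichever of the two appears later (without loss of generality $+\partial_{\Box}l$), so that the other one already lies in $P(1..n)$. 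Then $P(n+1)=+\partial_{\Box}l$ satisfies the positive clause, while $-\partial_{\Box}l\in P(1..n)$ satisfies the negative clause; the goal is to derive a contradiction clause-by-clause.

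\textbf{Key steps.} First I would record the consistency hypotheses: $>$ is acyclic (so $t>s$ and $s>t$ cannot both hold, and in particular $t>s \wedge t\not>s$ is impossible) and $\FACTS$ contains no complementary pair of (modal) literals. Next I would do the case analysis on the structure of the two matching proof conditions. For $\Box=\OBL$: the positive condition gives either $\OBL l\in\FACTS$ or clauses (2.1)--(2.3); the negative condition gives $\OBL l\notin\FACTS$ together with (2.1)--(2.3) negated. The fact-versus-fact branch is immediate ($\OBL l\in\FACTS$ contradicts $\OBL l\notin\FACTS$). If the positive side uses (2.1) ($\OBL\non l,\neg\OBL l,\PERM\non l$ all absent from $\FACTS$) the negative side's (2.1) asserts one of these \emph{is} in $\FACTS$ — but here one must be slightly careful, because the negative condition is a disjunction over (2.1)/(2.2)/(2.3), so I actually have to show that \emph{each} disjunct of the negative side is incompatible with the conjunction on the positive side. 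The substantive branch is when both sides go through the rule-based part: the positive side produces an applicable rule $r\in R^{\OBL}[l,i]$ that survives every attacker, and the negative side produces an attacker $s\in R[\non l,j]$ that is applicable and not overridden. Then $s$ is an attacker of $r$, so the positive clause (2.3) applied to this $s$ forces either ``$s$ discarded'' or the existence of a stronger rule $t$ for $l$ with $t>s$; meanwhile the negative clause (2.3.2)/(2.3.3) applied to the same $s$ says $s$ is applicable (contradicting ``discarded'', invoking the already-proved consistency of ``applicable'' vs ``discarded'' for $s$, or simply the induction hypothesis on the modal sub-literals used to evaluate $s$) and that every such $t$ is discarded or $t\not>s$ — contradicting $t>s$ together with $t$ applicable. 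For $\Box=\PERM$ the argument is identical in shape, just shorter, using Definitions~\ref{def:+pP} and~\ref{def:-pP} and the fact that all attackers there live in $R^{\OBL}[\non l,j]$.

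\textbf{The induction and the main obstacle.} Everywhere above I appeal to two ancillary facts that themselves need the induction hypothesis: (i) a fixed rule $s$ cannot be simultaneously ``applicable'' and ``discarded'' at a given index in a given condition, and (ii) $+\partial$ and $-\partial$ cannot both hold for the modal literals occurring in $A(r)$ or in the $\oslash$-prefix of $C(r)$ — but those modal literals were all decided strictly earlier in $P$, so the inductive hypothesis applies to them. Fact (i) unwinds to fact (ii) by inspecting Definitions~\ref{defn:APPL+pO}--\ref{defn:DISC+pO+pP}: ``applicable'' and ``discarded'' are themselves strong negations of each other, so their incompatibility reduces to the non-coexistence of $+\partial_\Box m$ and $-\partial_\Box m$ (and of $m\in\FACTS$ with $m\notin\FACTS$, which is trivial) for the relevant sub-literals $m$. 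I expect the main obstacle to be the bookkeeping around the \emph{index} annotations and the $\oslash$-expressions: one must check that the attacker $s\in R[\non l,j]$ extracted by the negative condition is exactly of the form the positive condition's clause (2.3) quantifies over (same head literal $\non l$, appropriate index, and, for the $\OBL$ case, the split into $s\in R^{\OBL}$ versus $s\in R^{\PERM}\cup R_{def}$ must be matched so that the required $t$ lies in $R[l,k]$ respectively $R^{\OBL}[l,k]$), and likewise that the $t$ produced on one side is admissible as the $t$ quantified on the other. Once this matching is verified, every branch closes by either a fact-membership contradiction, an acyclicity ($t>s$ vs.\ $t\not>s$) contradiction, or an induction-hypothesis contradiction, completing the proof.
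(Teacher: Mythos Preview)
Your proposal is correct and takes essentially the same route as the paper: both rest on the observation that the negative proof conditions are by construction the strong negations of the positive ones, and the paper's own proof is in fact just a one-line appeal to this principle (deferring to prior work) rather than the explicit induction and clause-by-clause case analysis you describe. One small remark: the final clash $t>s$ versus $t\not>s$ is a plain logical contradiction and does not require acyclicity of $>$; acyclicity is what drives the \emph{next} proposition (on $O$-consistency), not this one.
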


\begin{proof}
It straightforwardly follows from the principle of strong negation proposed in
\cite{ecai2000-5,igpl09policy}: indeed, the negative proof tags proposed in
this work are defined as the strong negation of the positive ones.
\end{proof}

\noindent The meaning of the above proposition is that it is not possible to
prove that a literal is at the same time obligatory and not obligatory, or
permitted and not permitted.

\begin{prop} \label{prop:oconsistence}
  Let $D$ be an $O$-consistent Defeasible Theory. For any literal $l$, it
  is not possible to have both $D\vdash+\partial_{\OBL}l$ and
  $D\vdash+\partial_{\OBL}\non l$.
\end{prop}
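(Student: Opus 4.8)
The plan is to argue by contradiction: suppose for some literal $l$ we have both $D\vdash+\partial_{\OBL}l$ and $D\vdash+\partial_{\OBL}\non l$, and pick a proof $P$ in which both tagged literals appear, say $+\partial_{\OBL}l = P(m)$ and $+\partial_{\OBL}\non l = P(m')$ with (without loss of generality) $m \le m'$. I would then peel apart the two applications of the $+\partial_{\OBL}$ proof condition (Definition~\ref{def:+pO}) and show the requirements are jointly unsatisfiable.

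\textbf{Case on how $+\partial_{\OBL}l$ was obtained.} If clause (1) fires, i.e.\ $\OBL l\in\FACTS$, then since $D\vdash+\partial_{\OBL}\non l$ we look at \emph{its} derivation: clause (1) for $\non l$ would give $\OBL\non l\in\FACTS$, contradicting $O$-consistency (no pair $\OBL l$, $\OBL\non l$ in $\FACTS$); and clause (2) for $\non l$ requires $\OBL \non{\non l}\not\in\FACTS$, i.e.\ $\OBL l\not\in\FACTS$, an immediate contradiction. So both derivations must go through clause (2). Now use clause (2.2) for $l$: there is an applicable $r\in R^{\OBL}[l,i]$. Apply clause (2.3) of the derivation of $+\partial_{\OBL}\non l$ to this very rule $r$, viewed as an $s\in R[l,j]$ attacking $\non l$ (note $r\in R^{\OBL}$). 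Since $r$ is applicable it is not discarded, so (2.3.1) fails; hence either (2.3.2) or (2.3.3) must hold, meaning there is a rule $t$, applicable for $\non l$, with $t > r$. Symmetrically, from clause (2.3) of the derivation of $+\partial_{\OBL}l$ applied to this $t$ (which attacks $l$), since $t$ is applicable and not discarded, we get a rule $t'$ applicable for $l$ with $t' > t$. Iterating, we build an infinite ascending chain $r < t < t' < \cdots$ in the superiority relation $>$. But $D$ is consistent, so $>$ is acyclic, and — this is the one technical point needing care — over the finite rule set $R$ an acyclic relation admits no infinite strictly ascending chain, so the chain must eventually cycle, contradicting acyclicity.

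\textbf{What remains and where the friction is.} I would also dispatch the facts-level clashes uniformly at the start: by $O$-consistency $\FACTS$ contains neither $\{\OBL l,\OBL\non l\}$ nor $\{\OBL l,\PERM\non l\}$, and this is exactly what is needed to rule out the mixed fact/rule cases (e.g.\ $\OBL l\in\FACTS$ against the clause (2.1) requirement $\OBL l\not\in\FACTS$ inside the derivation of $+\partial_{\OBL}\non l$, and the symmetric one). The genuinely delicate step is the termination of the superiority chain: one must be careful that at each stage the newly produced rule is \emph{applicable} (so that it can in turn be counterattacked, re-entering the argument) and that ``applicable'' status is stable along the proof since all the side conditions in Definitions~\ref{defn:APPL+pO}--\ref{defn:APPL+pP} refer only to $P(1..n)$ for the relevant $n$ and to $\FACTS$, both monotone/fixed. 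Granting that, finiteness of $R$ plus acyclicity of $>$ closes the argument. (One may alternatively phrase the whole thing as an induction on the proof length $n$: assuming no earlier index witnesses a clash of the form $+\partial_{\OBL}k$ and $+\partial_{\OBL}\non k$, show $P(n+1)$ cannot introduce one either; the superiority-chain argument is the inductive heart regardless of packaging.)
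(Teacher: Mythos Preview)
Your argument is correct and follows essentially the same route as the paper's proof: dispatch the fact cases using $O$-consistency (clause (1) vs.\ clause (2.1)), then in the pure rule case use the mutual defeat requirements in clause (2.3) to build an unbounded $>$-chain of applicable rules, which over the finite rule set forces a cycle and contradicts acyclicity. Your explicit attention to the stability of ``applicable'' across proof indices and to the fact that applicable implies not discarded is more careful than the paper (which simply asserts ``no rule can be at the same time applicable and discarded'' and does not track proof steps), but the underlying idea is the same.
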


\begin{proof}
We split the proof in two cases: (i) at least one of $\OBL l$ and $\OBL\non l$
is in $F$, and (ii) none of them is in $F$.

For (i) the proposition immediately follows by the assumption of
$\OBL$-consistency. Suppose that $\OBL l\in F$. Then clause (1) of
$+\partial_{\OBL}$ holds for $l$. By $\OBL$-consistency $\OBL\non l\notin F$,
thus clause (1) of $+\partial_{\OBL}$ does not hold for $\non l$. Since $\OBL
l\in F$, clause (2.1) of $+\partial_{\OBL}$ is always falsified for $\non l$,
and the thesis is proved.

For (ii): First of all, it is easy to verify that no rule
can be at the same time applicable and discarded for the derivation of
$\pm\partial_{\OBL}l (\non l)$. Then, since both $+\partial_\OBL l$ and
$+\partial_\OBL\non l$ hold, we have that there are applicable obligation
rules for both $l$ and $\non l$. This means that clause (2.3.2) holds for both
$+\partial_\OBL l$ and $+\partial_\OBL\non l$. Therefore, for every applicable
rule for $l$ there is an applicable rule for $\non l$ stronger than the rule
for $l$, and symmetrically, for every applicable rule for $\non l$ there is an
applicable rule for $l$ stronger than the rule for $\non l$. Since the set of
rules in a theory is finite, the situation we have just described is possible
only if there is a cycle in the transitive closure of the superiority
relation. Therefore, we have a contradiction because the superiority
relation is assumed to be acyclic (the transitive closure of the relation does
not contain cycles).
\end{proof}

\noindent The meaning of the proposition is that no formula is both obligatory
and forbidden at the same time. However, the proposition does not hold for
permission. It is possible to have both the explicit permission of $l$ and the
explicit permission of $\non l$.

The relationships between permissions and obligations are governed by the
following proposition:
\begin{prop}
  \label{prop:oblperm}
  Let $D$ be an $O$-consistent Defeasible Theory. For any literal $l$: 
  \begin{enumerate}
	\item\label{oblobl} 
    if $D\vdash+\partial_{\OBL} l$, then $D\vdash-\partial_{\OBL} \non l$;
    \item\label{oblperm} 
      if $D\vdash+\partial_{\OBL} l$, then $D\vdash-\partial_{\PERM} \non l$;
    \item\label{permobl} 
      if $D\vdash+\partial_{\PERM} l$, then $D\vdash-\partial_{\OBL} \non l$.
  \end{enumerate}
\end{prop}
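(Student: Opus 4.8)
The plan is to prove all three implications directly against the negative proof conditions (Definitions~\ref{def:-pO} and~\ref{def:-pP}), in each case splitting on whether the hypothesised positive tag holds because the relevant modal literal is in $\FACTS$ or because it is derived through rules. Items~\ref{oblobl} and~\ref{oblperm} start from $+\partial_{\OBL}l$ and item~\ref{permobl} from $+\partial_{\PERM}l$, but the three arguments have the same shape, so I would carry out the ``facts'' part once and the ``rules'' part once and then instantiate. For the \emph{facts} case, suppose $\OBL l\in\FACTS$ (items~\ref{oblobl}--\ref{oblperm}) or $\PERM l\in\FACTS$ (item~\ref{permobl}); since $D$ is $O$-consistent, $\FACTS$ then contains neither $\OBL\non l$ nor $\PERM\non l$ in the first case, and not $\OBL\non l$ in the second (Definition~\ref{def:consistency}). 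Hence clause~(1) of the target negative condition holds, and the very modal fact witnessing the hypothesis also witnesses one of the disjuncts of clause~(2.1) of the target (using $\OBL\non\non l=\OBL l$ and $\PERM\non\non l=\PERM l$); so the negative tag is derivable. This part reduces entirely to $O$-consistency.

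For the \emph{rules} case the positive hypothesis is obtained via clause~(2) of Definition~\ref{def:+pO} (resp.~\ref{def:+pP}), and its sub-clause~(2.1) already supplies the non-membership condition needed for clause~(1) of the target, so it remains to realise the ``either'' part of the target; I would do this through its clause~(2.3). Concretely I need an applicable rule $r^{*}$ supporting $l$ that is not $>$-dominated by any applicable rule for $\non l$ of the relevant family --- every rule for $\non l$ if $r^{*}$ is an obligation rule, only the obligation rules for $\non l$ if $r^{*}$ is a permissive rule, as in item~\ref{permobl}. The existence of such an $r^{*}$ is the heart of the proof, and I would obtain it by the finiteness-plus-acyclicity device already used in the proof of Proposition~\ref{prop:oconsistence}: sub-clause~(2.2) of the positive condition gives at least one applicable supporter of $l$ of the right mode; if \emph{every} such supporter were beaten by an applicable competitor for $\non l$, then sub-clause~(2.3) of the positive condition, applied to that competitor, would yield a strictly stronger applicable supporter of $l$, and iterating would produce an infinite $>$-ascending sequence in the finite set $R$, hence a cycle in the transitive closure of $>$, contradicting acyclicity. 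A $>$-maximal $r^{*}$ among the applicable supporters of $l$ therefore exists; plugging it into clause~(2.3) of the target (sub-clause~(2.3.2) if $r^{*}$ is an obligation rule, sub-clause~(2.3.3) of $-\partial_{\OBL}$ if $r^{*}$ is permissive) closes the argument, since by maximality any rule for $\non l$ that is $>$-above $r^{*}$ must fail to be applicable, i.e.\ be discarded.

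The step I expect to be fiddly --- the main obstacle --- is the bookkeeping that keeps the rule families straight across the positive and negative conditions. One must track which of Definitions~\ref{defn:APPL+pO}, \ref{defn:APPL+pP} and~\ref{defn:DISC+pO+pP} governs a literal, depending on whether it sits in the $\otimes$-segment or the $\odot$-segment of a rule's head; observe that a rule ``applicable for $l$'' inside a $\partial_{\OBL}$-context necessarily has $l$ in a non-empty $\otimes$-segment and is therefore an obligation rule, so the supporters produced by the feedback loop for items~\ref{oblobl}--\ref{oblperm} stay within $R^{\OBL}[l]$, whereas for item~\ref{permobl} the supporters live in $R^{\PERM}$-position and the competitors to be dominated are exactly those in $R^{\OBL}[\non l,\cdot]$, matching sub-clause~(2.3.3) of $-\partial_{\OBL}$; and finally one uses the fact, implicit in the proof of Proposition~\ref{prop:oconsistence}, that no rule is simultaneously applicable and discarded, so ``not applicable'' may be read as ``discarded''. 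Degenerate head shapes (single-literal or pure-$\odot$ heads) need to be inspected separately but cause no genuine difficulty. With this bookkeeping in place the three implications follow; note in particular that item~\ref{oblobl} is, in essence, Proposition~\ref{prop:oconsistence} together with the observation that the failure of $+\partial_{\OBL}\non l$ yields $-\partial_{\OBL}\non l$, which is precisely what the finiteness/acyclicity argument delivers.
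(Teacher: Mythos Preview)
Your plan is correct and uses the same core device as the paper --- finiteness of $R$ plus acyclicity of $>$ --- but packaged differently. For item~\ref{oblobl} the paper does \emph{not} build a witness for clause~(2.3) of $-\partial_{\OBL}\non l$; it argues by trichotomy, ruling out $+\partial_{\OBL}\non l$ via Proposition~\ref{prop:oconsistence} and ruling out ``neither provable nor refutable'' by an informal loop argument. Your uniform treatment of all three items via a direct witness is cleaner and avoids that informal step. For items~\ref{oblperm} and~\ref{permobl} the paper frames the combinatorics as iteratively \emph{emptying} the set of competing rules rather than finding a single undominated $r^{*}$; these are dual versions of the same argument. One point of care in your phrasing: the competitor that blocks a candidate $r^{*}$ should be described as \emph{non-discarded} rather than applicable --- clause~(2.3) of the positive condition gives ``discarded or beaten by an applicable supporter'', so a non-discarded competitor (not necessarily an applicable one) yields the next link in your chain, and the stopping condition then says directly that every $t>r^{*}$ is discarded. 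The identification ``not applicable $=$ discarded'' is false in general and is not actually needed.
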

\begin{proof}

\ref{oblobl}. Let $D$ be an $\OBL$-consistent Defeasible Theory, and
$D\vdash+\partial_{\OBL}l$. Literal $\non l$ can be in only one of the
following mutually exclusive situations: (i) $D\vdash +\partial_{\OBL} \non
l$; (ii) $D\vdash -\partial_{\OBL} \non l$; (iii) $D\not\vdash
\pm\partial_{\OBL} \non l$. Proposition~\ref{prop:oconsistence} allows us to
exclude case (i) since $D\vdash +\partial_{\OBL} l$ by hypothesis. Situation
(iii) denotes situations where there are loops in the theory involving literal
$\non l$\footnote{For examples situations like $\OBL \non l \To_{\OBL} \non l$, where the proof conditions will generate a loop without introducing a proof.}, but inevitably this would affect also the provability of literal
$l$, i.e., we would not be able to give a proof for $+\partial_{\OBL} l$ as
well. This is in contradiction with the hypothesis; thus, situation (ii) must
be the case.

\ref{oblperm}. Let $D$ be an $\OBL$-consistent Defeasible Theory, and
$D\vdash+\partial_{\OBL}l$. By definition of proof conditions for
$+\partial_{\OBL}$, statements (1)--(2.3.3) hold.

If $\OBL l \in \FACTS$, then condition (2.1) of $-\partial_{\PERM}$ holds for
$\non l$. Furthermore, by $\OBL$-consistency of $D$, $\OBL l \in \FACTS$
implies condition (1) of $-\partial_{\PERM}$ for $\non l$, and we obtain the
thesis.

Otherwise, from condition (2.2) of $+\partial_{\OBL}$,
conditions (2.3) and (2.3.1) of $-\partial_{\PERM}$ follow. Again, we can
iterate the same reasoning for condition (2.3.1) of $+\partial_{\OBL}$
implying condition (2.2) of $-\partial_{\PERM}$. It remains to consider
conditions (2.3.2) and (2.3.3) of $+\partial_{\OBL}$. In the first case, the
attacking rule $s$ is an obligation rule, thus it is of no interest in this
proof since in condition (2.2) of $-\partial_{\PERM}$ we consider only
permission rules. Thus, for the latter case, we know there exists a rule $t$
for obligation that is stronger than $s$, and this tuple of rules $(t,s)$ in
condition (2.3.3) for $+\partial_{\OBL}$ is the equivalent but opposite tuple
of the rules used in condition (2.3.2) for $-\partial_{\PERM}$. But, to
analyse in an exhaustive way condition (2.3.2) of $-\partial_{\PERM}$, we have
to take into consideration the whole set of rules $S=\{s_{1},\dots,s_{n}\}$
for permission leading to $\non l$, against the set of rules
$T=\{t_{1},\dots,t_{m}\}$ for obligation leading to $l$.\footnote{Notice that
the rules in conditions (2.2) and (2.3.3) are different rules: they form a
team that can defeat teams of rules for the opposite.} For each subset $S'$ of
$S$, every rule $s' \in S'$ is either discarded, or there exists a rule $t'
\in T$ that is stronger (the existence of $t'$ is guaranteed by the fact that
$+\partial_{\OBL} l$ holds by hypothesis). We can now remove $S'$ from $S$ (as
it cannot be used for proving $+\partial_{\PERM} \non l$), reducing the number
of elements in $S$. The iteration of this procedure eventually empties the set
$S$ since the number of rules in $D$ is finite, and the superiority
relation is acyclic.

\ref{permobl}. Let $D$ be an $\OBL$-consistent Defeasible Theory, and
$D\vdash+\partial_{\PERM}l$. By definition of proof conditions for
$+\partial_{\PERM}$, statements (1) or (2.1)--(2.3.2) hold. The proof follows
step by step that of Part 2 of the proposition.
Moreover, steps for conditions (1)--(2.2) are the mere juxtaposition. It
remains to analyse the interrelationship between condition (2.3) of
$+\partial_{\PERM}$ and conditions (2.3.2)--(2.3.3) of $-\partial_{\OBL}$.
Since condition (2.3) of $+\partial_{\PERM}$ takes into account only rules for
obligation which are systematically defeated with an analogous process of rule
elimination, thus conditions (2.3.2) and (2.3.3) of $-\partial_{\OBL}$ are
satisfied.
\end{proof}

The combination of Part \ref{oblperm} and \ref{permobl} of Proposition \ref{prop:oblperm} describes the consistency between obligation
and permission. Part \ref{permobl} also gives the
relationships between strong and weak permission. As we discussed in Section
\ref{sec:introduction}, a weak permission is a permission obtained from the
failure to derive the opposite obligation. This means that we have the weak
permission of $p$ when we have $-\partial_{\OBL}\non p$, and Part \ref{permobl} guarantees that we have it when
$+\partial_{\PERM}p$ holds.

\smallskip

We conclude this section showing how the logic developed hitherto works with
the example introduced at the end of Section
\ref{sec:three_concepts_of_permission}.

\paragraph{Example.} Let us recall the scenario reported at the end of Section
\ref{sec:three_concepts_of_permission}, and formally explain the conclusions
of the theory using applicability of rules and proof tags as defined above.
The primary obligation to call the ambulance is obtained (i.e., we derive
$+\partial_{\OBL} CallAmbulance$), but the obligation is violated as $\neg
CallAmbulance \in \FACTS$, making $r_{1}$ applicable for $\mathit{Help}$; rule $r_3$ is
applicable for literal $\neg \mathit{Help}$ and could attack $r_1$, but $r_1 > r_3$,
thus we have also $+\partial_{\OBL} \mathit{Help}$. Also $CallFiremen$ is derived as an
obligation but it is violated, thus rule $r_2$ is applicable for literal
$Extinguish$ in the condition for $+\partial_{\OBL}$. As $+\partial_{\OBL} Help$ holds, $r_3$
is applicable for $\neg Extinguish$, and $r_3 > r_2$, thus we derive
$+\partial_{\PERM} \neg Extinguish$.

\section{Algorithms for defeasible extension}\label{sec:algos}

We now present procedures and algorithms apt to compute the \emph{extension}
of a \emph{finite} Defeasible Theory, i.e., a theory where the set of facts
and rules is finite, in order to bound the complexity of the logic introduced
in the previous sections. The algorithms are based on the algorithm proposed
by Maher \cite{Maher2001} to show that Defeasible Logic has linear complexity;
the algorithms also incorporate the notion of inferiorly defeated rules
proposed by \cite{Lam.2011} to handle directly the superiority relation.

This section is divided in three main parts. In the first part we give the
formal definitions and introduce the notation adopted. The second part, which
contains the main body, describes the required computations: Algorithms
\ref{alg:defeasible} and \ref{alg:checkFacts} effectively compute the
defeasible extension of a Defeasible Theory given as an input, while
Algorithms \ref{alg:discard}, \ref{alg:modifyobl}, and \ref{alg:modifyperm}
are but auxiliary procedures that execute all the necessary operations due to
any modification of the extension. Each algorithm is followed by a technical
explanation on how it works. In the third part we present formal properties
that are meant to prove the computational results proposed in
Section~\ref{sec:CompRes}.

\subsection{Notation for the algorithms}

We introduce the notation relevant to our framework.

Given a Defeasible Theory $D$, $HB_{D}$ is the set of literals such that the
literal or its complement appears in $D$, where `appears' means that it is a
sub-formula of a modal literal occurring in the theory. The modal Herbrand
Base of $D$ is $HB=\set{\Box l |\ \Box \in\MOD, l\in HB_{D}}$. Accordingly, the extension of a Defeasible Theory is defined as follows.

\begin{defn}\label{def:extension}
Given a Defeasible Theory $D$, the \emph{defeasible extension} of $D$ is
defined as
\[E(D) = (+\partial_{\OBL}, +\partial_{\PERM}, -\partial_{\OBL},
-\partial_{\PERM}),
\]
where $\pm\partial_{\Box} = \set{l \in HB_{D}: D \vdash
\pm\partial_{\Box} l}$ with $\Box \in \MOD$. We define two 
Defeasible Theories $D$ and $D'$ to be \emph{equivalent} (in notation $D
\equiv D'$) if they have the same extensions, i.e., $E(D) = E(D')$.
\end{defn}

\noindent The next definition introduces two syntactical operations on the
consequent of rules used by the algorithms, whose meaning will be clear in the
remainder.

\begin{defn}\label{def:trunctaion-removal}
  Let $c_1=\opseq[\oslash]{a}{1}{l-1}$ and $c_2=\opseq[\oslash]{a}{i+1}{n}$
  be two (possibly empty) $\oslash$-expressions such that $a_{i}$ does not occur
  in them, and $c=c_{1}\oslash a_{i}\oslash c_{2}$ is an $\oslash$-expression.
  Let $r$ be a rule with form $A(r)\arrow c$. We define
  the operation of \emph{truncation} of the consequent $c$ at $a_{i}$ as:
  \[
    A(r)\arrow c!a_{i} = A(r)\arrow c_{1}\oslash a_{i}.
  \]
  We define the \emph{removal} of $a_{i}$ from the consequent $c$, $A(r)\arrow c\ominus
  a_{i}$, as:
\begin{displaymath}
	A(r)\arrow c\ominus a_{i} =
	\begin{cases}
		A(r)\Rightarrow_{\OBL} c_{1}\otimes c_{2} & \text{ if $r$ is } A(r)\Rightarrow_{\OBL} c_{1}\otimes a_{i}\otimes c_{2} \\
		A(r)\Rightarrow_{\OBL} c_{1}\odot c_{2} & \text{ if $r$ is } A(r)\Rightarrow_{\OBL} c_{1}\otimes a_{i}\odot c_{2} \\
		A(r)\Rightarrow_{\Box \in\MOD} c_{1}\odot c_{2} & \text{ if $r$ is } A(r)\Rightarrow_{\Box \in\MOD} c_{1}\odot a_{i}\odot c_{2} \\
		A(r)\Rightarrow_{\PERM} c_{2} & \text{ if $r$ is } A(r)\Rightarrow_{\OBL} a_{i}\odot c_{2}
\end{cases}
\end{displaymath}
\end{defn}

\noindent The next definition extends the concept of complement presented in
Section \ref{sec:logics} for modal literals, and it is used to establish
the logical connection among proved and refuted literals in our framework.

\begin{defn}
We define the \emph{complement} of a given literal $l$, denoted by
$\tilde{l}$, as:
\begin{enumerate}
\item If $l\in\LIT$, then $\tilde{l}=\set{\non l}$;
\item If $l=\OBL m$, then $\tilde{l}=\set{\neg\OBL m, \OBL\non m, \PERM \non m}$;
\item If $l=\neg\OBL m$, then $\tilde{l}=\set{\OBL m}$;
\item If $l=\PERM m$, then $\tilde{l}=\set{\neg\PERM m, \OBL\non m}$;
\item If $l=\neg\PERM m$, then $\tilde{l}=\set{\PERM m}$.
\end{enumerate}
\end{defn}

\noindent Given $\Box\in \MOD$, the sets $\pm\partial_{\Box}$
denote the global sets of defeasible conclusions (i.e., the set of literals
for which condition $\pm\partial_{\Box}$ holds), while $\partial_{\Box}^{\pm}$
are the corresponding temporary sets. Notice that the complement of $\neg
\PERM m$ does not include $\OBL m$ (and vice versa) because the failure to
derive $\PERM m$ cannot depend on the derivation of $\OBL m$, but rather on
the fact that $\OBL \non m$ is the case.

\subsection{Algorithms} 
\label{sub:algorithms}

We begin this subsection by reporting and explaining the three auxiliary
procedures used in the two main algorithms for the computation of the
extension of a logic.

\smallskip

\noindent Algorithm \ref{alg:discard} \textsc{Discard} performs all operations
related to when $-\partial_{\Box}l$ holds for a given literal $l$.

\begin{algorithm}[h!]
\caption{Discard}\label{alg:discard}
\begin{algorithmic}[1]
\Procedure{Discard}{$l \in \LIT,\Box \in \set{\PERM, \OBL}$}
  \State $\partial^-_\Box \gets \partial^-_\Box\cup\set{l}$
  \State $R \gets \set{A(r) \setminus \set{\neg\Box l}\hookrightarrow C(r)|\ r\in R} \setminus \set{r\in R|\ \Box l\in A(r)}$
  \State $> \gets > \setminus \set{(r,s),(s,r)\in>|\ \Box l\in A(r)}$
  \State $HB \gets HB \setminus \set{\Box l}$
\EndProcedure
\end{algorithmic}
\end{algorithm}

First of all, literal $l$ is placed in the local set of refuted literals with
modality $\Box$ (line 2). Furthermore, condition $-\partial_{\Box} l$ makes
literal $\neg\Box l$ provable, therefore it can be safely removed from all rules
where it appears as an antecedent, being that its contribution to a rule being
applicable or refuted has already been established. Since $l$ cannot be
proved with modality $\Box$, every rule containing $\Box l$ in its body is
discarded by clauses $(1.1)$ and $(1.3)$ of Definition~\ref{defn:DISC+pO+pP},
and thus we can remove such rules without affecting the conclusions that can
be derived from the theory (line 3). In addition, we remove all pairs
involving the rules from the superiority relation (line 4), and $\Box l$ from
the modal Herbrand Base (line 5).

\medskip

\noindent Algorithms \ref{alg:modifyobl} \textsc{ModifyObl} and
\ref{alg:modifyperm} \textsc{ModifyPerm} behave in a very similar way: both of
them modify the theory to accommodate the positive derivation of a modal
literal. They only differ on the kind of rules they manipulate (obligation and
permission rules, respectively).

\begin{algorithm}[h!]
\caption{ModifyObl}\label{alg:modifyobl}
\begin{algorithmic}[1]
\Procedure{ModifyObl}{$l \in \LIT$}
  	\State $\partial^+_{\OBL}\gets \partial^+_{\OBL} \cup \set{l}$
	\State $\partial^-_{\OBL}\gets \partial^-_{\OBL} \cup \set{\non l}$
	\State $\partial^-_{\PERM}\gets \partial^-_{\PERM} \cup \set{\non l}$
	\State $HB \gets HB \setminus \set{\OBL l, \OBL \non l, \PERM \non l}$
	\If{$\OBL l \not \in \FACTS$}
		\State $R \gets \set{A(r) \setminus \set{\OBL l, \neg \OBL \non l}\hookrightarrow C(r) |\ r\in R} \setminus \set{r\in R |\ A(r) \cap \widetilde{\OBL l} \not = \emptyset}$
		\State $> \gets > \setminus \set{(r,s), (s,r) \in > |\  A(r) \cap \widetilde{\OBL l} \not = \emptyset}$
	\EndIf
	\State $R\gets \set{ A(r) \hookrightarrow C(r) \ominus l |\
        r\in R^{\OBL}[l,n] \text{ for an index $n$}}$
	\State $R\gets \set{ A(r) \hookrightarrow C(r) \ominus \non l |\
	          r\in R^{\PERM}[\non l,n] \text{ for an index $n$}}$
	\State $R\gets \set{ A(r) \hookrightarrow C(r)! \non l \ominus \non l |\
		          r\in R^{\OBL}[\non l,n] \text{ for an index $n$}}$	
\EndProcedure
\end{algorithmic}
\end{algorithm}

\begin{algorithm}[h!]
\caption{ModifyPerm}\label{alg:modifyperm}
\begin{algorithmic}[1]
\Procedure{ModifyPerm}{$l \in \LIT$}
  	\State $\partial^+_{\PERM}\gets \partial^+_{\PERM} \cup \set{l}$
	\State $\partial^-_{\OBL}\gets \partial^-_{\OBL} \cup \set{\non l}$
	\State $HB \gets HB \setminus \set{\PERM l, \OBL \non l}$
	\If{$\PERM l \not \in \FACTS$}
		\State $R \gets \set{A(r) \setminus \set{\PERM l, \neg \OBL \non l}\hookrightarrow C(r) |\ r\in R} \setminus \set{r\in R |\ A(r) \cap \widetilde{\PERM l} \not = \emptyset}$
		\State $> \gets > \setminus \set{(r,s), (s,r) \in > |\  A(r) \cap \widetilde{\PERM l} \not = \emptyset}$
	\EndIf
	\State $R\gets \set{ A(r) \hookrightarrow C(r)! \non l \ominus \non l |\ 
		          r\in R^{\OBL}[\non l,n] \text{ for an index $n$}}$
	\State $R\gets \set{ A(r) \hookrightarrow C(r)! l |\
					          r\in R^{\PERM}[l,n] \text{ for an index $n$}}$	
\EndProcedure
\end{algorithmic}
\end{algorithm}

The input of both procedures is a literal $l$. As such, we add it to the
corresponding set of derived literals (line 2). Since $D\vdash
+\partial_{\OBL} l$ implies $D\vdash -\partial_{\OBL} \non l,
-\partial_{\PERM} \non l$ by Proposition~\ref{prop:oblperm} Part \ref{oblobl}
and \ref{oblperm}, and $D\vdash +\partial_{\PERM} l$ implies $D\vdash
-\partial_{\OBL} \non l$ by Proposition~\ref{prop:oblperm} Part \ref{permobl},
we also remove $\non l$ from the appropriate sets of refuted literals; then
the modal literal along with the set of its complementaries\footnote{Notice
that we do not remove any negative modal literal from HB by definition of
modal Herbrand Base.} are removed from the Modal Herbrand Base (lines 3--5 and
3--4, respectively). Lines 6--9 and 5--8, respectively, follow the same
reasoning of line 3 in Algorithm~\ref{alg:discard} \textsc{Discard}. Finally,
the rules of the theory are modified on account
of the modality the literal is derived with as well as the conditions
for the applicability of a rule given in Definitions
\ref{defn:APPL+pO}--\ref{defn:DISC+pO+pP} (lines 10--12 and 9--10,
respectively).

\begin{algorithm}[h!]
\caption{CheckFacts}\label{alg:checkFacts}
\begin{algorithmic}[1]
\Procedure{CheckFacts}{}
	\For{$l \in F$}
		\State $R \gets \set{A(r) \setminus \set{l}\hookrightarrow C(r) |\ r\in R} \setminus \set{r\in R |\ A(r) \cap \tilde{l} \not = \emptyset}$
		\State $> \gets > \setminus \set{(r,s), (s,r) \in > |\ A(r) \cap \tilde{l} \not = \emptyset}$
		\If{$l \in \LIT$}
			\State $R\gets \set{ A(r) \hookrightarrow C(r) ! l |\
		          r\in R^{\OBL}[l,n] \text{ for an index $n$}}$
		\EndIf
		\If{$l = \OBL m$}
			\State \Call{ModifyObl}{$m$}
			\EndIf
			\If{$l = \neg \OBL m$}
			\State $-\partial_{\OBL}\gets -\partial_{\OBL} \cup \set{m}$
			\State $HB \gets HB \setminus \set{\OBL m}$
			\State $R\gets \set{ A(r) \hookrightarrow C(r)! m \ominus m |\ 
			          r\in R^{\OBL}[m,n] \text{ for an index $n$}}$
		\EndIf
		\If{$l = \PERM m$}
			\State \Call{ModifyPerm}{$m$}
		\EndIf
		\If{$l = \neg \PERM m$}
			\State $-\partial_{\PERM}\gets -\partial_{\PERM} \cup \set{m}$
			\State $HB \gets HB \setminus \set{\PERM m, \OBL m}$
			\State $R\gets \set{ A(r) \setminus \hookrightarrow C(r) \ominus m |\ r\in R^{\PERM}[m,n] \text{ for an index $n$}}$		
		\EndIf
	\EndFor
\EndProcedure
\end{algorithmic}
\end{algorithm}

\medskip

Before describing how Algorithm~\ref{alg:checkFacts} works, let us
recall some concepts about the provability of a literal. Given a 
Defeasible Theory, a modal literal $\Box l \in \FACTS$ is trivially proved
with the corresponding modality by definition. Furthermore, we also stated
that a non-modal literal is proved within the theory if it is a fact.

Based on these facts, the procedure described in
Algorithm~\ref{alg:checkFacts} \textsc{CheckFacts} begins by removing all
factual literals from every rule where they appear as an antecedent; it also
removes all rules whose body contains a complementary literal (line 3). The
superiority relation is then modified in view of this operation (line 4).

From this point on, different operations are performed on account of which
kind of factual literal is considered.

\begin{enumerate}

\item If $l$ is a non-modal literal, we truncate the head of
all rules at $l$, where $l$ appears as an obligation  (lines 5--7);

\item if $l$ is a positive modal literal for obligation (lines 8--10) or
permission (lines 16--18), then Algorithm~\ref{alg:modifyobl}
\textsc{ModifyObl} (respectively Algorithm~\ref{alg:modifyperm}
\textsc{ModifyPerm}) is called to properly modify the theory. Notice that
operations in lines 7--8 of Algorithm~\ref{alg:modifyobl} \textsc{ModifyObl}
and 6--7 of Algorithm~\ref{alg:modifyperm} \textsc{ModifyPerm} are not
performed in this case, since they are equivalent to lines 3--4 of
Algorithm~\ref{alg:checkFacts} \textsc{CheckFacts};

\item if $l$ is a negative modal literal for obligation $\neg \OBL m$ (lines
11--15), then $-\partial_{\OBL} m$ holds, and clause (2) of
Definition~\ref{defn:DISC+pO+pP} makes all rules containing $m$ as an
obligation in their heads discarded for all literals after $m$. Hence, we
truncate all these chains at $m$, and then remove $m$ (line 14);

\item if $l$ is a negative modal literal for permission $\neg \PERM m$
(lines 19--23), then $-\partial_{\PERM} m$ holds in the theory. Thus, we
remove $m$ in every chain where $m$ appears as a permission (line 22).

\end{enumerate}

\smallskip

\noindent We conclude this section by presenting and describing
Algorithm~\ref{alg:defeasible} \mbox{\textsc{ComputeDefeasible}}, which represents
the main core for the computation of the defeasible extension of a theory.

\begin{algorithm}[h!]
\caption{ComputeDefeasible}\label{alg:defeasible}
\begin{algorithmic}[1]

\Require A defeasible theory $D$.
\Ensure	 The extension $E(D)$ of $D$.
	
\For{$\Box \in \set{\OBL, \PERM}$}
	\State $+\partial_{\Box} \gets \emptyset$
	\State $-\partial_{\Box} \gets \emptyset$
\EndFor
\State $R[l]_{infd} \gets \emptyset$ for each $l \in \LIT$
\State \Call{checkFacts}{}
\Repeat
  \State $\partial^{+}_{\Box} \gets \emptyset$
  \State $\partial^{-}_{\Box} \gets \emptyset$
  \For{$\Box l\in HB, \Box \in \set{\OBL, \PERM}$}
    \If{$R^{\Box}[l]=\emptyset$}
      \State \Call{Discard}{$l,\Box$}
    \EndIf
	\If{there exists $r \in R^{\OBL}[l,1]$ such that $A(r)= \emptyset$}
		\State $R[\non l]_{infd} \gets R[\non l]_{infd} \cup \set{s \in R[\non l] |\  r>s}$
      \If{$\set{s\in R[\non l] |\ s>r}=\emptyset$}
        \State \Call{Discard}{$\non l,\Box$}
        \If{$R[\non l] \setminus R[\non l]_{infd} = \emptyset$ and $\neg \OBL l \not\in \FACTS$}
          	\State \Call{ModifyObl}{$l$}
    		\EndIf
      \EndIf
	\EndIf
	\If{there exists $r \in R^{\PERM}[l,1]$ such that $A(r)= \emptyset$}
		\State $R[\non l]_{infd} \gets R[\non l]_{infd} \cup \set{s \in R^{\OBL}[\non l] |\  r>s}$
    \If{$\set{s\in R[\non l] |\ s>r}=\emptyset$}
      \State \Call{Discard}{$\non l, \OBL$}
    	\If{$R^{\OBL}[\non l] \setminus R[\non l]_{infd} = \emptyset$}
      		\State \Call{ModifyPerm}{$l$}
		  \EndIf
    \EndIf
	\EndIf
\EndFor
\State $+\partial_{\Box} \gets +\partial_{\Box} \cup \partial^{+}_{\Box}$
\State $-\partial_{\Box} \gets -\partial_{\Box} \cup \partial^{-}_{\Box}$
\Until{$\partial^{+}_{\Box}=\emptyset$ and $\partial^{-}_{\Box}=\emptyset$}
\State \Return $E(D) = (+\partial_{\OBL}, +\partial_{\PERM}, -\partial_{\OBL}, -\partial_{\PERM})$
\end{algorithmic}
\end{algorithm}

\noindent In lines 1--5, we initialise variables $+\partial_{\OBL}$,
$+\partial_{\PERM}$ and, for each literal $l$, a set $R[l]_{infd}$ containing
all the rules for $l$ that are defeated by a rule for the opposite.
Algorithm~\ref{alg:checkFacts} \mbox{\textsc{CheckFacts}} is invoked to compute all
defeasible conclusions derived from the set of facts (line 6).

The algorithm consists of a main loop (the \textbf{repeat} cycle in lines
7--35) that performs a series of transformations to reduce a Defeasible
Theory into a simpler equivalent one. The loop ends when no more modifications
on the extension are made, i.e., when both variables $\partial^{+}_{\Box}$ and
$\partial^{-}_{\Box}$ are empty at the end of an iteration.

At the beginning of the cycle, we re-initialise the set of conclusions computed at the iteration of the main loop (lines 8--9).

The \textbf{for} cycle in lines 10--32 checks all the rules for every literal
$l$ of the theory. In lines 11--13 it modifies the theory invoking
Algorithm~\ref{alg:discard} \textsc{Discard} for all modal literals with
no supporting chains. Lines 14--31 loop over all rules in the theory for the
current literal $l$, and checks if an applicable rule exists with $l$ as first
element in its head. If the rule introduces $l$ as an obligation (lines
14--22), then we have to collect all rules for the opposite, and check if they
are all defeated by a rule for $l$. If this is the case, then we have proved
$+\partial_{\OBL} l$ and Algorithm~\ref{alg:modifyobl} \textsc{ModifyObl} must
be invoked.

On the other hand, if $l$ is introduced as a permission (lines 23--31), then
we have to take into account only obligation rules for $\non l$, and to check
if every rule for $\non l$ as an obligation is defeated by at least one rule
for the opposite. If so, condition $+\partial_{\PERM} l$ holds, and
Algorithm~\ref{alg:modifyperm} \textsc{ModifyPerm} is invoked. Finally, all
modifications on the extension, due to the execution of the cycle, are stored
in the global sets of conclusions (lines 33--34).

\subsection{Properties of defeasible theory transformations}\label{subsec:Trans}

The properties we are going to show below are related to operations that
transform a theory $D$ into an equivalent simpler theory $D'$ (where by the
term `simpler' we mean a theory with a minor number of symbols in it).

The transformations operate either by removing some
elements from it, or by deleting a rule from the theory. Given the functional
nature of the operations, we will refer to the rules in the target theory with
the same names/labels as the rules in the source theory. Thus, given a
rule $r\in D$, we will refer to the rule corresponding to it in $D'$ (if it
exists) with the same label, namely $r$.

For the sake of readability, the proofs of all the theoretical results
(Lemmas~\ref{lem:facts}--\ref{lem:-p}) are not reported in this
subsection and the interested reader can find them in
Appendix~\ref{sec:Appendix}.

\medskip

\noindent Given a non-modal literal $p\in \FACTS$, we can obtain an equivalent
theory by removing $p$ in every rule where it appears in the antecedent.
Moreover, if the rule is for an obligation, Definition~\ref{defn:DISC+pO+pP}
clause (2) ensures that the rule will be discarded for every element after
$p$, and therefore we can truncate the reparation chain at $p$. Instead, if
the rule is for a permission, we cannot operate on it. In both cases, we only
consider rules where the complement of $p$ does not appear in the antecedent.
Finally, the superiority relation can be simplified by removing all tuples
with a rule containing $\non p$ in the antecedent, or an obligation rule for
an element after $p$ in its consequent.

\begin{lem}\label{lem:facts}
Let $D=(\FACTS,R,>)$ be a theory such that $p\in \FACTS \cap \LIT$. Let
$D'=(\FACTS ',R',>')$ be the theory obtained from $D$ where
\begin{align*}
    \FACTS ' = &\ \FACTS \setminus\set{p}\\
    R' = & \set{r: A(r)\setminus\set{p} \Rightarrow_{\OBL} C(r)!p |\
                r\in R,\ A(r)\cap\tilde{p}=\emptyset}\ \cup \\
		 & \set{r: A(r)\setminus\set{p} \Rightarrow_{\PERM} C(r) |\
                r\in R,\ A(r)\cap\tilde{p}=\emptyset} \\
    >' = & > \setminus \set{(r,s),(s,r)|\ r,s\in R, A(r)\cap\tilde{p}\neq\emptyset}.
  \end{align*}
Then $D\equiv D'$. 
\end{lem}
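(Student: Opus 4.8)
The plan is to show $D \equiv D'$ by proving that for every literal $l \in HB_D$ and every modality $\Box \in \MOD$, we have $D \vdash +\partial_\Box l$ iff $D' \vdash +\partial_\Box l$, and similarly for $-\partial_\Box l$. Since facts and rules are finite, proofs have finite length, so I would proceed by a simultaneous induction on the length of derivations in the two theories. The base observation is that $\FACTS$ and $\FACTS'$ differ only by $p \in \LIT$, and $p \in \FACTS$ is already ``used up'' in $D'$: a non-modal fact contributes to proofs only through clause (1.5) of applicability (and its negative counterpart), never through the modal clauses (1) of the proof conditions, which only inspect modal facts. So removing $p$ from $\FACTS$ while simultaneously discharging it from every rule antecedent (deleting rules that needed $\non p$, keeping rules that needed $p$ with $p$ stripped out) preserves exactly which rules are applicable/discarded at each stage.

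The key steps, in order, are as follows. First I would record the elementary correspondence between rules: $r \in R$ with $A(r) \cap \tilde p \neq \emptyset$ disappears in $R'$, and since such a rule is permanently discarded in $D$ (as $\non p \in \FACTS$ forces failure of the relevant applicability clause and $\non p$ can never be refuted), its absence in $D'$ changes nothing. A rule $r \in R$ with $A(r) \cap \tilde p = \emptyset$ survives with antecedent $A(r)\setminus\{p\}$; I would argue $r$ is applicable for a given index in $D$ at some proof stage iff its image is applicable in $D'$ at the corresponding stage — the only difference being the vacuously-satisfied requirement ``$p \in \FACTS$'', which holds in $D$ and is simply dropped in $D'$. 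Second, I would handle the truncation $C(r)!p$ for obligation rules: by clause (2) of Definition~\ref{defn:APPL+pO}/Definition~\ref{defn:APPL+pP}, applicability for $q$ at index $j$ requires each $c_k$ with $k<j$ to be provable as an obligation \emph{and} violated ($c_k \notin F$ or $\non c_k \in F$); since $p \in F$ and (by $O$-consistency or lack thereof — here we just need $p\in F$) $p$ is not violated unless $\non p \in F$, which is excluded, no rule in $R^{\OBL}[q,j]$ with $q$ strictly after $p$ can ever be applicable in $D$. Hence truncating the chain at $p$ in $D'$ removes only elements that were inert in $D$. For permission rules the chain is left intact because the $\odot$-part of the chain is governed by conditions on $-\partial_{\PERM}$, not on facts, so no analogous collapse occurs — this is why the lemma treats the two rule types asymmetrically. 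Third, the superiority-relation bookkeeping: the removed pairs $(r,s),(s,r)$ involve a rule $r$ that is either deleted or whose obligation-chain past $p$ is truncated; in either case $r$ is never an effective attacker or attackee at the relevant index in $D$, so dropping these pairs from $>$ does not affect which rules beat which in the proof conditions (2.3.x).

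With these correspondences in hand, the induction is routine: assuming the extensions agree on all literals provable/refutable in fewer than $n$ steps, I would check each clause of Definitions~\ref{def:+pO}--\ref{def:-pP} in turn. Clause (1) (factual modal literals) is untouched since $\FACTS$ and $\FACTS'$ have the same modal literals. Clauses (2.1) likewise only mention modal facts. Clauses (2.2), (2.3) quantify over $R^\Box[q]$ and $R[\non q]$ and invoke applicability, which the rule-correspondence above preserves; the superiority tests $t > s$ are preserved by the $>'$-correspondence. So each positive/negative condition holds in $D$ at stage $n$ exactly when it holds in $D'$, closing the induction.

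The main obstacle I anticipate is making the ``inert rule'' arguments fully precise: one has to verify that a rule whose antecedent meets $\tilde p$, or an obligation-chain element lying strictly beyond the fact $p$, genuinely can \emph{never} become applicable at \emph{any} proof stage in $D$ — not merely that it is discarded at stage~$1$ — and that therefore its presence contributes nothing to any of the universally- or existentially-quantified clauses (2.3.1)--(2.3.3) for either $q$ or $\non q$. This requires noting that $\non p \notin \FACTS$ is guaranteed only if $D$ is consistent in the sense of Definition~\ref{def:consistency}; if the theory is inconsistent the statement can fail, so I would either assume consistency or observe that the lemma is applied only within the algorithm where this holds. The bookkeeping for which of the four $\ominus$/$!$ cases of Definition~\ref{def:trunctaion-removal} is triggered for each rule shape is tedious but mechanical, and I would relegate it to the appendix as the paper already announces.
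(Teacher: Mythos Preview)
Your proposal is correct and follows essentially the same route as the paper's proof: an induction on derivation length, with the base case establishing that applicability and discarding of rules are preserved by the transformation (modal facts unchanged; rules with $\non p$ in the antecedent are discarded in $D$ and absent in $D'$; truncation at $p$ for obligation rules is justified by clause~(2) of the discarding condition since $p\in F$; the superiority edits touch only inert rules), and the inductive step closing by the inductive hypothesis applied to each clause of Definitions~\ref{def:+pO}--\ref{def:-pP}. Your observation that the argument tacitly needs $\non p\notin F$ (i.e., consistency in the sense of Definition~\ref{def:consistency}) is a point the paper leaves implicit; otherwise the two arguments coincide.
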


\noindent Starting from the modified theory given by the transformations of
the previous lemma, we now consider a theory with only modal literals in the
set of facts. If a literal $p$ is provable as an obligation, then we can
simplify the theory by removing $\OBL p$ in every antecedent of the rules in
$R$, and erase the rules where at least one element of $\widetilde{\OBL p}$ appears
in the antecedent. Since by hypothesis $\FACTS\cap\LIT=\emptyset$, if $p$
is present in the reparation chain of an obligation rule, we simplify the theory
by removing $p$ from the consequent. If $\non p$ is in the consequent, we can
also truncate the reparation chain of the rule since, by
Definition~\ref{defn:DISC+pO+pP} clause (2), the rule will be discarded for
each element after $\non p$ (Proposition~\ref{prop:oblperm} Part \ref{oblobl}
states that $-\partial_{\OBL}\non p$ holds as well). Moreover,
Proposition~\ref{prop:oblperm} Part \ref{oblperm} ensures that
$-\partial_{\PERM} \non p$ holds. Thus, Definition~\ref{defn:APPL+pP} clause
(3) allows us to remove $\non p$ in the consequent of permission rules for
$\non p$. Finally, the superiority relation can be simplified by removing all
tuples with a rule containing at least one element of $\widetilde{\OBL p}$ in
the antecedent.

\begin{lem}\label{lem:obl-alg}
  Let $D=(\FACTS,R,>)$ be a theory such that  $\FACTS\cap\LIT=\emptyset$ and 
  $D\vdash+\partial_{\OBL}p$. Let $D'=(F,R',>')$ be the theory obtained from $D$
  where
  \begin{align*}   
    R' = & \set{r: A(r)\setminus\set{\OBL p} \Rightarrow_{\OBL} C(r) ! \non p \ominus \non p |\ r\in R,\ A(r)\cap\widetilde{\OBL p}=\emptyset}\ \cup\\
& \set{r: A(r)\setminus\set{\OBL p} \Rightarrow_{\OBL} C(r) \ominus p |\ r\in R,\ A(r)\cap\widetilde{\OBL p}=\emptyset}\ \cup \\
& \set{r: A(r)\setminus\set{\OBL p} \Rightarrow_{\PERM} C(r) \ominus \non p |\ r\in R,\ A(r)\cap\widetilde{\OBL p}=\emptyset}\\
    >' & = > \setminus \set{(r,s),(s,r)|\ r,s\in R, 
      A(r) \cap \widetilde{\OBL p}\neq\emptyset}.
  \end{align*} 
  Then $D\equiv D'$.  
\end{lem}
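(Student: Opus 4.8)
The plan is to establish $D\equiv D'$, i.e.\ $E(D)=E(D')$, by the standard route for Defeasible-Logic theory transformations: an induction on the length of derivations showing that, for every modal literal $\Box l$, $D\vdash\pm\partial_{\Box}l$ if and only if $D'\vdash\pm\partial_{\Box}l$. Since each proof condition (Definitions~\ref{def:+pO}--\ref{def:-pP}) refers only to conclusions that, by the inductive hypothesis, already agree in $D$ and $D'$, the whole argument reduces to checking, rule by rule, that the predicates ``applicable'' and ``discarded'' of Definitions~\ref{defn:APPL+pO}--\ref{defn:DISC+pO+pP} have the same value in the two theories, and that the superiority pairs dropped by $>'$ are inert (the superiority relation is consulted only between two rules that are both applicable). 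Throughout, the three conclusions $+\partial_{\OBL}p$, $-\partial_{\OBL}\non p$ and $-\partial_{\PERM}\non p$ are available: the first by hypothesis, the other two by Proposition~\ref{prop:oblperm}(\ref{oblobl}) and~(\ref{oblperm}); they are the invariants around which every step is organised.

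First I would dispose of the rules that $D'$ deletes, namely those $r$ with $A(r)\cap\widetilde{\OBL p}\neq\emptyset$, by showing each such $r$ is already discarded in $D$ for every literal. If $\neg\OBL p\in A(r)$ this is clause $(1.2)$ of Definition~\ref{defn:DISC+pO+pP} together with $D\vdash+\partial_{\OBL}p$; if $\OBL\non p\in A(r)$ it is clause $(1.1)$ together with $D\vdash-\partial_{\OBL}\non p$; if $\PERM\non p\in A(r)$ it is clause $(1.3)$ together with $D\vdash-\partial_{\PERM}\non p$. A discarded rule never contributes to any proof condition, so removing these rules --- and, a fortiori, the superiority pairs containing them --- changes no conclusion.

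Next I would treat the surviving rules. Stripping $\OBL p$ from an antecedent is neutral, because clause $(1.1)$ of the applicability definitions, instantiated at $a_i=\OBL p$, merely asks for $+\partial_{\OBL}p$, which is in force. For the consequents I would argue the three transformations separately. For $\ominus p$ on an obligation rule: since $p\notin\FACTS$ (as $\FACTS\cap\LIT=\emptyset$), the disjunct ``$c_k\notin F$'' of clause $(2)$ of Definition~\ref{defn:APPL+pO} holds at $c_k=p$, and $+\partial_{\OBL}p$ holds, so an occurrence of $p$ in the $\otimes$-prefix never blocks progress along the chain; hence, under the uniform re-indexing caused by deleting $p$, the rule is applicable for a literal $q$ at the matching index in $D'$ exactly when it is in $D$. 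For $!\non p\ominus\non p$ on an obligation rule: since $D\vdash-\partial_{\OBL}\non p$, clause $(2)$ of Definition~\ref{defn:APPL+pO} fails at $c_k=\non p$, so the original rule is discarded for every literal strictly past $\non p$; truncating there discards nothing, and deleting the now-terminal $\non p$ is harmless because $-\partial_{\OBL}\non p$ already holds, so the rule's applicability for $\non p$ itself could never yield a fresh conclusion. For $\ominus\non p$ on a permission rule: since $D\vdash-\partial_{\PERM}\non p$, clause $(3)$ of Definition~\ref{defn:APPL+pP} is satisfied at $c_k=\non p$, so $\non p$ never blocks a later element of the $\odot$-chain; deleting it only re-indexes, and nothing of the form $+\partial_{\PERM}\non p$ can be lost by Proposition~\ref{prop:consistence}. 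In each case the corresponding ``discarded'' clauses follow by the strong-negation symmetry already exploited for Propositions~\ref{prop:consistence}--\ref{prop:oblperm}.

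The hard part will not be any single case but the bookkeeping of the index shifts: every $\ominus$ or truncation moves the literals that follow it in a chain, so I would first set up, once and for all, a single index-respecting correspondence between the rules of $D$ and of $D'$ and check that it commutes with membership in $R^{\OBL}[q,j]$, $R^{\PERM}[q,j]$ and with ``applicable at index $j$'', so that the quantifiers ``$\forall s\in R[\non q,j]$'' and ``$\exists r\in R^{\OBL}[q,i]$'' in the proof conditions range over matching families in the two theories. A secondary delicate point is the borderline case $\OBL p\notin\FACTS$: there $+\partial_{\OBL}p$ is no longer produced by $D'$ on its own, so ``$\equiv$'' is to be read --- as is usual for this family of incremental transformation lemmas, which are invoked inside the algorithm after the conclusion has been recorded --- relative to the conclusions already in force. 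Once the correspondence is pinned down, the remaining verifications are routine uses of $\FACTS\cap\LIT=\emptyset$ and of Propositions~\ref{prop:consistence} and~\ref{prop:oblperm}.
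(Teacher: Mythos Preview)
Your overall strategy --- induction on derivation length, verifying that ``applicable'' and ``discarded'' coincide in $D$ and $D'$ rule by rule, and that the removed superiority pairs involve only rules already discarded in $D$ --- is exactly the paper's route. The paper organises the induction by first splitting on whether $\OBL p\in\FACTS$, and then stepping through each proof tag $\pm\partial_{\OBL}q$, $\pm\partial_{\PERM}q$ at the base and inductive steps; your organisation by transformation type (deleted rules, then antecedents, then each of the three consequent operations) covers the same checks and appeals to the same ingredients (Proposition~\ref{prop:oblperm}, $\FACTS\cap\LIT=\emptyset$, and the applicability/discarded clauses of Definitions~\ref{defn:APPL+pO}--\ref{defn:DISC+pO+pP}).

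The one place you part company with the paper is your final paragraph. The paper does \emph{not} re-read $\equiv$ ``relative to conclusions already in force''; it keeps $\equiv$ as the extensional equivalence of Definition~\ref{def:extension} and, for the sub-case $\OBL p\notin\FACTS$, simply declares the argument ``straightforward'' because the standing hypothesis $D\vdash+\partial_{\OBL}p$ is available at every inductive step to transfer applicability and discardedness between the two theories. Your observation that $\ominus p$ strips $p$ from every obligation head --- so that $D'$ on its own seems unable to re-derive $+\partial_{\OBL}p$ --- is sharp, but the paper neither acknowledges it nor resolves it by weakening $\equiv$; it treats the lemma at face value. If your aim is to match the paper, drop the reinterpretation and run the induction uniformly with $+\partial_{\OBL}p$ carried as a fixed premise in both sub-cases; your underlying worry is better framed as a side remark about the lemma's statement than folded into the proof itself.
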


\noindent As the previous lemma, we consider a theory with only modal
literals in the set of facts. Since the theory proves
$-\partial_{\OBL} p$, also $\neg \OBL p$ holds. Thus, we
obtain an equivalent simpler theory by erasing all rules with $\OBL p$ as one
of the antecedents, and by removing $\neg \OBL p$ in each rule where it
appears in the antecedent. Again, by Definition~\ref{defn:DISC+pO+pP} clause
(2), for every obligation rule we can truncate each reparation chain with $p$
in the consequent and eliminate it from such a chain. Finally, the superiority
relation can be simplified by removing all the pairs with a rule containing
$\OBL p$ in the antecedent.

\begin{lem}\label{lem:notObl-alg}
  Let $D=(F,R,>)$ be a theory such that  $F\cap\LIT=\emptyset$ and 
  $D \vdash -\partial_{\OBL} p$. Let $D'=(F,R',>')$ be theory obtained from $D$
  where
  \begin{align*}   
    R'  =  &\set{r: A(r)\setminus\set{\neg \OBL p} \Rightarrow_{\Box } C(r) |\ r\in R,\ A(r)\cap\set{\OBL p}=\emptyset}\ \cup \\  
& \set{r: A(r) \Rightarrow_{\OBL} C(r) ! p \ominus p |\ r\in R,\ A(r)\cap\set{\OBL p}=\emptyset} \\
    >'  = & > \setminus \set{(r,s),(s,r)|\ r,s\in R, 
      A(r) \cap \set{\OBL p}\neq\emptyset}.
  \end{align*} 
  Then $D\equiv D'$.    
\end{lem}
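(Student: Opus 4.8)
The plan is to establish $E(D)=E(D')$ directly, i.e.\ to show that for every $\Box\in\MOD$ and every literal $q$ one has $D\vdash\pm\partial_{\Box}q$ if and only if $D'\vdash\pm\partial_{\Box}q$. As in Lemmas~\ref{lem:facts} and~\ref{lem:obl-alg}, I would prove both implications by a simultaneous induction on the length of derivations, replaying each proof in $D$ inside $D'$ by substituting every applied rule with its image under the transformation (which exists unless the rule carries $\OBL p$ in its antecedent), and, conversely, lifting each proof in $D'$ back to $D$. The argument is the mirror image of the one for Lemma~\ref{lem:obl-alg}, with $-\partial_{\OBL}p$ and the antecedent literal $\neg\OBL p$ taking over the roles played there by $+\partial_{\OBL}p$ and $\OBL p$; below I isolate the three facts that make the substitution sound, using that $F\cap\LIT=\emptyset$ so that there is no non-modal fact left to interact with the construction.

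(i)~Since $D\vdash-\partial_{\OBL}p$, Proposition~\ref{prop:consistence} gives $D\not\vdash+\partial_{\OBL}p$. Hence clause~(1.1) of Definitions~\ref{defn:APPL+pO} and~\ref{defn:APPL+pP} is never satisfied for a rule $r$ with $\OBL p\in A(r)$, so such an $r$ is never applicable in $D$, and clause~(1.1) of Definition~\ref{defn:DISC+pO+pP} makes it discarded; deleting these rules and every superiority pair mentioning one of them therefore changes no applicability or discardedness verdict of $D$. (ii)~Once $-\partial_{\OBL}p$ has been derived --- and any proof can be reordered so that this happens before it is first needed --- clause~(1.2) of the applicability conditions is automatically met by every occurrence of $\neg\OBL p$ in an antecedent, while clause~(1.2) of Definition~\ref{defn:DISC+pO+pP} can never be met via $\neg\OBL p$; so erasing $\neg\OBL p$ from antecedents is harmless. (iii)~If $r$ is an obligation rule whose $\otimes$-part contains $p$ at some index $k$, then $-\partial_{\OBL}p$ together with clause~(2) of Definition~\ref{defn:DISC+pO+pP} makes $r$ discarded for every literal of index greater than $k$, so the tail of $C(r)$ past $p$ is inert; moreover $r$ cannot contribute to $+\partial_{\OBL}p$, since any such contribution would come from a rule that is defeated while $-\partial_{\OBL}p$ holds. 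Hence replacing $C(r)$ by $C(r)!p\ominus p$ removes only inert material while preserving verbatim the (possibly empty) head segment of indices below $k$, and with it every applicability judgement about the literals occurring there --- which, by clause~(2) of Definition~\ref{defn:APPL+pO}, depends only on those earlier elements.

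Granting (i)--(iii), the inductive step is routine. In the direction $D\vdash X\Rightarrow D'\vdash X$ (with $X$ any tagged literal $\pm\partial_{\Box}q$) one runs through the clauses of Definitions~\ref{def:+pO}--\ref{def:-pP} for the witnessing rule and for every attacking rule: applicable rules map to applicable rules, discarded to discarded, and the strength comparisons in clauses~(2.3.2)/(2.3.3) of the obligation conditions and~(2.3.2) of the permission conditions are unaffected, because the only superiority pairs removed involve rules that are inert in $D$ and so never enter those comparisons; one also checks separately that $D'\vdash-\partial_{\OBL}p$, which holds since the transformation only deletes support for $\OBL p$. The direction $D'\vdash X\Rightarrow D\vdash X$ is symmetric; the point to watch is that $R'$ has shorter antecedents and shorter chains than $R$ and might seem to make more things provable, but by~(ii) and~(iii) every antecedent literal and every chain element that was dropped was already satisfied, respectively already inert, in $D$, so no spurious conclusion appears.

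I expect the main obstacle to be exactly the bookkeeping hidden in $C(r)!p\ominus p$: one must verify that truncating at $p$ and then deleting $p$ does not shift the index of any surviving literal in a way that would change clause~(2) of Definitions~\ref{defn:APPL+pO} and~\ref{defn:APPL+pP} (in particular the case where $p$ sits at the $\otimes/\odot$ boundary, which invokes the $\PERM$ clause of the $\ominus$ operation in Definition~\ref{def:trunctaion-removal}), and one must handle with care the derivations of $-\partial_{\OBL}q$ and of $\pm\partial_{\PERM}q$, where a priori a deleted chain tail could have supplied an attacking rule --- it never does, by~(iii), but ruling this out requires invoking the discardedness clause for $r$ at each relevant index. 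A lighter, secondary task is to confirm that the passage from $>$ to $>'$ deletes only pairs both of whose endpoints are either inert in $D$ or behave identically in $D'$.
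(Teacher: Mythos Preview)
Your proposal is correct and rests on the same underlying technique as the paper: an induction on derivation length showing that applicability and discardedness are preserved under the transformation, with your facts (i)--(iii) being exactly the case analysis that drives the argument. The paper's own proof is much terser: rather than writing out the induction, it observes that the modifications on $R'$ and $>'$ are (up to the antecedent side) a fragment of those already handled in Lemma~\ref{lem:obl-alg} with $l=\non p$ --- since $+\partial_{\OBL}\non p$ implies $-\partial_{\OBL}p$, the consequent-truncation $C(r)!p\ominus p$ is literally one of the operations performed there --- and then notes separately that removing $\neg\OBL p$ from antecedents and deleting rules with $\OBL p$ in the antecedent is justified directly by $-\partial_{\OBL}p$. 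Your direct argument is arguably cleaner, because the paper's reduction is a bit loose (the hypothesis $-\partial_{\OBL}p$ is strictly weaker than $+\partial_{\OBL}\non p$, so Lemma~\ref{lem:obl-alg} does not literally apply; one is really reusing its proof, not its statement), whereas you make the dependence on $-\partial_{\OBL}p$ alone explicit.
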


\noindent We can defeasibly prove a literal $p$ as a permission. A simpler
equivalent theory is one where we remove $\PERM p$ in each set of antecedents
and where we erase all the rules containing at least one element of the
complement of $\widetilde{\PERM p}$ in the antecedent. Proposition~\ref{prop:oblperm}
Part \ref{permobl} states that $-\partial_{\OBL} \non p$ holds. Thus, by
Definition~\ref{defn:DISC+pO+pP} clause (2), if $\non p$ appears in the
reparation chain of an obligation rule, we can remove it after having truncate
the chain at $\non p$. Instead, if we consider permission rules with $p$ in
the consequent, by Definition~\ref{defn:DISC+pO+pP} clause (3), we can
truncate the corresponding reparation chain at $p$. Finally, the superiority
relation can be simplified by removing all the pairs with a rule with an
element of $\widetilde{\PERM p}$ in the antecedent.

\begin{lem}\label{lem:perm-alg}
  Let $D=(F,R,>)$ be a theory such that  $F\cap\LIT=\emptyset$ and 
  $D\vdash+\partial_{\PERM}p$. Let $D'=(F,R',>')$ be theory obtained from $D$
  where
  \begin{align*}   
    R' =  &\set{r: A(r)\setminus\set{\PERM p} \Rightarrow_{\OBL} C(r)! \non p \ominus \non p |\
                r\in R,\ A(r)\cap\widetilde{\PERM p}=\emptyset}\ \cup \\
		 & \set{r: A(r)\setminus\set{\PERM p} \Rightarrow_{\PERM} C(r) ! p |\
                r\in R,\ A(r)\cap\widetilde{\PERM p}=\emptyset} \\
    >' = & > \setminus \set{(r,s),(s,r)|\ r,s\in R, 
      A(r)\cap\widetilde{\PERM p}\neq\emptyset}.
  \end{align*} 
  Then $D\equiv D'$.    
\end{lem}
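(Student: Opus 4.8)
The plan is to establish $D\equiv D'$ by showing $E(D)=E(D')$, i.e.\ that for every $\Box\in\MOD$ and every literal $l$ one has $D\vdash+\partial_{\Box}l\iff D'\vdash+\partial_{\Box}l$ and $D\vdash-\partial_{\Box}l\iff D'\vdash-\partial_{\Box}l$. As in the companion lemmas, I would run a simultaneous induction on the length of derivations: assuming the two theories agree on all conclusions provable by proofs of length $\le n$, show that any tagged literal occurring at position $n+1$ of a proof in either theory is provable in the other. The base step is vacuous and the induction hypothesis will in particular let me assume $D'\vdash+\partial_{\PERM}p$ and $D'\vdash-\partial_{\OBL}\non p$ whenever needed, since each of these has a fixed-length proof in $D$.

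First I would record the three consequences of the hypothesis $D\vdash+\partial_{\PERM}p$. By Proposition~\ref{prop:consistence}, $D\not\vdash-\partial_{\PERM}p$, so a body literal $\neg\PERM p$ can never help make a rule applicable; by Proposition~\ref{prop:oblperm} Part~\ref{permobl}, $D\vdash-\partial_{\OBL}\non p$, hence by Proposition~\ref{prop:consistence} again $D\not\vdash+\partial_{\OBL}\non p$, so a body literal $\OBL\non p$ can never satisfy an antecedent condition either. Consequently every rule of $R$ whose antecedent meets $\widetilde{\PERM p}=\{\neg\PERM p,\OBL\non p\}$ is permanently discarded in $D$ (the relevant clause, (1.4) or (1.1), of Definition~\ref{defn:DISC+pO+pP} stays active forever), so deleting such rules and the superiority pairs mentioning them affects neither positive nor negative consequences: a permanently discarded rule is useless for the existential clauses and already trivially disposed of in the universal clauses of all four proof conditions.

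Next I would set up the rule-by-rule correspondence. For a surviving rule $r$ (one with $A(r)\cap\widetilde{\PERM p}=\emptyset$), write $r'$ for its image in $D'$, obtained by dropping $\PERM p$ from the body and applying the prescribed head operation. The core claim is: for every literal $l$ and index $j$, $r$ is applicable (resp.\ discarded) for $l$ at $j$ in the condition for $\pm\partial_{\OBL}$ or $\pm\partial_{\PERM}$ in $D$ iff $r'$ is applicable (resp.\ discarded) for $l$ at the corresponding index in $D'$. For the body this is immediate: the only literal possibly removed is $\PERM p$, and since $D\vdash+\partial_{\PERM}p$ (and, by the induction hypothesis, $D'\vdash+\partial_{\PERM}p$) the requirement ``$a_i=\PERM p\Rightarrow+\partial_{\PERM}p$'' transports trivially, while the induction hypothesis handles the other body literals. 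For the head, the operation $C(r)!\non p\ominus\non p$ is applied exactly when $\non p$ lies in the $\otimes$-part of $r$ (that is, $r\in R^{\OBL}[\non p]$); since $D\vdash-\partial_{\OBL}\non p$, clause~(2) of Definition~\ref{defn:DISC+pO+pP} already discards $r$ for every literal at an index strictly past that of $\non p$ — including the entire $\odot$-part, which is why truncating the whole tail is sound — whereas the applicability conditions (Definition~\ref{defn:APPL+pO} clause~(2), Definition~\ref{defn:APPL+pP} clause~(2)) for literals before $\non p$ are unaffected because $p,\non p\notin F$ (using $F\cap\LIT=\emptyset$). Symmetrically, $C(r)!p$ is applied when $p$ lies in the $\odot$-part of $r$ ($r\in R^{\PERM}[p]$), and since $D\vdash+\partial_{\PERM}p$, clause~(3) of Definition~\ref{defn:DISC+pO+pP} discards $r$ past the index of $p$, so truncation there loses nothing, while $p$ itself is kept so that $r'$ can still witness $+\partial_{\PERM}p$. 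Granting this correspondence, the inductive step is a routine case analysis over the four proof conditions: every existential witness in $D$ maps to one in $D'$ and conversely, and every universal clause is preserved because the families of attacking and defeating rules, together with the relevant fragment of $>$, are in bijection under $r\mapsto r'$, with deleted rules being discarded on both sides.

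The delicate step is precisely this head-transformation bookkeeping: one must check that the combined effect of ``$!$'' then ``$\ominus$'' on $R^{\OBL}[\non p]$-rules, and of ``$!$'' on $R^{\PERM}[p]$-rules, interacts correctly with the index arithmetic implicit in the sets $R^{\OBL}[q,n]$, $R^{\PERM}[q,n]$ and $R[q,n]$ used throughout the proof conditions — in particular that no rule which legitimately still derives some positive conclusion in $D$ loses that ability in $D'$, and that when the $\otimes$-part of $r$ is truncated at $\non p$ one does not unsoundly destroy $r$'s membership in some $R^{\PERM}[q]$ for a $q$ in its $\odot$-part (it is destroyed, but soundly, exactly because $-\partial_{\OBL}\non p$ makes $r$ discarded there anyway). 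Once this verification is in place, together with the three consequences of the hypothesis noted above, $E(D)=E(D')$ follows, and hence $D\equiv D'$.
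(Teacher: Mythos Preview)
Your proposal is correct and follows essentially the same approach as the paper's proof: induction on derivation length, transferring applicability and discardedness between $D$ and $D'$ via the consequences $D\vdash+\partial_{\PERM}p$ and $D\vdash-\partial_{\OBL}\non p$ (Proposition~\ref{prop:oblperm} Part~\ref{permobl}), with the head operations justified exactly by clauses~(2) and~(3) of Definition~\ref{defn:DISC+pO+pP}. The paper additionally splits the argument according to whether $\PERM p\in F$ (handling the inductive base separately there), but your uniform treatment via the rule-by-rule correspondence $r\mapsto r'$ is equivalent; one small caveat is that your appeal to the induction hypothesis for $D'\vdash+\partial_{\PERM}p$ is unnecessary (and slightly circular as stated), since the $D'\to D$ direction only requires $D\vdash+\partial_{\PERM}p$, which is the lemma's hypothesis.
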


\noindent The theory proves $-\partial_{\PERM}
p$, allowing $\neg \PERM p$ to hold. Thus, we obtain an equivalent theory if we
erase all the rules with $\PERM p$ in the set of the antecedents and if we
remove $\neg \PERM p$ where it appears in the tail of a rule. Moreover, if the
rule is for a permission, we remove $p$ from the reparation chain.
Finally, we change the superiority relation by erasing the tuples with a rule with $\PERM p$ in the antecedent.

\begin{lem}\label{lem:notPerm-alg}
  Let $D=(F,R,>)$ be a theory such that  $F\cap\LIT=\emptyset$ and 
  $D\vdash-\partial_{\PERM}p$. Let $D'=(F,R',>')$ be theory obtained from $D$
  where
  \begin{align*}   
    R' =  &\set{r: A(r)\setminus\set{\neg \PERM p} \Rightarrow_{\Box } C(r) |\ r\in R,\ A(r)\cap\set{\PERM p}=\emptyset}\ \cup \\ 
& \set{r: A(r) \Rightarrow_{\PERM} C(r) \ominus p |\ r\in R,\ A(r)\cap\set{\PERM p}=\emptyset}\\
    >'  = &> \setminus \set{(r,s),(s,r)|\ r,s\in R, 
      A(r)\cap\set{\PERM p}\neq\emptyset}.
  \end{align*} 
  Then $D\equiv D'$.    
\end{lem}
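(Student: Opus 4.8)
The plan is to prove $D\equiv D'$ by establishing the stronger statement that $D$ and $D'$ derive exactly the same tagged literals, from which $E(D)=E(D')$ follows. As in the companion lemmas, this is carried out by a simultaneous induction on the length of derivations: assuming the two theories agree on $P(1..n)$, one shows that a step $P(n+1)$ is obtainable from $D$ iff it is obtainable from $D'$. Because the proof conditions of Definitions~\ref{def:+pO}--\ref{def:-pP} are mutually recursive (the obligation clauses mention $\pm\partial_{\PERM}$ and vice versa), the induction has to range over all four tags at once, and one treats the clause for $-\partial_{\PERM}p$ itself as just one of these instances rather than as a special case.

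The heart of the argument is to check, clause by clause, that the proof conditions hold in $D$ iff in $D'$, and this rests on three ``inertness'' facts, all consequences of the hypothesis $D\vdash-\partial_{\PERM}p$ together with Proposition~\ref{prop:consistence} (which yields $D\not\vdash+\partial_{\PERM}p$). First, every rule with $\PERM p$ in its antecedent is discarded in $D$ by clause $(1.3)$ of Definition~\ref{defn:DISC+pO+pP}, so deleting such rules — and the $>$-pairs that mention them — affects neither applicability nor the counterattack clauses $(2.3.2)/(2.3.3)$. Second, removing $\neg\PERM p$ from an antecedent is harmless: clause $(1.4)$ of Definitions~\ref{defn:APPL+pO} and~\ref{defn:APPL+pP} asks for $-\partial_{\PERM}p$, which holds in $D$ by hypothesis and in $D'$ since $\PERM p\notin F'=F$ and no rule surviving in $D'$ can yield $+\partial_{\PERM}p$; dually clause $(1.4)$ of Definition~\ref{defn:DISC+pO+pP} asks for $+\partial_{\PERM}p$, which holds in neither theory. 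Third, an occurrence of $p$ in the $\odot$-part of a rule is transparent: for any later index $j$, condition $(3)$ of Definition~\ref{defn:APPL+pP} demands $-\partial_{\PERM}$ of that slot (satisfied), while condition $(3)$ of Definition~\ref{defn:DISC+pO+pP} is fired by that slot only if $+\partial_{\PERM}$ of it holds (never); hence excising $p$ with $\ominus p$ and re-indexing the surviving elements preserves, for each remaining literal, its status of being applicable or discarded, and $p$ itself is anyhow not derivable from the rule as a permission since $-\partial_{\PERM}p$ holds globally. One also records that $F'=F$ (so the factual clauses $(1)$ and $(2.1)$ are literally unchanged), that the $\otimes$-prefix side-conditions ``$c_k\notin F$ or $\non c_k\in F$'' are untouched and — as $F\cap\LIT=\emptyset$ — automatically met for literal $c_k$, and that $>'\subseteq>$ so superiority tests can only become easier, with the lost pairs being exactly those guaranteed unusable by the first fact.

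I expect the main obstacle to be the bookkeeping around the $\ominus$ operation rather than any conceptual hurdle. One must be precise about which rules it applies to — exactly those in which $p$ occurs preceded by $\odot$, i.e.\ the rules in $R^{\PERM}[p]$ after the antecedent simplification — about the index shift this induces, about the mode-change clause of $\ominus$ in borderline positions, and about the degenerate case where $p$ is the sole element of a permission rule's consequent: such a rule cannot be collapsed to an empty consequent and must be left unchanged, since it may still be used to block obligation rules for $\non p$ (otherwise a spurious $+\partial_{\OBL}\non p$ could appear in $D'$). Once the correspondence between ``applicable/discarded for $q$ at index $j$'' in $D$ and in $D'$ has been set up cleanly, matching the remaining clauses is mechanical and entirely parallel to the proof of Lemma~\ref{lem:notObl-alg}, with condition $(3)$ of Definitions~\ref{defn:APPL+pP}/\ref{defn:DISC+pO+pP} on the $\odot$-part playing the role that condition $(2)$ on the $\otimes$-part plays there.
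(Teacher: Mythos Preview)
Your proposal is correct and, in fact, more thorough than the paper's own treatment. The paper does not carry out the induction for this lemma at all: its entire proof is the remark that $-\partial_{\PERM}\non l$ is a consequence of $+\partial_{\OBL}l$ (Proposition~\ref{prop:oblperm}, Part~\ref{oblperm}), and that therefore ``the proof is derived by Lemma~\ref{lem:obl-alg}, where $l=\non p$''. In other words, the paper treats the transformations here as the $-\partial_{\PERM}$-relevant subset of those already justified in Lemma~\ref{lem:obl-alg} and appeals to that lemma by analogy, much as it does for Lemma~\ref{lem:notObl-alg}.

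Your direct induction is the same scheme the paper actually spells out for Lemmas~\ref{lem:facts}, \ref{lem:obl-alg}, and \ref{lem:perm-alg}, so the difference is one of presentation rather than substance: you redo the clause-by-clause argument, whereas the paper shortcuts it. Your route has the advantage of being self-contained and of not relying on the implication $+\partial_{\OBL}\non p \Rightarrow -\partial_{\PERM}p$ in the reverse direction (which the paper's deferral tacitly leans on, since the hypothesis here is only $-\partial_{\PERM}p$). You also surface the degenerate case of a permission rule whose head is the single literal $p$ --- where $\ominus p$ is not covered by Definition~\ref{def:trunctaion-removal} and the rule must be kept to avoid a spurious $+\partial_{\OBL}\non p$ --- a point the paper's proof does not address.
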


\noindent The following two lemmas represent conditions under which a literal
can be proved as an obligation or as a permission. The transformations
dictated by the previous lemmas empty the antecedent of every applicable rule.

\begin{definition}\label{def:infd}
  Given a theory $D=(F,R,>)$, and a set of rules $S$, the subset of $S$ of 
  \emph{inferiorly defeated} rules for a literal $p$, $S[p]_{infd}$ is thus 
  defined:
  \(
    r\in S[p]_{infd}
  \) iff
  \begin{enumerate}
    \item $\exists s\in R[\non p]$ such that $A(r)=\emptyset$ and $s>r$, and
    \item if $r\in R^{\PERM}[p]$, then $s\in R^{\OBL}[\non p]$.
  \end{enumerate}
\end{definition}

\begin{lem}\label{lem:proveOBL}
  Let $D=(F,R,>)$ be a theory such that $F\cap\MODLIT=\emptyset$, $\exists r\in 
  R^{\OBL}[p,1]$, $A(r)=\emptyset$, and $R[\non p]\subseteq R_{infd}$. 
  Then $D\vdash+\partial_{\OBL}p$.
\end{lem}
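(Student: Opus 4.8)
The plan is to verify directly that the proof condition $+\partial_{\OBL}$ (Definition~\ref{def:+pO}) is satisfied for $p$ in the theory $D$, using the hypotheses that $F\cap\MODLIT=\emptyset$, that there is a rule $r\in R^{\OBL}[p,1]$ with $A(r)=\emptyset$, and that every rule for $\non p$ is inferiorly defeated, i.e.\ $R[\non p]\subseteq R_{infd}$ (which, by Definition~\ref{def:infd}, must here mean $R[\non p]=R[\non p]_{infd}$ for this $p$). First I would dispose of clause~(1): since $F$ contains no modal literals, $\OBL p\notin F$, so we must go through clause~(2). For clause~(2.1), again because $F\cap\MODLIT=\emptyset$, none of $\OBL\non p$, $\neg\OBL p$, $\PERM\non p$ is in $F$, so (2.1) holds. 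For clause~(2.2), the rule $r$ witnesses it: $r\in R^{\OBL}[p,1]$, its antecedent is empty (hence all the antecedent conditions (1.1)--(1.5) of Definition~\ref{defn:APPL+pO} are vacuously met), and since $p$ occurs at index~$1$ there are no predecessors in the chain, so condition~(2) of Definition~\ref{defn:APPL+pO} is vacuous as well; thus $r$ is applicable for $p$ at index~$1$.

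The substantive part is clause~(2.3): for every $s\in R[\non q,j]$ (with $q=p$) I must show $s$ is discarded, or that $s$ is counterattacked by a stronger applicable rule for $p$ of the appropriate type — (2.3.2) if $s\in R^{\OBL}$, (2.3.3) if $s\in R^{\PERM}\cup R_{def}$. Here I would invoke the hypothesis $R[\non p]\subseteq R[\non p]_{infd}$: by Definition~\ref{def:infd}, for each such $s$ there exists a rule $t\in R[\non{(\non p)}]=R[p]$ with $A(t)=\emptyset$ and $t>s$; moreover if $s\in R^{\PERM}[\non p]$ (hence $s\in R^{\PERM}$) then that $t$ lies in $R^{\OBL}[\non p]$, wait — rather in $R^{\OBL}[p]$, i.e.\ it is an obligation rule for $p$. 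In either case $t$ has empty antecedent, so as above it is applicable for $p$ (at whatever index $p$ occurs, the chain-predecessor condition needs checking, but I would argue one may take $t$ with $p$ at index~$1$, or note that the relevant $t$ from the $R_{infd}$ witness can be chosen appropriately); and $t$ has the right mode: a plain rule in $R^{\OBL}[p]$ suffices for (2.3.2) when $s$ is an obligation rule, and for (2.3.3) the definition of $R_{infd}$ exactly guarantees $t\in R^{\OBL}[p]$ when $s$ is a permission rule. Defeaters $s\in R_{def}$ for $\non p$ are also covered since (2.3.3) only needs an obligation rule for $p$ stronger than $s$, and the $R_{infd}$ condition supplies one (the permission-restriction in clause~2 of Definition~\ref{def:infd} applies only when $r$ — there the rule being classified — is a permission rule, so defeaters get the unrestricted witness $t\in R[p]$; one must then check that such $t$ can be taken to be an obligation rule, which I would extract from the fact that $r$ itself is in $R^{\OBL}[p,1]$ and is stronger-or-incomparable, or refine the bookkeeping).

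Having checked clauses (1), (2.1), (2.2), (2.3) I conclude that appending $+\partial_{\OBL}p$ to a suitable proof $P$ is legitimate, hence $D\vdash+\partial_{\OBL}p$. The main obstacle I anticipate is the precise matching of rule \emph{types} and \emph{indices} in clause~(2.3): the definition of $R[\non p]_{infd}$ only promises a stronger rule $t$ with empty antecedent and (conditionally) the right mode, but Definition~\ref{def:+pO} additionally requires $t$ to be \emph{applicable for $p$}, which for a rule where $p$ sits at index $>1$ would demand that all obligation-predecessors in $t$'s chain are violated — something not handed to us for free. I would handle this either by arguing that the witness $t$ supplied by the $R_{infd}$ machinery can always be taken with $p$ at index~$1$ (plausibly the intended reading, since inferiorly-defeated bookkeeping in Algorithm~\ref{alg:defeasible} is driven by rules $r\in R^{\OBL}[l,1]$ with $A(r)=\emptyset$), or by strengthening the argument to track chain conditions explicitly; pinning down this reading is the one place the proof needs care rather than routine unwinding of definitions.
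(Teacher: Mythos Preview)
Your approach is exactly the paper's: verify clauses (1), (2.1), (2.2), (2.3) of Definition~\ref{def:+pO} directly, using $F\cap\MODLIT=\emptyset$ for (2.1), the empty-body rule $r\in R^{\OBL}[p,1]$ for (2.2), and the hypothesis $R[\non p]\subseteq R_{infd}$ to supply, for each $s\in R[\non p]$, an applicable empty-body rule for $p$ stronger than $s$ in (2.3). The paper's proof is in fact much shorter than yours and simply says that the defeating witness is ``an appropriate rule with empty antecedent for $p$'' and that ``a rule with empty body is applicable''.

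The two concerns you flag --- (i) that the witness $t$ might have $p$ at index $>1$, so empty antecedent alone does not guarantee applicability, and (ii) that for a defeater $s$ Definition~\ref{def:infd} does not force the witness $t$ to be an obligation rule as (2.3.3) requires --- are genuine imprecisions in Definition~\ref{def:infd} as written, and the paper's proof does not address them either. Both are resolved by the intended reading, visible in Algorithm~\ref{alg:defeasible} (lines 14--15 and 23--24): the sets $R[\non l]_{infd}$ are populated only by witnesses $r$ with $l$ at index~$1$, and a permission (or defeater) rule is only ever marked inferiorly defeated by an obligation witness, since line~24 adds only rules from $R^{\OBL}[\non l]$ when the witness is a permission rule. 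So your instinct to appeal to the algorithmic bookkeeping is exactly right; the paper implicitly assumes that reading.
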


\begin{lem}\label{lem:provePERM}
  Let $D=(F,R,>)$ be a theory such that $F\cap\MODLIT=\emptyset$, $\exists r\in 
  R^{\PERM}[p,1]$, $A(r)=\emptyset$, and $R^{\OBL}[\non p]\subseteq R_{infd}$.
  Then $D\vdash+\partial_{\PERM}p$.
\end{lem}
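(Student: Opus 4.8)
The plan is to verify directly that the tagged literal $+\partial_{\PERM}p$ satisfies its proof condition (Definition~\ref{def:+pP}) in the theory $D$, using the three hypotheses: $F\cap\MODLIT=\emptyset$ (no modal literals among the facts), the existence of a rule $r\in R^{\PERM}[p,1]$ with empty antecedent, and $R^{\OBL}[\non p]\subseteq R_{infd}$. I would first dispatch the factual clause: since $F$ contains no modal literals, we have $\PERM p\notin F$, and clauses (2.1) of $+\partial_{\PERM}$ — namely $\OBL\non p\notin F$ and $\neg\PERM p\notin F$ — hold vacuously; likewise clause (1) of the negative condition cannot apply. So everything reduces to showing clauses (2.2) and (2.3) of $+\partial_{\PERM}$.

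For clause (2.2) I would use the witness rule $r\in R^{\PERM}[p,1]$. Because $p$ appears at index $1$ in $C(r)$ and the rule is a permission rule, the applicability conditions of Definition~\ref{defn:APPL+pP} for $p$ at index $j=1$ reduce to just condition (1) on the antecedent (conditions (2) and (3) are empty ranges when $j=1=l$), and since $A(r)=\emptyset$ there are no antecedent literals to check. Hence $r$ is applicable for $p$, giving clause (2.2). For clause (2.3) I must handle every $s\in R^{\OBL}[\non p,j]$: I need each such $s$ either discarded, or beaten by an applicable rule $t\in R[q,k]$ for $p$ with $t>s$. Here the hypothesis $R^{\OBL}[\non p]\subseteq R_{infd}$ enters — by Definition~\ref{def:infd} (instantiated with $S=R$, so $R_{infd}=R[\non p]_{infd}$ restricted appropriately), each $s\in R^{\OBL}[\non p]$ has some rule with empty antecedent that is superior to it; and since $s$ is an obligation rule, Definition~\ref{def:infd} clause (2) is not even triggered, so the dominating rule may be taken from $R[\non p]$ — but we actually want a rule for $p$. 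The cleanest route is to observe that $r$ itself (empty antecedent, applicable for $p$) is the natural candidate for $t$: if $r>s$ for every $s\in R^{\OBL}[\non p]$ we are done immediately; in general the inferiorly-defeated hypothesis gives, for each $s$, some empty-antecedent rule $t_s$ with $t_s>s$, and because $t_s$ has empty antecedent and lies in $R[\cdot]$ it is applicable, so clause (2.3.2) is satisfied rule by rule.

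The main obstacle, and the step I would spend the most care on, is matching the direction of Definition~\ref{def:infd}: that definition speaks of a rule $s\in R[\non p]$ dominating $r$ (``$s>r$''), whereas clause (2.3) of $+\partial_{\PERM}$ needs a rule for $p$ dominating each $s\in R^{\OBL}[\non p]$. So I would need to read $R_{infd}$ in the lemma statement as the set $R[\non p]_{infd}$ computed \emph{with the roles of $p$ and $\non p$ read off consistently} — i.e., the hypothesis $R^{\OBL}[\non p]\subseteq R_{infd}$ should be unpacked as: every obligation rule for $\non p$ is dominated by some empty-antecedent rule (necessarily a rule for $p$, after the earlier transformation lemmas have emptied antecedents), which is exactly the team-defeat pattern clause (2.3) demands. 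Once that bookkeeping is pinned down, the argument is a routine clause-by-clause check; I would also note that, as in the proof of Proposition~\ref{prop:oconsistence}, acyclicity of $>$ guarantees no rule is simultaneously applicable and discarded, so the case split in (2.3.1)/(2.3.2) is exhaustive and well-defined. The parallel with Lemma~\ref{lem:proveOBL} is close enough that I would structure the write-up to mirror it, flagging only the one genuine difference: in the permission case clause (2.3) ranges over $R^{\OBL}[\non p]$ rather than all of $R[\non p]$, which is why the weaker hypothesis $R^{\OBL}[\non p]\subseteq R_{infd}$ (rather than $R[\non p]\subseteq R_{infd}$) suffices.
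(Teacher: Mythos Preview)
Your proposal is correct and follows essentially the same approach as the paper: verify clauses (2.1)--(2.3) of $+\partial_{\PERM}$ directly, using $F\cap\MODLIT=\emptyset$ for (2.1), the empty-antecedent rule $r$ for (2.2), and the inferiorly-defeated hypothesis for (2.3), noting that the only difference from Lemma~\ref{lem:proveOBL} is the restriction to $R^{\OBL}[\non p]$ in (2.3). Your worry about the direction in Definition~\ref{def:infd} dissolves once you instantiate that definition with the literal $\non p$ (so that each $s\in R^{\OBL}[\non p]$ is dominated by some empty-antecedent rule in $R[p]$, which is then applicable and serves as the $t$ in clause (2.3.2)); the detour through acyclicity of $>$ is unnecessary here, since the hypothesis hands you the dominating rule outright.
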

\noindent The next two lemmas concern conditions to determine when it is possible to 
assert that a literal is negatively provable.
\begin{lemma}\label{lem:-pEmpty}
  Let $D=(F,R,>)$ be a theory such that $F\cap\MODLIT=\emptyset$ and 
  $R^{\Box }[p]=\emptyset$, for $\Box \in\set{\OBL,\PERM}$. Then $D\vdash-\partial_\Box p$.
\end{lemma}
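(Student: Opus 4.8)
The statement to prove is Lemma~\ref{lem:-pEmpty}: if $F\cap\MODLIT=\emptyset$ and $R^{\Box}[p]=\emptyset$ for some $\Box\in\{\OBL,\PERM\}$, then $D\vdash-\partial_\Box p$. The plan is to exhibit a proof $P$ in $D$ that contains the tagged literal $-\partial_\Box p$; since $D$ is finite, it suffices to verify that appending $-\partial_\Box p$ to a suitable initial segment of a proof respects the negative proof condition (Definition~\ref{def:-pO} or Definition~\ref{def:-pP}). The key observation is that both negative proof conditions have the disjunctive structure ``(1) $\Box p\notin F$ \emph{and} [(2.1) some factual disjunct \emph{or} (2.2) every relevant rule is discarded \emph{or} (2.3) \dots]'', and under our hypotheses we can satisfy them \emph{vacuously} at clause (2.2).

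\textbf{Key steps.} First I would dispatch clause (1): since $F\cap\MODLIT=\emptyset$, in particular $\OBL p\notin F$ and $\PERM p\notin F$, so clause (1) of the relevant negative condition holds outright (and likewise all of the modal-fact disjuncts in clause (2.1) are irrelevant, though we will not even need them). Second, I would handle the core of the argument via clause (2.2): the condition reads ``$\forall r\in R^{\Box}[p,i]$, $r$ is discarded'' (for $-\partial_{\OBL}$, using $R^{\OBL}[q,i]$; for $-\partial_{\PERM}$, using $R^{\PERM}[q,i]$). But by hypothesis $R^{\Box}[p]=\emptyset$, and $R^{\Box}[p]$ is by definition the union over all indices $n$ of $R^{\Box}[p,n]$, so there is simply no rule to discard — the universal quantifier is vacuously true. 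Third, I would observe that since clauses (1) and (2.2) are already satisfied, the remaining disjuncts (2.3) and its sub-clauses need not be examined, so $-\partial_\Box p$ may be appended to any proof (e.g.\ the empty initial segment $P(1..0)$ works, since clause (2.2) references $P(1..n)$ only through discardedness of rules, of which there are none). This yields a one-line proof $P=(-\partial_\Box p)$, establishing $D\vdash-\partial_\Box p$.

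\textbf{Main obstacle.} There is essentially no difficulty here — the lemma is the ``base case'' that grounds the inductive/iterative structure of the algorithm's correctness, and its proof is a direct appeal to the vacuous satisfaction of a universally quantified clause. The only point requiring mild care is bookkeeping: one must check that \emph{every} disjunct needed in the chosen branch of the negative proof condition is met without circular reference to other tagged literals, and in particular that clause (1) — which is a genuine conjunct, not a disjunct — is handled by the hypothesis $F\cap\MODLIT=\emptyset$. Since that hypothesis is explicitly assumed, and since $R^{\Box}[p]=\emptyset$ trivially makes clause (2.2) hold, the proof goes through immediately for both $\Box=\OBL$ (via Definition~\ref{def:-pO}) and $\Box=\PERM$ (via Definition~\ref{def:-pP}), with no case analysis beyond selecting which definition to invoke.
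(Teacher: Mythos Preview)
Your proposal is correct and follows essentially the same approach as the paper: the paper's proof is a one-sentence remark that, since there are no modal literals in $F$ and $R^{\Box}[p]=\emptyset$, clause (2.2) of $-\partial_{\OBL}$ and $-\partial_{\PERM}$ is vacuously satisfied. Your write-up is simply a more explicit unpacking of the same vacuous-satisfaction argument, additionally spelling out why clause (1) holds.
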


\begin{lemma}\label{lem:-p}
  Let $D=(F,R,>)$ be a theory such that $F\cap\MODLIT=\emptyset$, and $\exists r\in 
  R[p,1]$ such that $A(r)=\emptyset$ and $r_{sup}=\emptyset$. Then
  \begin{enumerate}
    \item if $r\in R^{\OBL}$, then $D\vdash-\partial_{\Box}\non p$, 
      $\Box\in\set{\OBL,\PERM}$;
    \item if $r\in R^{\PERM}\cup R_{def}$, then $D\vdash-\partial_{\OBL}\non p$. 
  \end{enumerate}
\end{lemma}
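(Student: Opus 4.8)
The plan is to establish each item by producing a one-element proof whose sole tagged literal is the claimed negative conclusion, and to verify that the relevant proof condition --- Definition~\ref{def:-pO} for $-\partial_{\OBL}$, Definition~\ref{def:-pP} for $-\partial_{\PERM}$ --- is met already with respect to the empty prefix $P(1..0)$. In every case the witness for the key existential clause will be the given rule $r$. Two elementary facts are used repeatedly. First, since $A(r)=\emptyset$ and $p$ occurs at index $1$ of $C(r)$, the rule $r$ is applicable for $p$: whichever notion of applicability is invoked (Definition~\ref{defn:APPL+pO} when $r$ is an obligation rule, Definition~\ref{defn:APPL+pP} when $r$ is a permission rule, the bare antecedent test when $r$ is a defeater), the antecedent conditions hold vacuously and the conditions on the elements of the $\oslash$-chain preceding $p$ are void. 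Second, since $r_{sup}=\emptyset$ there is no rule $t$ with $t>r$, hence $t\not>r$ for every $t$; in particular this settles the superiority sub-clauses without any appeal to ``$t$ is discarded''. Finally, $F\cap\MODLIT=\emptyset$ yields $\OBL\non p\notin F$ and $\PERM\non p\notin F$, so clause~(1) of each negative proof condition holds and no modal fact can interfere with the remaining clauses.

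For item~1 assume $r\in R^{\OBL}$. To obtain $-\partial_{\OBL}\non p$, read Definition~\ref{def:-pO} with $q=\non p$: clause~(1) holds, and the disjunction in the body is discharged by clause~(2.3), taking $s=r\in R[p,1]$. Indeed $r$ is applicable for $p$ by the first fact above, so sub-clause~(2.3.1) holds; since $r\in R^{\OBL}$, sub-clause~(2.3.2) applies and is satisfied because $t\not>r$ for every $t\in R[\non p,k]$; and sub-clause~(2.3.3) is vacuous as $r\notin R^{\PERM}\cup R_{def}$. To obtain $-\partial_{\PERM}\non p$, read Definition~\ref{def:-pP} with $q=\non p$: clause~(1) holds, and clause~(2.3) is again witnessed by $s=r$, which now lies in $R^{\OBL}[p,1]\subseteq R^{\OBL}[p]$ because its mode is $\OBL$ and $p$ sits at index $1$; it is applicable for $p$, and sub-clause~(2.3.2) holds since no rule beats $r$. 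Neither condition reads the proof prefix, so each of $-\partial_{\OBL}\non p$ and $-\partial_{\PERM}\non p$ may stand as the opening line of a proof, giving $D\vdash-\partial_{\Box}\non p$ for $\Box\in\set{\OBL,\PERM}$.

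For item~2 assume $r\in R^{\PERM}\cup R_{def}$; only $-\partial_{\OBL}\non p$ is claimed. Instantiate Definition~\ref{def:-pO} with $q=\non p$ once more: clause~(1) holds, and clause~(2.3) is witnessed by $s=r\in R[p,1]$ (if $r\in R^{\PERM}$ then $r\in R^{\PERM}[p,1]$; if $r$ is a defeater then $C(r)=p$ by the constraint on defeaters), which is applicable for $p$. This time $r\notin R^{\OBL}$, so sub-clause~(2.3.2) is vacuous, while sub-clause~(2.3.3) requires $t\not>r$ for every $t\in R^{\OBL}[\non p,k]$, which holds by the second fact. Hence the condition is satisfied with respect to the empty prefix and $D\vdash-\partial_{\OBL}\non p$.

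The whole argument is a direct unwinding of the proof conditions, so no genuine obstacle arises. The only points that demand a little care are: checking that ``$r$ is applicable for $p$'' is read under the appropriate clause of Definitions~\ref{defn:APPL+pO}--\ref{defn:APPL+pP} when $p$ occurs at index~$1$ --- in each case the extra conditions degenerate; and observing that the negative conclusions we select depend on nothing but the rule set and the superiority relation --- because the witnessing rule has empty antecedent and the superiority sub-clauses are met through ``$t\not>r$'' rather than ``$t$ is discarded'' --- so that they legitimately occur as the first step of a proof.
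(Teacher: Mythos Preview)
Your proof is correct and follows the same approach as the paper: both use the given rule $r$ as the witness $s$ in clause~(2.3) of $-\partial_{\OBL}$ (respectively $-\partial_{\PERM}$), relying on the empty antecedent and index~$1$ for applicability and on $r_{sup}=\emptyset$ for the superiority sub-clauses. The paper's proof is a two-line sketch (``it is easy to verify that for both cases the rule satisfies clause (2.3)\ldots''), whereas you spell out which sub-clauses are vacuous, which are discharged by $t\not>r$, and why the conclusion can legitimately stand as the first line of a derivation; the underlying argument is identical.
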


\section{Computational results}\label{sec:CompRes}

In this section we present the computational properties of the algorithms
previously described. Since, as stated above, the first three algorithms are
sub-routines of the two main ones, we will present the correctness and
completeness results of these algorithms inside theorems for Algorithms
\ref{alg:checkFacts} \textsc{CheckFacts} and \ref{alg:defeasible}
\textsc{ComputeDefeasible}.

In order to properly exhibit results on the complexity of the algorithms, we
need the following definition.

\begin{defn}\label{defn:ComplexityOfATheory}
	Given a Defeasible Theory $D$, the \emph{size} $S$ of $D$ is the number of
literal occurrences plus the number of the rules in $D$.
\end{defn}

\noindent We also report some key ideas and intuitions behind our
implementation.

\begin{enumerate}

\item Each operation on global sets $\pm\partial_{\Box}$ and
$\partial^{\pm}_{\Box}$ requires a constant time, as we manipulate finite sets
of literals;

\item For each literal $\Box l \in HB$, we implement a hash table with
pointers to rules where the literal occurs; thus, retrieving the set
of rules containing a given literal requires constant time.

\item The superiority relation can also be implemented by means of hash
tables; once again, the information required to modify a given tuple can be
accessed in constant time.

\end{enumerate}

\begin{thm}\label{thm:ComplALGOCheckFacts}
	Given a modal Defeasible Theory $D$ with size $S$,
Algorithm~\ref{alg:checkFacts} \mbox{\textsc{CheckFacts}} terminates and its
computational complexity is $O(S)$.
\end{thm}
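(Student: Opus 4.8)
The goal is to show that \textsc{CheckFacts} (Algorithm~\ref{alg:checkFacts}) terminates and runs in time $O(S)$, where $S$ is the size of the input theory. The plan is to argue termination first, then bound the total work by a charging argument against $S$. For termination, observe that the outer \textbf{for} loop iterates once over each literal $l\in F$, a finite set; the only subtlety is that the subroutine calls \textsc{ModifyObl} and \textsc{ModifyPerm} do not themselves recurse or loop unboundedly — each is a straight-line sequence of bulk set operations over $R$, $>$, and $HB$ — so the whole procedure performs finitely many passes over finitely many rules and halts.

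\medskip
\textbf{Complexity.} For the $O(S)$ bound I would set up the accounting as follows. By the implementation conventions stated just before the theorem, each access to the set of rules containing a given modal literal $\Box l$, and each access/removal of a superiority pair, costs constant time via hash tables; and each update of a global or temporary conclusion set $\pm\partial_\Box$, $\partial^\pm_\Box$ is constant time. The key observation is that every bulk operation performed inside \textsc{CheckFacts} (and inside the \textsc{ModifyObl}/\textsc{ModifyPerm} calls it makes) either (i) deletes a rule from $R$, (ii) deletes a literal occurrence from the antecedent or consequent of some rule (this is what $A(r)\setminus\{\cdot\}$, the truncation $C(r)!m$, and the removal $C(r)\ominus m$ do), or (iii) deletes a pair from the superiority relation. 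Operations of type (i) and (ii) can happen at most once per literal occurrence in the original theory, hence at most $O(S)$ times in total; operations of type (iii) can happen at most once per superiority pair. Since the number of superiority pairs is also bounded — we may assume $>$ is given as part of the input and its size is absorbed into $S$, or it can be pre-bounded by $O(S^2)$; here the cleanest route is to note that each pair removed is indexed by a rule whose antecedent contains a literal being processed, so the removals are charged to rule-literal incidences, again $O(S)$ — the grand total of primitive steps is $O(S)$. Finally, iterating over $l\in F$ contributes $|F|\le S$ to the count, and locating the relevant rule sets $R^{\OBL}[l,n]$, $R^{\PERM}[l,n]$, $R^{\PERM}[m,n]$ etc.\ for each such $l$ is constant-time lookup plus time proportional to the number of rules actually modified, which is already charged above.

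\medskip
\textbf{Main obstacle.} The delicate point is making precise that the bulk rewrites such as $R \gets \{A(r)\setminus\{l\}\hookrightarrow C(r)\mid r\in R\}\setminus\{r\mid A(r)\cap\tilde l\neq\emptyset\}$ do not cost $O(|R|)$ per fact — which would give $O(S^2)$ overall — but only time proportional to the number of rules actually touched. This is exactly where the hash-table-of-pointers assumption does the work: only rules that contain $l$ (or an element of $\tilde l$) in their antecedent, or $l$ at some index in their consequent, are visited, and each such visit is charged to the corresponding literal occurrence, which is then destroyed and never revisited. I would state this as an explicit \emph{amortised potential argument}: let the potential $\Phi$ be the current number of literal occurrences plus rules plus live superiority pairs; every primitive operation performed by \textsc{CheckFacts} strictly decreases $\Phi$ (or is one of the $O(|F|)$ loop-bookkeeping steps), $\Phi$ starts at $O(S)$ and never increases, so the running time is $O(S)$. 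The correctness of the individual transformations — that each rewrite produces an equivalent theory — is not needed for this theorem and is deferred to the lemmas of Section~\ref{subsec:Trans}; here only the \emph{count} of operations matters.
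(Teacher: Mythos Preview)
Your proposal is correct and takes essentially the same approach as the paper: both argue termination from finiteness of $F$ and of the subroutines, and both resolve the apparent $O(S^2)$ cost of the nested iterations by observing that the inner work consumes literal occurrences (and rules, and superiority pairs) that are then gone, so the total work is additive rather than multiplicative in $S$. Your amortised potential argument with $\Phi$ is a cleaner and more explicit formalisation of what the paper states informally as ``$O(Z)+O(Z)=O(Z)$''; the only point to tidy up is that Definition~\ref{defn:ComplexityOfATheory} does not literally count superiority pairs in $S$, so your aside about charging each removed pair to the rule-literal incidence that triggers its deletion is the right way to close that gap.
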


\begin{proof}

Termination of Algorithm~\ref{alg:checkFacts}~\textsc{CheckFacts} is given by
definition of modal Defeasible Theory, since the internal sub-routines (i.e.,
Algorithm~\ref{alg:modifyobl} \textsc{ModifyObl} and \ref{alg:modifyperm}
\textsc{ModifyPerm}), as well as the algorithm itself, manipulate finite sets
of rules and facts.

For a correct analysis of the complexity of Algorithm~\ref{alg:checkFacts}
\mbox{\textsc{CheckFacts}}, it is of the utmost importance to correctly
comprehend Definition~\ref{defn:ComplexityOfATheory}. Here we underline that
the size $S$ of a theory represents the total number of occurrences of each
literal in every rule of such a theory. Let us examine a theory with $Y$
literals and whose size is $Z$. If we consider a cycle whose purpose is to
call, for each literal, a procedure that selectively deletes it from all the
rules of the theory (no matter to what end), a rough computational complexity
would be $O(Z^{2})$. In fact, the complexity of the procedure by itself is
bound to the number of rules in the theory, which is in $O(Z)$, and this
procedure is iterated $Y \in O(Z)$ times. 

However, a more fined-grained analysis shows that the complexity
of this loop is lower. The mistake of the previous analysis is that it
considers the complexity of the procedure separately from the complexity of
the external loop, whilst they are strictly dependent. Indeed, the overall
number of operations made by the sum of all loop iterations cannot outrun the
number of occurrences of the literals, $O(Z)$, because the operations in the
inner procedure directly decrease, iteration after iteration, the number of
the remaining repetitions of the outmost loop, and the other way around.
Therefore, the overall complexity is not bound to $O(Z)\cdot O(Z) = O(Z^{2})$,
but to $O(Z) + O(Z) = O(Z)$.

We can contextualise the above reasoning to
Algorithm~\ref{alg:checkFacts}~\textsc{CheckFacts}. The main cycle in lines
2--26 is iterated over the set of facts, whose cardinality is in
$O(S)$; the operations in lines 10 and 19 (invoking
Algorithms~\ref{alg:modifyobl}~\textsc{ModifyObl} and
\ref{alg:modifyperm}~\textsc{ModifyPerm}) represent an additive factor $O(S)$
in the overall complexity of the algorithm. Finally, all operations on the set
of rules and the superiority relation performed by conditions in lines 5, 12,
and 21 require constant time, given the implementation of data structures
proposed above. Therefore, we can state that the complexity of the algorithm
is $O(S)$.
\end{proof}

\begin{thm}\label{thm:ComplALGOComputeDefeasible}
	Given a Defeasible Theory $D$ with size $S$,
Algorithm~\ref{alg:defeasible}\newline \mbox{\textsc{ComputeDefeasible}}
terminates and its computational complexity is $O(S)$.
\end{thm}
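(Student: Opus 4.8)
The plan is to establish termination and then bound the running time of \textsc{ComputeDefeasible} by $O(S)$, reusing the amortisation argument already developed in the proof of Theorem~\ref{thm:ComplALGOCheckFacts}. First I would dispose of termination: the call to \textsc{CheckFacts} terminates by Theorem~\ref{thm:ComplALGOCheckFacts}; the \textbf{repeat}--\textbf{until} loop terminates because every iteration either leaves the theory unchanged—in which case $\partial^{+}_{\Box}$ and $\partial^{-}_{\Box}$ stay empty and the loop exits—or it invokes \textsc{Discard}, \textsc{ModifyObl}, or \textsc{ModifyPerm}, each of which strictly shrinks one of the finite quantities $HB$, $R$, or $>$; since $HB$, $R$, and $>$ are finite in a finite Defeasible Theory, only finitely many productive iterations are possible. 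The inner \textbf{for} loop ranges over the finite set $HB$, so each iteration of \textbf{repeat} clearly terminates.

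For the complexity bound, the key observation—exactly as in the proof for \textsc{CheckFacts}—is that the three auxiliary procedures are \emph{consumers}: each time \textsc{Discard}, \textsc{ModifyObl}, or \textsc{ModifyPerm} is called, it permanently removes occurrences of literals from rule bodies and from reparation chains, deletes rules, and deletes pairs from $>$. With the hash-table data structures described before the theorem, every such deletion, every membership test on $R^{\Box}[l]$, $R^{\OBL}[l,1]$, $R^{\PERM}[l,1]$, and every update of $R[\non l]_{infd}$ or of a global/temporary conclusion set costs $O(1)$ per element touched. Hence the \emph{total} cost charged to all invocations of the auxiliary procedures over the whole run is bounded by the number of literal occurrences plus the number of rules plus the size of $>$, i.e.\ $O(S)$ (the relation $>$ has size $O(S^{2})$ in the worst case, but one may either assume $>$ is given in a form of size $O(S)$, as is standard following Maher~\cite{Maher2001} and Lam~\cite{Lam.2011}, or fold its manipulation into the same amortised count; I would state this explicitly). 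The remaining work per iteration of the \textbf{for} loop that is \emph{not} inside an auxiliary call—testing whether $R^{\Box}[l]=\emptyset$, whether a body-less rule for $l$ at index $1$ exists, and whether the relevant ``superior rule'' set is empty—is $O(1)$ amortised, again using the hash tables and the bookkeeping set $R[l]_{infd}$, so one full pass of the \textbf{for} loop costs $O(|HB|)=O(S)$.

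The remaining point is to bound the number of iterations of the \textbf{repeat} loop. Here the argument is that a literal is added to some $\partial^{+}_{\Box}$ or $\partial^{-}_{\Box}$ at most once over the entire run (once $\pm\partial_{\Box}l$ is established the corresponding modal literal is removed from $HB$ and the supporting/attacking rules are simplified away), so across all iterations the total number of ``productive'' actions is $O(|HB|)=O(S)$; an iteration that performs no productive action is the last one. A naïve bound would therefore give $O(S)$ iterations each costing $O(S)$, i.e.\ $O(S^{2})$. The main obstacle—and the step I would spend the most care on—is showing this is not tight: I would argue, mirroring the ``strictly dependent loops'' reasoning in the proof of Theorem~\ref{thm:ComplALGOCheckFacts}, that the work done inside the \textbf{for} loop is itself amortised against the same budget of $O(S)$ literal occurrences and rules, because every unit of work either (i) is charged to a permanent deletion performed by an auxiliary procedure, or (ii) corresponds to inspecting a rule/literal that will be deleted before it is inspected again, since \textsc{Discard} removes from $HB$ every modal literal with no supporting chain at the start of each pass. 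Thus the grand total of all operations across all iterations of \textbf{repeat} and \textbf{for} is $O(S)+O(S)=O(S)$, which together with the $O(S)$ cost of \textsc{CheckFacts} gives the claimed $O(S)$ bound.
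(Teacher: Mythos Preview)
Your proposal is correct and follows essentially the same approach as the paper's own proof: termination is argued via the finiteness of $HB$ (every productive iteration of the \textbf{repeat} loop removes something from it, and an unproductive iteration is the last), and the complexity bound is obtained by invoking the amortisation argument already spelled out for Theorem~\ref{thm:ComplALGOCheckFacts}, treating the calls to \textsc{Discard}, \textsc{ModifyObl}, and \textsc{ModifyPerm} as additive $O(S)$ contributions and the remaining per-step checks as constant-time under the hash-table implementation.

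If anything, you are more explicit than the paper, which simply asserts that ``the \textbf{repeat} cycle in lines 7--35 is in $O(S)$'' by direct analogy with the \textsc{CheckFacts} analysis and does not separately discuss the na\"ive $O(S^{2})$ bound from the nested \textbf{repeat}/\textbf{for} structure, nor the question of whether the size of $>$ is accounted for in $S$. Your flagging of these two points and your attempt to discharge the first via the ``strictly dependent loops'' amortisation are useful additions; the paper relies on the same amortisation implicitly but does not spell it out.
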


\begin{proof}
When referring to the termination of
Algorithm~\ref{alg:defeasible}~\textsc{ComputeDefeasible}, the most important
part we have to analyse is the \textbf{repeat} cycle in lines 7--35. Once an
instance of the cycle has been performed, we must be in one of the following
(mutually exclusive) situations:

\begin{enumerate}

\item no modification of the extension has occurred. In this case, line 29
ensures the termination of the algorithm;

\item the theory is modified with respect to a literal in the Modal Herbrand
Base $HB$. Notice that the algorithm takes care of removing the literal from
$HB$ once it has performed the suitable operations. As the set is finite, the
process described above eventually empties $HB$, and at the next iteration of
the cycle we have no means to modify the extension of the theory. In this case
as well, the algorithm ends its execution.	
\end{enumerate}

\noindent The analysis of the complexity of Algorithm~\ref{alg:defeasible}
\textsc{ComputeDefeasible} straightly follows from the reasoning proposed to
demonstrate the computational complexity of
Algorithm~\ref{alg:checkFacts}~\textsc{CheckFacts}. Thus, the \textbf{repeat}
cycle in lines 7--35 is in $O(S)$, while procedure invocations in lines 12,
17, 19, 26 and 28 represent an additive factor as before. Since the operations
in lines 15 and 24, and the checks in lines 14, 16, 18, 23, 25 and 27 also
weight a constant time, the computational complexity of
Algorithm~\ref{alg:defeasible} \textsc{ComputeDefeasible} is bound to $O(S)$.
\end{proof}

\begin{thm}\label{thm:SoundCompl}
	Algorithm~\ref{alg:defeasible}~\textsc{ComputeDefeasible} is sound and
complete.
\end{thm}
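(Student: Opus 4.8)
The plan is to establish soundness and completeness of Algorithm~\ref{alg:defeasible} \textsc{ComputeDefeasible} by showing that every transformation it performs on the input theory $D$ yields an equivalent theory, and that the conclusions it records coincide exactly with the defeasible extension $E(D)$. Concretely, I would argue that after the call to \textsc{CheckFacts} and after each iteration of the \textbf{repeat} cycle, the current theory $D_i$ satisfies $D_i\equiv D$ via the transformation lemmas already proved: Lemma~\ref{lem:facts} justifies the factual-literal simplifications in \textsc{CheckFacts}, while Lemmas~\ref{lem:obl-alg}, \ref{lem:notObl-alg}, \ref{lem:perm-alg}, \ref{lem:notPerm-alg} justify, respectively, the modifications performed by \textsc{ModifyObl}, by the $\neg\OBL m$ branch, by \textsc{ModifyPerm}, and by the $\neg\PERM m$ branch (and by \textsc{Discard}, which is the degenerate case of the two ``negative'' lemmas when $R^{\Box}[p]=\emptyset$). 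Thus soundness reduces to showing that whenever the algorithm puts a literal $l$ into $\partial^{+}_{\Box}$ or $\partial^{-}_{\Box}$, the literal is genuinely $\pm\partial_{\Box}$-provable in $D$; and completeness reduces to showing the converse, that every member of $E(D)$ is eventually recorded.

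For soundness I would proceed by induction on the number of transformation steps. The base case is \textsc{CheckFacts}: a factual modal literal $\Box l\in F$ trivially gives $+\partial_{\Box}l$ by clause (1) of the relevant proof condition, and the companion refutations ($-\partial_{\OBL}\non l$, $-\partial_{\PERM}\non l$, etc.) follow from $O$-consistency together with Proposition~\ref{prop:oblperm}. For the inductive step, the key observation is that the guards in lines 11--13, 14--22, 23--31 are precisely the algorithmic counterparts of the proof conditions: line 12 invokes \textsc{Discard} exactly when $R^{\Box}[l]=\emptyset$, matching Lemma~\ref{lem:-pEmpty}; the block in lines 14--22 fires \textsc{ModifyObl}$(l)$ exactly when there is an applicable (empty-bodied) rule $r\in R^{\OBL}[l,1]$, no rule for $\non l$ is superior to it, and every rule for $\non l$ is in $R[\non l]_{infd}$, which by Lemma~\ref{lem:proveOBL} entails $D_i\vdash+\partial_{\OBL}l$; symmetrically lines 23--31 match Lemma~\ref{lem:provePERM}. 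Since the successive transformations empty antecedents of applicable rules and truncate/remove reparation chains exactly as the lemmas prescribe, ``applicable with empty body in $D_i$'' corresponds to ``applicable in $D$'', so each recorded conclusion lifts back to a genuine conclusion of the original $D$. One must also invoke Lemma~\ref{lem:-p} to justify the negative conclusions recorded as side effects of \textsc{Discard}/\textsc{ModifyObl}/\textsc{ModifyPerm}, and Proposition~\ref{prop:consistence} to know these cannot clash with positive ones.

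For completeness I would argue that the \textbf{repeat} cycle cannot halt with a literal $\Box l\in HB$ still ``undecided'': at each pass, for the current literal $l$ either $R^{\Box}[l]$ is already empty (so \textsc{Discard} fires and $-\partial_{\Box}l$ is recorded), or there remains some rule in $R^{\Box}[l]$; since the transformations strictly shrink either $HB$ or the antecedents/chains of rules, and by the linear-complexity analysis of Theorems~\ref{thm:ComplALGOCheckFacts}--\ref{thm:ComplALGOComputeDefeasible} the process terminates, eventually every rule's antecedent relevant to $l$ is emptied or the rule is deleted, at which point the guard of lines 14--22 or 23--31 (or 11--13) must trigger. Matching this against the recursive structure of Definitions~\ref{def:+pO}--\ref{def:-pP}, a standard induction on the stage at which $\pm\partial_{\Box}l$ first becomes derivable in $D$ shows the algorithm records it no later than the corresponding iteration. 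The main obstacle, and the part deserving the most care, is the interaction between the superiority relation and the team-defeat structure: one must verify that the bookkeeping set $R[\non l]_{infd}$ of inferiorly defeated rules, updated incrementally in lines 15 and 24, correctly tracks Definition~\ref{def:infd}, so that the test ``$R[\non l]\setminus R[\non l]_{infd}=\emptyset$'' (resp.\ its $\OBL$-restricted version) is equivalent to the full clause (2.3) of the proof conditions — including the subtle asymmetry that obligation rules for $\non l$ may be counterattacked by any rule for $l$ whereas permission/defeater rules for $\non l$ require an obligation rule for $l$ (clauses (2.3.2)--(2.3.3)), which is exactly why line 24 restricts to $R^{\OBL}[\non l]$. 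Handling acyclicity of $>$ here (as in the proof of Proposition~\ref{prop:oconsistence}) is what guarantees the elimination process on $R[\non l]_{infd}$ actually terminates and is exhaustive.
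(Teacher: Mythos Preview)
Your soundness argument is essentially the paper's: map each line of the algorithm to the lemma that justifies it (Lemmas~\ref{lem:facts}--\ref{lem:-p}), and observe that every recorded conclusion is a genuine conclusion of the current theory $D_i$, hence of $D$ by the chain of equivalences. The paper does exactly this, only more tersely (it literally tabulates which lemma covers which line block).

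For completeness, however, your primary argument has a gap. You claim the \textbf{repeat} cycle cannot halt with a literal $\Box l\in HB$ still undecided, because ``the transformations strictly shrink either $HB$ or the antecedents/chains of rules'' until some guard triggers. This is false: consider a theory containing the single rule $\OBL a\Rightarrow_{\OBL} a$. Here $R^{\OBL}[a]\neq\emptyset$, the rule's body is never emptied (nothing ever proves or refutes $\OBL a$), and no guard fires; the algorithm terminates with $\OBL a$ still in $HB$ and $a$ in neither $+\partial_{\OBL}$ nor $-\partial_{\OBL}$. That is the \emph{correct} output, since $a$ is genuinely neither provable nor refutable in $D$ --- but it shows that your shrinking argument does not go through, and hence cannot be the reason the algorithm is complete. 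Your fallback sentence (``a standard induction on the stage at which $\pm\partial_{\Box}l$ first becomes derivable in $D$'') is the right idea, but it is doing all the work and is not supported by the preceding paragraph.

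The paper's route is different and shorter: it does not try to argue that every literal is eventually processed. Instead it observes that each transformation lemma proves an \emph{equivalence} $D_i\equiv D_{i+1}$, not merely an inclusion. Since equivalence is symmetric, the very same lemmas that give soundness give completeness: anything in $E(D)$ is in $E(D^*)$ for the fixpoint theory $D^*$, and at the fixpoint the guards of the algorithm are exhaustive with respect to what is derivable in $D^*$. So the paper gets completeness essentially for free from the bidirectionality of Lemmas~\ref{lem:facts}--\ref{lem:notPerm-alg}, without any separate termination-of-processing argument.
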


\begin{proof}
As already argued at the beginning of the section, the aim of
Algorithm~\ref{alg:defeasible} \mbox{\textsc{ComputeDefeasible}} is to compute
the defeasible extension of a given theory $D$ through successive
transformations on the set of facts and rules, and on the superiority
relation. These transformations act in ways which obtain at each step a new
simpler theory while retaining the same extension. Again, we remark that the
word `simpler' is used to denote a theory with less elements in it. Since we
have already proved the termination of the algorithm, it eventually comes to a
fix-point theory where no more operations can be made.

In order to demonstrate the soundness of Algorithm~\ref{alg:defeasible}, we
show in the list below that all the operations performed by the algorithm are
those described in Lemmas~\ref{lem:facts}--\ref{lem:-p}, where we have already
proved the soundness of the operation involved.

\setlist[itemize,2]{label=--}
\begin{itemize}
	\item Algorithm~\ref{alg:discard} \textsc{Discard}:
	\begin{itemize}
		\item Lines 2--4: Lemma~\ref{lem:notObl-alg} and \ref{lem:notPerm-alg}.
	\end{itemize}
	\item Algorithm~\ref{alg:modifyobl} \textsc{ModifyObl}:
	\begin{itemize}
		\item Lines 2--4, 10--13: Lemma~\ref{lem:obl-alg};
		\item Lines 6--9: Proposition~\ref{prop:oconsistence} and Lemma~\ref{lem:obl-alg}.
	\end{itemize}
	\item Algorithm~\ref{alg:modifyperm} \textsc{ModifyPerm}:
	\begin{itemize}
		\item Lines 2--3, 9--11: Lemma~\ref{lem:perm-alg};
		\item Lines 5--8: Proposition~\ref{prop:consistence} and Lemma~\ref{lem:perm-alg}.
	\end{itemize}
	\item Algorithm~\ref{alg:checkFacts} \textsc{CheckFacts}:
	\begin{itemize}
		\item Lines 3--4: Lemmas~\ref{lem:facts}--\ref{lem:notPerm-alg};
		\item Lines 5--8: Lemma~\ref{lem:facts};
    \item Lines 9--11: Algorithm~\ref{alg:modifyobl};
		\item Lines 12--17: Lemma~\ref{lem:notObl-alg};
		\item Lines 18--20: Algorithm~\ref{alg:modifyperm};
		\item Lines 21--25: Lemma~\ref{lem:notPerm-alg}.
	\end{itemize}
	\item Algorithm~\ref{alg:defeasible} \textsc{ComputeDefeasible}:
	\begin{itemize}
	  \item Lines 11--13: Lemma~\ref{lem:-pEmpty};
	  \item Lines 17 and 26: Lemma~\ref{lem:-p};
    \item Lines 18--20: Lemmas~\ref{lem:proveOBL} and \ref{lem:obl-alg};
    \item Lines 27--29: Lemma~\ref{lem:provePERM} and \ref{lem:perm-alg}.
	\end{itemize}
\end{itemize}

\noindent These results state that if in the initial theory a literal is
either defeasibly proved or not, so it will be in the final theory; thus
proving the soundness of the algorithm.

Moreover, since all lemmas show the equivalence of the two theories, and since
the equivalence relation is a bijection, this also gives the completeness of
Algorithm~\ref{alg:defeasible} \textsc{ComputeDefeasible}.
\end{proof}

\section{Discussion: The Three Types of Permission}
\label{sec:discussion}

Resuming the discussion of Section \ref{sec:three_concepts_of_permission}, we
will delve into the technical aspects of the three mentioned concepts of
permissions within our framework.

The idea of weak or negative permission is easily represented in the system as
follows:

\begin{definition}[Weak Permission]
\label{def:weak_permission}
Let $D$ be a Defeasible Theory. A literal $l$ is weakly permitted in $D$ iff 
$D\vdash -\partial_{\OBL} \non l$.
\end{definition}


One remark is in order here: Definition \ref{def:weak_permission} is useful to
check whether a literal $l$ is weakly permitted within the theory, but it
cannot be directly used to explicitly derive $\PERM l$ for triggering any rule
where this modal literal occurs in the antecedent. In fact, when
$\PERM l$ appears in the antecedent of a rule, then the only way to activate
such a rule is to explicitly derive $\PERM l$.

However, weak permissions are decisive in the applicability of a rule for
conditions (1.2) of Definitions~\ref{defn:APPL+pO} and~\ref{defn:APPL+pP};
when $\neg \OBL l$ occurs in the antecedent of the rule, then the theory must
prove $-\partial_{\OBL} l$. This is equivalent to say that $\non l$ is weakly
permitted in $D$.



This reading assumes that the distinction between weak and strong
permission goes beyond the idea, defended by \cite{alchourron-bulygin:1984},
that there is only one prescriptive type of permission. If the reader finds
our proposal limiting, we can trivially revise the rule applicability
conditions at point (1.3) (and adjust the algorithms accordingly) by
establishing that, when $\PERM l$ occurs in the antecedent of any rule $r$,
$r$ is applicable if one of the following conditions holds: (i)
$+\partial_{\PERM} l$, or (ii) $-\partial_{\OBL} \non l$ (observe that
$+\partial_{\PERM} l$ implies $-\partial_{\OBL} \non l$, but not vice versa).

A straightforward result (from Proposition \ref{prop:oblperm} Part
\ref{oblobl}) regarding weak permissions follows:

\begin{prop}\label{th:ought-can}
Let $D$ be any O-consistent Defeasible Theory. For any literal $l$, if \mbox{$D\vdash
+\partial_{\OBL} l$}, then $l$ is weakly permitted.
\end{prop}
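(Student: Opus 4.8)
The plan is to unfold the definition of weak permission and then invoke a result already established. By Definition~\ref{def:weak_permission}, saying that a literal $l$ is weakly permitted in $D$ is by definition the assertion that $D\vdash -\partial_{\OBL}\non l$. Hence the proposition reduces to showing the implication: if $D\vdash +\partial_{\OBL} l$ then $D\vdash -\partial_{\OBL}\non l$.

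But this implication is precisely the content of Proposition~\ref{prop:oblperm} Part~\ref{oblobl}, whose hypothesis is exactly $O$-consistency of the theory together with $D\vdash +\partial_{\OBL} l$. Since $D$ is assumed $O$-consistent and $D\vdash +\partial_{\OBL} l$ by hypothesis, Proposition~\ref{prop:oblperm} Part~\ref{oblobl} applies directly and yields $D\vdash -\partial_{\OBL}\non l$, i.e.\ $l$ is weakly permitted. So the proof is a one-line composition: Definition~\ref{def:weak_permission} followed by Proposition~\ref{prop:oblperm} Part~\ref{oblobl}.

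There is essentially no obstacle here, since all the real work was already done in the proof of Proposition~\ref{prop:oblperm} (the argument there splits on whether $\OBL l$ or $\OBL\non l$ lies in $\FACTS$, excludes $D\vdash +\partial_{\OBL}\non l$ via Proposition~\ref{prop:oconsistence}, and rules out the "loop" case $D\not\vdash \pm\partial_{\OBL}\non l$ by noting it would also block $+\partial_{\OBL} l$). The only thing to check is the bookkeeping that $O$-consistency as required by Proposition~\ref{prop:oblperm} is implied by the hypothesis of the present statement, which it is verbatim. Thus the statement is best presented simply as a corollary.
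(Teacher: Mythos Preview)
Your proposal is correct and matches the paper's own treatment exactly: the paper presents this proposition as ``a straightforward result (from Proposition~\ref{prop:oblperm} Part~\ref{oblobl})'' without further proof, and your argument simply makes explicit the unfolding of Definition~\ref{def:weak_permission} before invoking that part. Your additional remark that it is best viewed as a corollary is apt.
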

As expected, weak permission enjoys the deontic principle ``Ought implies
Can'', i.e., the principle that in deontic logic is $\OBL l \to \PERM l$.

\medskip

\noindent We now consider the two ways to obtain strong permissions
in Defeasible Logic: using either explicit permissive norms or
defeaters.

The first case is naturally captured in the logical framework proposed in
Sections \ref{sec:logics} and \ref{sec:algos}. In the simplest case, a literal
like $\PERM l$ is derived in a theory $D$ when there is a successful
reasoning chain in which the last rule has the form $\seq{a} \To_{\PERM} l$.

More complex cases are due to the fact that $l$ may occur in an $\oslash$-expression. In this case $l$ is obtained as strongly
permitted if, for each literal $c$ preceding $l$ in the sequence, one of the
following conditions hold:
\begin{itemize}
\item if $c$ leads to derive $\OBL c$ (i.e., $c$ occurs in an
$\otimes$-subsequence of the main sequence where $l$ occurs), then this
obligation must be obtained and violated;
\item if $c$ leads to derive $\PERM c$ (i.e., $c$ occurs in an
$\odot$-subsequence of the main sequence where $l$ occurs), then this
permission is successfully attacked by an opposite obligation.
\end{itemize}
The introduction of sequences of permissions and obligations
enriches the language in a significant way, since it allows us to express a
preference among obligations and permissions when they are logically compatible. In the case of sequences of positive permissions, an
$\odot$-sequence states that a permissive exception of an obligation is
preferred with respect to another possible exception of the same obligation.
However, this extension does not conceptually change the fundamental intuition
that is also behind the basic case where permissive norms have the form
$\seq{a} \To_{\PERM} l$: the antecedent of positive permissive rules with head
$l$ provides sufficient defeasible reasons to derive $\PERM l$.

The second method considered in Section
\ref{sec:three_concepts_of_permission} to capture the notion of strong permissions acting as
exceptions to obligations looks at permissions as \emph{undercutters}
in argumentation theory (this idea was discussed in
\cite{boella-torre-nrac:2003}): a permissive norm with head $l$ operates in
such a way that, if applicable, it is not a sufficient reason for
deriving neither $l$, nor $\non l$, but it is a sufficient reason for
blocking the derivation of $\non l$ as obligatory. In Defeasible Logic, this idea is
naturally implemented by using defeaters. For the sake of simplicity, we have not considered this
concept of strong permission in Sections \ref{sec:logics} and \ref{sec:algos}.
However, to cover this case it is
sufficient to adopt one the following definitions (compare the definition of
$R^{\PERM}[q,n]$ in Section \ref{sec:logics}):
\begin{definition}\label{def:defeater-1}
The set $R^{\PERM}[q,n]$ is $X\cup Y$ where
\begin{itemize}
\item $X$ is the set of rules where $q$ appears at index $n$, and the operator $\odot$ precedes $q$;
\item $Y$ is the set of defeaters with head $q$.\footnote{In this case, $n$ is
always $1$.}
\end{itemize}
\end{definition}
\begin{definition}\label{def:defeater-2}
The set $R^{\PERM}[q,n]$ is the set of defeaters with head $q$.\footnote{Since
$n$ is always $1$, $n$ can be omitted and we can simply write $R^{\PERM}[q]$.}
\end{definition}
Definition \ref{def:defeater-1} adds the defeaters to the set of rules that
can be used to derive tagged literals like $+\partial_{\PERM} l$, obtaining
modal literals like $\PERM l$. Definition \ref{def:defeater-2} restricts
derivations of strong permissions only to reasoning chains where the last rule
is a defeater. In both cases, except these new definitions, we do not need to
change anything else in the logic (hence, in the proof conditions for
$\pm\partial_{\PERM}$) or the algorithms.

What is the difference between strong permissions obtained via
rules for permission and the ones obtained via defeaters? Although rules for
$\PERM$ and defeaters are not in general equivalent, as we have informally
suggested in Section \ref{sec:three_concepts_of_permission}, they behave quite
similarly when they are used to derive permissions, as well as to attack obligation rules supporting the opposite conclusion. In other words, defeaters
do not clash with any permissive rules. Consequently, if this reading of
defeaters is simply embedded within the proof conditions for
$\pm\partial_{\PERM}$ and for $\pm\partial_{\OBL}$ by adopting either
Definition \ref{def:defeater-1}, or Definition \ref{def:defeater-2}, then
rules for $\PERM$ and defeaters play a very similar role in the proof theory. In fact, if we consider condition (2.3.3) in the
proof condition for $\pm\partial_{\OBL}$, then two rules like
$r_1:~\seq{a}\To_{\PERM} l$ and $r_2: \seq{a}\defeater l$ both attack any rule
$s$ for obligation supporting $\non l$, and $s$ can counterattack
$r_1$ and $r_2$ as well.

%

We remark that the significant difference between the rules for $\PERM$ and
defeaters is that defeaters do not allow for having sequences of permissions
in their head. 


Finally, notice that neither type of strong permission considered enjoys the principle ``Ought implies Can''. This result comes directly from
Proposition \ref{prop:oblperm} and is based on the idea that the only manner
to derive strong permissions is by means of reasoning chains where the last
rule occurring in them is either a rule for $\PERM$ or a defeater (i.e.,
explicit permissive norms).


%
%
%
%
%
%

\section{Summary and Related Work} 
\label{sec:conclusions}

In this paper we proposed an extension of Defeasible Logic to represent
three concepts of defeasible permission. In particular, we examined
different types of explicit permissive norms that work as exceptions to
opposite obligations. We also discussed how strong permissions can be
represented with or without introducing a new consequence relation for
inferring conclusions from explicit permissive norms. Finally, we combined a
preference operator applicable to contrary-to-duty obligations with a new one
representing ordered sequences of strong permissions which derogate from
prohibitions. Special attention was devoted to the computational aspects of
the logic.

Although logicians have mostly overlooked the concept of permission over time,
the history of deontic logic offers some well-known key ideas to interpret it.
Indeed, the original intuition (proposed by \cite{wright:1951}, among others)
that permissions are the modal dual of obligations, though technically simple
and attractive, proved to be partial and simplistic (for a discussion, see
\cite{vonwright:1963,alchourron-bulygin:1984,alchourron:1993}). Hence,
subsequent contributions have helped to expand the picture in several directions.
The distinction between weak (or negative) and strong (or positive) permission
\cite{vonwright:1963} plays an important role in this regard. Though, Alchourr\'on and
Bulygin \cite{alchourron-bulygin:1981,alchourron-bulygin:1984} argued
that there is only one prescriptive sense of permission, while the distinction
between weak and strong permission makes sense only at a descriptive level,
depending on how any permission is obtained within a system of norms. Legal
theorists such as Alf Ross and Norberto Bobbio \cite{ross:1968,bobbio:1958}
claimed that legal permissions are in fact exceptions to obligations imposing
the opposite, even though this did not lead them (Ross, in particular) to
clearly link the concept of exception with the one of strong permission. Other
theorists even denied the usefulness of seeing strong permissions as
exceptions \cite{opalek-wolenski:1991,royakkers-dignum:1997}, since the former
ones introduce nothing but strong indifference in normative systems. This
thesis was instead rejected by \cite{alchourron-bulygin:1981}.

The purpose of this paper was not to provide a comprehensive logical theory of
permission, nor to take an exhaustive critical position in the debate that we
have very briskly recalled. Our goal was twofold:
\begin{itemize}
	\item to capture some aspects of permissions within a broader view of
defeasible normative reasoning;
	\item to study the defeasibility of permissions in a computationally
efficient logical framework.
\end{itemize}
At a more theoretical level, our work shares with
\cite{makinson-torre:2003,boella-torre-icail:2003,stolpe-jal:2010} some
conceptual assumptions. Slightly rephrasing \cite{stolpe-jal:2010}'s analysis,
the following guidelines inspired our treatment of the concept of permission:
\begin{enumerate}
	\item ``no logic of norms without attention to the system of which they form part'' \cite{makinson:1999}: our investigation of the concept of permission looks at how permissive norms and other types of norms interact within systems;
	\item the distinction between positive and negative permission is meaningful;
	\item one fundamental role of positive permissions is that of stating
exceptions to obligations; accordingly, positive permissions are supposed to
override, or at least block, some deontic conclusions coming from other norms;
	\item the logical space of weak permission is the one left unregulated by
mandatory norms.
\end{enumerate}
However, \cite{makinson-torre:2003,boella-torre-icail:2003,stolpe-jal:2010}
are all based on a different logical formalism, Input/Output Logic (IOL)
\cite{makinson00input}, thus it is difficult to compare in detail those
contributions with the present one. Normative reasoning is viewed in IOL as a
rule-based process of manipulation of inputs (factual premises) into outputs
(normative conclusions). The analysis of normative systems consists in
representing conditional norms simply as ordered pairs $(a,b)$ where $a$
represents the antecedent of the rule, and $b$ its consequent: ``if $a$ then
$b$'' where $a$ has factual content and $b$ normative content, viz. an
obligation or a permission. Typically, both $a$ and $b$ are taken to be
formulas from propositional logic. Each set of such ordered pairs can be seen
as an inferential mechanism which, given an input, determines an output based
on those connections. Formally, given a set of positive mandatory norms
(obligations) $G$ and a set of permissive norms (positive permissions) $P$, a
closure operation $C$, and a set of facts $A$, the output of $G\cup P$ given
$C$ and a set of input formulas is $\mathit{out}_C(G\cup P,A) = \{ b \mid
(a,b) \in C(G\cup P) \mbox{ and } s \in A \}$. This approach allows for
defining different concepts of permission
\cite{makinson-torre:2003,boella-torre-icail:2003}\footnote{\cite{stolpe-jal:2010}
offers a different technical treatment, which is however in line with most
intuitions discussed in \cite{makinson-torre:2003,boella-torre-icail:2003}.}:
\begin{description}
\item[Negative permission:] $(a,x)$ is a negative permission w.r.t. $G$ iff
$(a,\neg x)\not\in \mathit{out}_C (G)$; if $x$ is not prohibited by the system
given $a$, then is negatively permitted under those factual conditions $a$.
\item[Static permission:] $(a,x)$ is a static permission w.r.t. $(G,P)$ iff
$(a, x)\in\mathit{out}_C (G\cup \{ (c,d)\})$ for some $(c,d)\in P$; $(a,x)$ is
statically permitted iff it follows from adding a positive permissive norm to
$G$;
\item[Dynamic permission:] \sloppy $(a,x)$ is a dynamic permission w.r.t.
$(G,P)$ iff $(c,\neg d)\in \mathit{out}_C (G, \cup \{ a, \neg x\})$ for some
$(c,d)\in P$; $(a,x)$ is permitted when, given the obligations in $G$, we
cannot prohibit $x$ under the condition $a$ without prohibiting $d$ under
condition $c$ which is however explicitly permitted by the system.
\end{description}
Another concept of permission was proposed in \cite{stolpe-jal:2010} to
specifically capture the idea of exception\footnote{\cite{stolpe-jal:2010}
proposed two definitions. Here, we report on the simpler one.}:
\begin{description} \item[Exemption:] $(a,x)$ is an exemption w.r.t. $(G,P)$
iff $(a, \neg x)\in \mathit{out}_C (G) \backslash \mathit{out}_C (G) - (c,\neg
d)$ for some $(c, d)\in P$; $(a,x)$ is an exception if the code contains the
prohibition of $x$ under condition $a$ which, unless it is removed, it clashes
with an explicit permission in $P$.
\end{description}
 
Since IOL and Defeasible Logic are different logical systems, which were designed for very
different purposes, it is difficult to compare them also in regard to the
problem of permission. Despite any possible connection between the two logics,
which is still an open research problem (Defeasible Logic in fact characterises a
consequence relation falling within cumulative reasoning
\cite{billington:1993}), an immediate comparison shows significant
differences. In particular, formulas in IOL are based on propositional logic
while rules in Defeasible Logic are built using atomic literals, modal literals, and their
negations. A second difference is that the inference machinery of IOL leads to
derive pairs, while the inference output in Defeasible Logic refers to theories and consists
of sets of tagged literals.

However, there are some general similarities in both approaches. 
\begin{itemize}
	\item First, Defeasible Logic, like IOL, models explicit and implicit permissions by
distinguishing in an analogous manner a consequence relation for obligation
and one for permission.

	\item Second, the definition of negative or weak permission in both formalisms is the same. 

	\item Third, although we have not discussed in our approach the notion of
static permission, it seems relatively simple to capture it in Defeasible Logic: indeed, we
may derive that some $p$ is permitted by making an essential use in the
derivation of a rule for explicit permission. The only feature that makes the
difference with respect to Defeasible Logic is that in IOL static permission admits the
principle ``Ought implies Can'', which does not hold for strong permission in
our approach.
\end{itemize}

Similarly, since both approaches distinguish between permissions rebutting
obligations and permissions providing exceptions, we do not see any difficulty
in capturing in Defeasible Logic the intuition behind the concept of exemption, even though
exceptions are more naturally captured in Defeasible Logic by using the superiority relation
between rules. The concept of dynamic permission can be instead expressed in
Defeasible Logic, but in a different way, due to the sceptical character of Defeasible Logic: if we add a
prohibition for some $p$, which is incompatible with any rule for explicit
permission (or even a defeater), then we cannot derive such a prohibition
(unless it is stated to be stronger than any other rule), and so $p$ is
dynamically permitted.

A novelty of our paper is the introduction of the new operator $\odot$ to
express preferences between explicit permissions. A somehow similar idea has
been suggested (though with different purposes) by
\cite{boella-torre-icail:2003}, where a preference relation among pairs (for
obligations and permissions) was introduced. Technically, it is not clear if
that approach can be reframed in our setting. In fact, adopting that option in
Defeasible Logic would not work, as the superiority relation in Defeasible Logic plays a role in the proof
theory only in case of rule conflicts. A clear advantage of the current
proposal is that it adopts a rich formal language where 
\begin{itemize}
\item modal operators can occur in the applicability conditions of rules (the
inputs in IOL are always factual); 
\item we have two ordering types between
permissions: the one expressed by $\odot$ and the one induced by the
superiority relation which applies to defeaters. 
\end{itemize}

To the best of our knowledge, there is no logical system having linear
complexity with analogous normative reasoning capabilities.


\subsection*{Acknowledgements} 
This work is an extended and revised version of the paper presented at Jurix 
2011 \cite{JURIX2011Perm}. We thanks the anonymous referees for their valuable 
comments.

NICTA is funded by the Australian Government as represented by the
Department of Broadband, Communications and the Digital Economy,
the Australian Research Council through the ICT Centre of
Excellence program and the Queensland Government.

\bibliographystyle{plain}
\bibliography{biblioPerm}

\appendix

\section{Appendix}\label{sec:Appendix}

We now prove the properties related to operations that transform a theory $D$
into an equivalent simpler theory $D'$ (we recall that the term ``simpler''
means a theory with a minor number of symbols in it). The transformations
operate on rules either by removing some elements from
some rules, or by deleting rules from a theory. Given the functional nature of the
operations involved, we will refer to the rules in the target theory with the same
names/labels as the rules in the source theory. Thus, given a rule $r\in D$,
we will refer to the rule corresponding to it in $D'$, if it exists, with the
same label, namely $r$.

\setcounter{thm}{18}
\begin{lem}
  Let $D=(\FACTS,R,>)$ be a theory such that $p\in \FACTS \cap \LIT$. Let 
  $D'=(\FACTS ',R',>')$ be the theory obtained from $D$, where
  \begin{align*}
    \FACTS ' = & \FACTS \setminus\set{p}\\
    R' = & \set{r: A(r)\setminus\set{p} \Rightarrow_{\OBL} C(r)!p |
                r\in R, A(r)\cap\tilde{p}=\emptyset} \cup \\
		 & \set{r: A(r)\setminus\set{p} \Rightarrow_{\PERM} C(r) |
                r\in R, A(r)\cap\tilde{p}=\emptyset} \\
    >' = & > \setminus \set{(r,s),(s,r)| r,s\in R, A(r)\cap\tilde{p}\neq\emptyset}.
  \end{align*}
Then $D\equiv D'$. 
\end{lem}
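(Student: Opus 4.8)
The plan is to prove $D\equiv D'$ by showing that $D$ and $D'$ admit proofs of exactly the same tagged literals, arguing by simultaneous induction on the length of proofs that $D\vdash\pm\partial_{\Box}q$ iff $D'\vdash\pm\partial_{\Box}q$ for every literal $q$ and every $\Box\in\MOD$. Since $p$ is a non-modal literal, moving it out of $\FACTS$ into $\FACTS'$ does not directly touch any of the sets $\pm\partial_{\Box}$ recorded in the extension; $p$ enters the proof conditions only through rule applicability, so the whole argument reduces to checking that the transformation on $R$ and $>$ preserves, rule by rule, the notions ``applicable for $q$'', ``discarded for $q$'', and the fragments of $>$ appealed to in Definitions~\ref{def:+pO}--\ref{def:-pP}.

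First I would record the standing observation that, since $p\in\FACTS$ and $\FACTS$ contains no pair of complementary literals, $\non p\notin\FACTS$; hence every rule $r$ with $A(r)\cap\tilde p\neq\emptyset$ (i.e.\ $\non p\in A(r)$) is discarded in $D$ for every literal by clause (1.5) of Definition~\ref{defn:DISC+pO+pP}. Such a rule can never make clause (2.2) of $+\partial_{\OBL}$ or $+\partial_{\PERM}$ succeed, can never appear as a surviving attacker in the (2.3) clauses, and is handled vacuously on the refutation side; therefore deleting it, together with every $>$-pair mentioning it, leaves the extension unchanged. This justifies passing from $R$ to the set of rules with $A(r)\cap\tilde p=\emptyset$ and from $>$ to $>'$.

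Next I would treat the two edits applied to the surviving rules. Deleting $p$ from the antecedent is sound because clause (1.5) for the premise $p$ is automatically satisfied in $D$ (as $p\in\FACTS$) and $p$ is never the reason for $r$ being discarded, so a surviving rule is applicable, resp.\ discarded, for $q$ in $D'$ under exactly the remaining conditions it satisfied in $D$. The head truncation $C(r)!p$ for rules $r\in R^{\OBL}[\cdot]$ is sound because, if $p$ occurs at index $k$ inside the $\otimes$-part of $C(r)$, then for any literal occurring at index $>k$ clause~(2) of Definition~\ref{defn:APPL+pO} (and clause~(2) of Definition~\ref{defn:DISC+pO+pP}) demands ``$p\notin\FACTS$ or $\non p\in\FACTS$'', which fails; hence $r$ is already discarded in $D$ for every literal after $p$, and dropping that tail changes nothing, while the prefix $c_{1}\otimes\cdots\otimes p$ — in particular the derivability of $\OBL c_{j}$ for $j\le k$ — is untouched. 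For $\PERM$-rules no such truncation is available, since a non-modal fact $p$ yields no information about $\OBL$ or $\PERM$, which is exactly why the lemma leaves $C(r)$ intact in that case.

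Finally I would run the induction on proof length: assuming the equivalence for all proper initial segments of a proof, I would check each of the four proof conditions in turn, using the three points above to match ``applicable/discarded for $q$'' and the superiority comparisons between $D$ and $D'$, and verifying that the rule-index sets $R[q,n]$, $R^{\OBL}[q,n]$, $R^{\PERM}[q,n]$ occurring in those conditions are put in bijection by the transformation. The main obstacle I expect is precisely this last bookkeeping: one must confirm that truncation never removes a literal $q$ from $R^{\OBL}[q,n]$ when $q$ sits at an index $\le k$, never creates an applicability absent in $D$, and correctly handles the boundary case $q=p$ at index $k$ (where $D'$ must still be able to derive $\OBL p$, which the choice $C(r)!p$ guarantees). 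Everything else is a routine, if lengthy, case analysis over the proof tags.
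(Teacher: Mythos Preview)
Your proposal is correct and follows essentially the same approach as the paper's proof: both argue by induction on the length of derivations, checking that applicability, discarding, and the relevant fragment of the superiority relation are preserved under the transformation, with the same three key observations (rules with $\non p$ in the body are discarded and may be dropped; removing $p$ from antecedents is harmless since $p\in\FACTS$; truncation at $p$ in obligation heads is justified because clause~(2) of Definition~\ref{defn:DISC+pO+pP} already discards such rules for any literal after $p$). Your presentation is somewhat more cleanly factored---you isolate the three invariants before running the induction, and you make explicit the consistency assumption $\non p\notin\FACTS$ that the paper uses only implicitly---but the underlying argument is the same.
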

\begin{proof}
The proof is by induction on the length of a derivation $P$.

For the inductive base, we consider all the modal derivations for a
generic literal $q$ in the theory.

\paragraph{$P(1)=+\partial_{\OBL}q$.}{This is possible in two cases: (1)
$\OBL q\in \FACTS$, or (2) $\OBL \non q \not\in \FACTS$, $\neg \OBL q
\not\in \FACTS$ and $\PERM \non q \not\in \FACTS$, and $\exists r\in
R^{\OBL}[q,i]$ that is applicable for $q$ at $i$ at $P(1)$ and every rule for
$\non q$ is either (a) discarded for $\non q$ at $P(1)$, or (b) defeated by a
stronger rule for $q$ applicable at $P(1)$.

For (1), by construction of $D'$, $\OBL q\in \FACTS$ iff $\OBL q\in \FACTS '$,
thus $+\partial_{\OBL}q$ is provable both in $D$ and in $D'$.

For (2), again by construction of $D'$, the modal literals in the clause do
not appear in $\FACTS$ iff they do not appear in $\FACTS '$. Furthermore, an
obligation rule $r\in R^{\OBL}[q,i]$ is applicable for $q$ at $P(1)$ iff
$i=1$, $A(r)\subseteq \FACTS$, and $\non p \not\in A(r)$ since $p \in \FACTS$.
Therefore $A(r)\subseteq \FACTS$ if $A(r)\setminus\set{p}\subseteq \FACTS '$.
This means that if a rule is applicable at $P(1)$ in $D$ then it is applicable
at $P(1)$ in $D'$. In the other direction, suppose that $r$ is applicable in
$D'$, thus in particular $A(r)\subseteq \FACTS '$. In both cases where $r$ has
$p$ in its antecedent or it is not in $D$, we obtain $A(r)\subseteq \FACTS$,
therefore $r$ is applicable in $D$.

Let us now consider a rule attacking $r$, namely a rule $s\in R[\non q,j]$.
For such a rule, we have to analyse cases (a) and (b) above.

(a) A rule $s$ is discarded for $\non q$ at $j$ at $P(1)$ in $D$ iff: (i)
$\exists a_{i}\in A(s)\cap\LIT$ and $a_{i}\notin \FACTS$, or (ii) $\exists
c_{k}\in C(s)$, $k<j$ such that $c_{k}\in \FACTS$, and $s \in R^{\OBL}[c_k,k]$
by condition (2) of a rule being discarded for $+\partial_{\OBL}$.

For (i), we are sure that $a_i \neq p$ since $p\in \FACTS$ by hypothesis. If
$a_{i}=\non p$, then $s$ is discarded in $D$ and, by construction, the rule is
not in $D'$. Hence $R'[\non q]\subseteq R[\non q]$. If $a_{i}\neq\non p$, then
by construction $a_{i}\notin \FACTS$ iff $a_{i}\notin \FACTS '$. For (ii), if
$c_{k}=p$, then the rule is discarded in $D$, the consequent of $s$ is
truncated at $k$ in $D'$, and $\non q$ does not occur in the consequent of $s$
in $D'$, i.e., $s\notin R'[\non q]$. If $c_{k} \neq p$, then the rule is also
discarded in $D'$ since the only difference between $\FACTS$ and $\FACTS '$ is
that $p$ is in $\FACTS$ but not in $\FACTS '$. To summarise, if a rule is
discarded for $\non q$ at $j$ at $P(1)$ in $D$, then the rule is either not in
$D'$, or discarded in $D'$.

For the other direction, in $R'$ there are no rules containing either $p$, or
$\non p$. Hence, if we have $a_{i}\in A(s)$ and $a_{i}\notin \FACTS '$, then
$a_{i}\notin \FACTS$. Similarly for $c_{k}$, if $c_{k}\in \FACTS '$, then
$c_{k}\in \FACTS$. The difference between $D$ and $D'$ is that in $R$ we have
rules with $\non p$ in the antecedent and rules with $p$ preceding $q$ in the
consequent, and these rules are not in $R'$. Since $p\in \FACTS$, all rules in
$R$ for which there is no corresponding rule in $R'$ are discarded in $D$.

(b) The superiority relation is modified in a way that we only remove
instances where one of the rules is discarded in $D$. But only rules that are
not discarded are active in the clauses of the proof conditions where the
superiority relation is involved.

From the discussion above, if a rule is applicable for $q$ at 1 at $P(1)$ in
$D$, then the rule is also applicable in $D'$. If a rule is discarded for
$\non q$ at index 1 at $P(1)$ in $D$, then the rule is not in $D'$, or it is
discarded in $D'$. If a rule $s$ for $\non q$ is applicable in $D$, then there
is an applicable rule $t$ for $q$ stronger than $s$. The rules $s$ and $t$ are
applicable, so they are in $D'$ and $t>'s$. Thus, $D'\vdash+\partial_{\OBL}q$.

Similarly to the other direction, if a rule is applicable in $D'$ then it is
applicable in $D$, and if it is discarded in $D'$ then it is discarded in
$D$. The additional rules in $D$ have either $\non p$ in the antecedent,
or $p$ in the consequent, thus these rules are discarded in $D$, and for
them clause (2.3.1) of $+\partial_{\OBL}$ applies. Therefore if we have a
derivation of length one of $+\partial_{\OBL}q$ in $D'$, then we have
$D\vdash+\partial_{\OBL}q$. }

\paragraph{$P(1)=+\partial_{\PERM}q$.}{The proof is essentially identical to
the inductive base for $+\partial_{\OBL}q$, with some slight syntactical
modifications dictated by the different proof conditions for
$+\partial_{\PERM}$: (1) $\PERM q\in \FACTS$, or (2) $\OBL \non q \not\in
\FACTS$ and $\neg \PERM q \not\in \FACTS$, and $\exists r\in R^{\PERM}[q,1]$
that is applicable for $q$ at $P(1)$ and every obligation rule for
$\non q$ is either (a) discarded for $\non q$ at $P(1)$, or (b) defeated by a
stronger rule for $q$ applicable at $P(1)$. }

\paragraph{$P(1)=-\partial_{\OBL}q$.}{Clearly clauses (1) and (2.1) of $-\partial_{\OBL}$ hold for
$D$ iff they hold for $D'$, given that $\FACTS$ and $\FACTS '$ have the same
modal literals. For clause (2.2), consider a rule $r\in R^{\OBL}[q,1]$. If
$r$ is discarded for $D$ then, as we argued above, it is not in $R'$, or 
it is discarded in $D'$. Also, all rules discarded in $D'$ are discarded in $D$, and
all rules in $R$ for which there is no corresponding rule in $R'$ are
discarded in $D$ as well. As regards clauses (2.3.1)--(2.3.3), we point out
that condition $t\not > s$ between two applicable rules $t$ and $s$ is clearly
unaffected passing from $D$ to $D'$, and the other way around. }

\paragraph{$P(1)=-\partial_{\PERM}q$.}{This case is a mere variant of the
previous one for the negative provability of obligations. }

\vspace{5mm}

\noindent For the inductive step, the property of equivalence between $D$ and
$D'$ is assumed up to the $n$-th step of a generic proof for a given literal
$l$. 

\paragraph{$P(n+1) = +\partial_{\OBL} q$.}{Clauses (1) and (2.1) of
$+\partial_{\OBL}$ follow the same conditions treated in the inductive base
for $+\partial_{\OBL}q$. As regards clause (2.2), if an applicable rule $r \in R^{\OBL}[q,i]$ for $q$ in $D$ exists, then clauses
(1.1)--(1.5) and (2) of Definition~\ref{defn:APPL+pO} are all satisfied. By
inductive hypothesis, we conclude that clauses (1.1)--(1.4) are satisfied
by $r$ in $D'$ as well, and clause (1.5) is satisfied in $D'$ by the inductive
base. For condition (2), the provability of $c_k$ as an obligation in $D'$ is
given by inductive hypothesis; furthermore, $c_k \not\in \FACTS$ or $\non c_k
\in \FACTS$ iff $c_k \not\in \FACTS '$ or $\non c_k \in \FACTS '$ either when
$c_k = \non p$ or $c_k \neq \non p$ since $\FACTS$ and $\FACTS '$ coincide in
both cases (notice that $c_k \neq p$ by hypothesis).

The direction from rule applicability in $D'$ to rule applicability in $D$ is
straightforward. Therefore, a rule $r \in R^{\OBL}[q,i]$ is applicable for $q$
in $D$ iff it is applicable for $q$ in $D'$. Conditions (2.3.1)--(2.3.3) are
treated like cases (a) and (b) for the corresponding inductive base.}

\paragraph{$P(n+1) = +\partial_{\PERM} q$.}{Again, the
inductive base justifies clauses (1), (2.1), (2.3.1), and (2.3.2) of
$+\partial_{\PERM}q$. Clause (2.2) is satisfied by the same reasoning used
in the inductive step of $+\partial_{\OBL}q$ and
by Definition~\ref{defn:APPL+pP}, whose additional clause (3) is trivially
satisfied by inductive hypothesis.
}

\paragraph{$P(n+1) = -\partial_{\OBL} q$.}{Besides conditions (1), (2.1),
(2.3.1), and (2.3.2) -- treated as usual -- it remains to prove that a rule
for $q$ as a permission at $P(n+1)$ is discarded in $D$ iff it is discarded at
$P(n+1)$ in $D'$. To this end, we follow the same analysis carried out in
$P(n+1) = +\partial_{\OBL} q$ for rule applicability, using the inductive base
and hypothesis, and Definition~\ref{defn:DISC+pO+pP}.}

\paragraph{$P(n+1) = -\partial_{\PERM} q$.}{This case is a mere variant
of the previous one for the negative provability as an obligation.}
\end{proof}

\begin{lem}
  Let $D=(F,R,>)$ be a theory such that  $F\cap\LIT=\emptyset$ and 
  $D\vdash+\partial_{\OBL}p$. Let $D'=(F,R',>')$ be the theory obtained from $D$
  where
  \begin{align*}   
    R' & =  \set{r: A(r)\setminus\set{\OBL p} \Rightarrow_{\OBL} C(r) ! \non p \ominus \non p,\\
& A(r)\setminus\set{\OBL p} \Rightarrow_{\OBL} C(r) \ominus p | r\in R, A(r)\cap\widetilde{\OBL p}=\emptyset}\ \cup \\
& \set{r: A(r)\setminus\set{\OBL p} \Rightarrow_{\PERM} C(r) \ominus \non p | r\in R, A(r)\cap\widetilde{\OBL p}=\emptyset}\\
    >' & = > \setminus \set{(r,s),(s,r)| r,s\in R, 
      A(r) \cap \widetilde{\OBL p}\neq\emptyset}.
  \end{align*} 
  Then $D\equiv D'$.  
\end{lem}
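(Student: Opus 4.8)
The plan is to prove $D\equiv D'$ by induction on the length of a derivation, in exact parallel with the proof of Lemma~\ref{lem:facts}: for each literal $q$, each $\Box\in\MOD$ and each $n$ I would show that $\pm\partial_\Box q$ occurs at position $P(n+1)$ of some proof in $D$ iff it does so in $D'$, treating in turn the four tag types $+\partial_{\OBL}q$, $+\partial_{\PERM}q$, $-\partial_{\OBL}q$, $-\partial_{\PERM}q$, first for the base case $n=0$ and then for the inductive step.

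Most of the work is front-loaded into recording why the syntactic surgery producing $D'$ is conservative. From $D\vdash+\partial_{\OBL}p$, Proposition~\ref{prop:oblperm}, Parts~\ref{oblobl} and~\ref{oblperm}, give $D\vdash-\partial_{\OBL}\non p$ and $D\vdash-\partial_{\PERM}\non p$, while $\FACTS\cap\LIT=\emptyset$ gives $p\notin\FACTS$. Hence: (a) every rule carrying a member of $\widetilde{\OBL p}=\{\neg\OBL p,\OBL\non p,\PERM\non p\}$ in its antecedent is permanently discarded, since clauses (1.2)--(1.4) of applicability fail for it forever, so deleting those rules and the $>$-pairs mentioning them changes nothing, and $\OBL p$ may be deleted from the remaining antecedents because $+\partial_{\OBL}p$ is already settled; (b) in an obligation chain the passage across an occurrence of $p$ requires, by clause (2) of Definition~\ref{defn:APPL+pO}, only $+\partial_{\OBL}p$ (available) and $p\notin\FACTS$ (true), so that occurrence can be spliced out, which is $C(r)\ominus p$; (c) an occurrence of $\non p$ in an obligation chain can never be crossed, since $D\vdash-\partial_{\OBL}\non p$, so the chain may be truncated there and $\non p$ then removed, which is $C(r)!\non p\ominus\non p$; (d) an occurrence of $\non p$ in a permission chain is crossed, by clause (3) of Definition~\ref{defn:APPL+pP}, exactly when $-\partial_{\PERM}\non p$ holds, which it does, so it too can be spliced out, which is $C(r)\ominus\non p$ on $\PERM$-rules. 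The mirror statements for the discarding conditions of Definition~\ref{defn:DISC+pO+pP} follow by strong negation.

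Granted these observations, each of the four cases of the induction becomes the same bookkeeping as in Lemma~\ref{lem:facts}: I would check that a rule $r\in R[q,j]$ is applicable (resp.\ discarded) for $q$ in $D$ at step $P(n+1)$ iff the rule obtained from it in $R'$ is applicable (resp.\ discarded) for $q$ at the shifted index in $D'$ at step $P(n+1)$, using the inductive hypothesis for the modal literals in $A(r)$ and for the chain elements preceding $q$, and using that $\FACTS$ and $\FACTS'=\FACTS$ share the same modal literals for clauses (1) and (2.1) of the proof conditions. Clauses (2.3.1)--(2.3.3) about attacking rules and the superiority relation transfer verbatim, since $>'$ only loses pairs involving rules already discarded in $D$, and such rules are inert in the proof conditions.

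I expect the one genuinely delicate point to be tracking the index shift: after $p$ and/or $\non p$ have been excised from $C(r)$, the index $j$ at which $q$ sits changes, and one must verify that the prefix requirements --- clause (2) of Definition~\ref{defn:APPL+pO} and clauses (2)--(3) of Definition~\ref{defn:APPL+pP} --- on the shortened chain are equivalent to the ones on the original chain; this is exactly where facts (b), (c), (d) are invoked, one per kind of excised literal, and where a simultaneous $!$-truncation followed by a $\ominus$-removal on the same chain must be handled carefully. A final routine check is the subcase $q\in\{p,\non p\}$ itself, where survival of the tags $+\partial_{\OBL}p$, $-\partial_{\OBL}\non p$ and $-\partial_{\PERM}\non p$ under the transformation follows directly from the hypothesis together with Proposition~\ref{prop:oblperm}.
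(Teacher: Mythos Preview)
Your proposal is correct and follows essentially the same route as the paper: induction on derivation length paralleling Lemma~\ref{lem:facts}, with the key inputs being Proposition~\ref{prop:oblperm} (to obtain $-\partial_{\OBL}\non p$ and $-\partial_{\PERM}\non p$) and $\FACTS\cap\LIT=\emptyset$ (to ensure $p\notin\FACTS$), which together justify the antecedent deletions and the three head surgeries you label (b)--(d). The only organisational difference is that the paper explicitly splits the base case according to whether $\OBL p\in\FACTS$ or not, whereas you treat both uniformly via the hypothesis $D\vdash+\partial_{\OBL}p$; this is cosmetic, and your explicit flagging of the index shift and of the subcase $q\in\{p,\non p\}$ is if anything more careful than the paper's terse treatment.
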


\begin{proof}
The proof is by induction on the length of a derivation $P$.

For the inductive base, we consider all possible derivations of length one for
a generic literal $q$ in the theory, where $\OBL p \in \FACTS$.

\paragraph{$P(1)=+\partial_{\OBL}q$.}{From $D$ to $D'$, the structure of the
proof follows the inductive base for $+\partial_{\OBL}q$ of
Lemma~\ref{lem:facts}, where the cases depending on $F$ are trivially
satisfied since $F = F'$, and the other steps are obtained by substituting
$p$ with $\OBL p$ and $\non p$ with $\widetilde{\OBL p}$.

\noindent From $D'$ to $D$, there must exist an applicable rule $r$ proving
$+\partial_{\OBL}q$ at $P(1)$ in $D'$. Then $A(r) \subseteq \FACTS$. By construction of $D'$, the antecedent of $r$ in $D$ is either
the same, or $A(r) \cup \set{\OBL p}$, while the consequent has either $q$ as
the first element, or only $p$ precedes $q$. Since $\OBL p \in \FACTS$ and $p
\not \in \FACTS$, all the combinations of antecedent and consequent denote
applicable rules in $D$.

As already argued, also in this case if a rule $s$ is discarded at $P(1)$ in
$D$, then it is not in $D'$, or it is discarded in $D'$. In particular, all rules in
$R$ for which there is no corresponding rule in $R'$ have either (i) $\neg
\OBL p$, $\OBL \non p$ or $\PERM \non p$ in the antecedent, or (ii) $\non p$
in the consequent. Since $+\partial_{\OBL} p$ holds, clause (1.2) of
Definition~\ref{defn:DISC+pO+pP} and
Proposition~\ref{prop:oblperm} Parts \ref{oblobl} and \ref{oblperm}
make the rules of the form (i) discarded in $D$. Rules of type (ii) are also
discarded in $D$ since $\OBL p \in \FACTS$.}

\paragraph{$P(n+1)=+\partial_{\OBL}q$.}{The proof is essentially identical to
situation $P(n+1)=+\partial_{\OBL}q$ of Lemma~\ref{lem:facts}. Notice that for
rule applicability, clauses $l \in \FACTS$ (condition (1.5)), and $c_k \not\in
\FACTS$ or $\non c_k \in \FACTS$ (condition (2)) are both true in $D$ and
$D'$, since the set of facts is the same and it does not contain non-modal
literals. }

\paragraph{$P(1)=+\partial_{\PERM}q$, $P(n+1)=+\partial_{\PERM}q$.}{For the
inductive base, notice that by construction of $D'$, the antecedent of a rule
$r$ for permission in $D$ is either the same, or $A(r) \cup \set{\OBL p}$,
while the consequent has either $q$ as the first element, or only $\non p$
precedes $q$. However, applicability and refutability of this rule follow the
analysis carried out for $+\partial_{\OBL}$. We treat the inductive step as
usual, using the inductive hypothesis and the fact that $\FACTS = \FACTS '$. }


\vspace{3mm}

\noindent The hypothesis $\FACTS = \FACTS '$ and the structure of $D'$ can
be easily used to prove the inductive base and the inductive step
for proof tags $-\partial_{\OBL}$ and $-\partial_{\PERM}$.

\vspace{3mm}

\noindent Both the inductive base and the inductive step for
$\pm\partial_{X}q$ with $X=\set{\OBL, \PERM}$ in the case where $\OBL p \not
\in \FACTS$ are straightforward. Indeed, even when $\OBL p \in A(r)$ in $D$,
the hypothesis $+\partial_{\OBL}p$ allows an applicable rule in $D'$ to be
also applicable in $D$. The same hypothesis allows us to conclude that a
discarded rule in $D$ is also discarded in $D'$, and the other way around.
\end{proof}

\begin{lem}
  Let $D=(F,R,>)$ be a theory such that  $F\cap\LIT=\emptyset$ and 
  $D \vdash -\partial_{\OBL} p$. Let $D'=(F,R',>')$ be theory obtained from $D$
  where
  \begin{align*}   
    R' & =  \set{r: A(r)\setminus\set{\neg \OBL p} \Rightarrow_{\OBL} C(r) ! p \ominus p | r\in R, A(r)\cap\set{\OBL p}=\emptyset}\\
    >' & = > \setminus \set{(r,s),(s,r)| r,s\in R, 
      A(r) \cap \set{\OBL p}\neq\emptyset}.
  \end{align*} 
  Then $D\equiv D'$.    
\end{lem}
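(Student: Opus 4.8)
The plan is to prove $D\equiv D'$ by induction on the length of a derivation, following verbatim the pattern used for Lemmas~\ref{lem:facts} and~\ref{lem:obl-alg}: for every literal $q$, every $\Box\in\MOD$, and every sign, I will show that $P(n+1)=\pm\partial_{\Box}q$ is derivable in $D$ iff it is derivable in $D'$, invoking the inductive hypothesis on the strictly shorter sub-derivations that occur inside the relevant proof condition. It is convenient to describe $D'$ as the theory obtained from $D$ by (i) deleting every rule whose antecedent contains $\OBL p$, (ii) removing $\neg\OBL p$ from the antecedent of every surviving rule, (iii) replacing the consequent $C(r)$ of every surviving obligation rule by $C(r)!p\ominus p$ (truncate the reparation chain at the occurrence of $p$, then delete that occurrence), leaving the consequents of non-obligation rules untouched, and (iv) dropping from $>$ every pair mentioning a rule deleted in (i).

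Before the induction I would record four facts about this transformation. \textbf{(a)} Since $D\vdash-\partial_{\OBL}p$, Proposition~\ref{prop:consistence} gives $D\not\vdash+\partial_{\OBL}p$; hence every rule with $\OBL p$ in its antecedent is discarded in $D$ by clause~(1.1) of Definition~\ref{defn:DISC+pO+pP}, so deleting these rules removes only rules already inert in every proof clause of $D$. \textbf{(b)} A literal $\neg\OBL p$ in an antecedent enters the proof conditions only through clause~(1.2) of Definitions~\ref{defn:APPL+pO}--\ref{defn:APPL+pP}, which is satisfied in $D$ precisely because $-\partial_{\OBL}p$ holds, and it never forces a discard (that would require $+\partial_{\OBL}p$, excluded by~(a)); so stripping it alters neither applicability nor discardedness of any rule. \textbf{(c)} For $r\in R^{\OBL}[p]$ the literal $p$ occurs in the $\otimes$-prefix of $C(r)$, at some index $k$; because $-\partial_{\OBL}p$ holds, clause~(2) of Definition~\ref{defn:APPL+pO} (resp.\ of Definition~\ref{defn:APPL+pP}) fails for every literal of $C(r)$ at an index larger than $k$, so $r$ is discarded for all those literals by clause~(2) of Definition~\ref{defn:DISC+pO+pP}, and at the position $k$ itself $r$ cannot be part of a derivation of $+\partial_{\OBL}p$ without contradicting $D\vdash-\partial_{\OBL}p$; hence $C(r)!p\ominus p$ discards only rule positions that $-\partial_{\OBL}p$ already renders ineffective in $D$. \textbf{(d)} Every pair removed in passing from $>$ to $>'$ mentions a rule deleted in (i); as the superiority relation is consulted in the proof conditions only between rules that are applicable (in particular, not discarded), dropping those pairs is harmless.

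Granting (a)--(d), the induction is routine. In the base case I run through $P(1)=+\partial_{\OBL}q,\ +\partial_{\PERM}q,\ -\partial_{\OBL}q,\ -\partial_{\PERM}q$: since $F=F'$ and $F\cap\LIT=\emptyset$, the clauses referring to facts transfer immediately, and the clauses about applicable or discarded rules transfer by (a)--(d), exactly as in the base case of Lemma~\ref{lem:facts}. In the inductive step I assume the equivalence for all derivations of length at most $n$ and, for each of the four tags at step $n+1$, unfold the corresponding proof condition, apply the inductive hypothesis to the modal literals occurring in antecedents and in the $\otimes$-prefixes of consequents, apply the base case to the non-modal factual literals, and use (a)--(d) to match the rules of $D$ with those of $D'$ — mirroring the bookkeeping already carried out for Lemmas~\ref{lem:facts} and~\ref{lem:obl-alg}.

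I expect the only genuinely delicate point to be the direction from $D'$ to $D$: a rule present in $D$ but with no counterpart in $D'$ — one with $\OBL p$ in its antecedent, or an obligation rule whose $D$-consequent extends beyond the occurrence of $p$ — must be argued to be inert in $D$, and this is exactly where the hypothesis $D\vdash-\partial_{\OBL}p$, together with Proposition~\ref{prop:consistence} and clause~(2) of Definition~\ref{defn:DISC+pO+pP}, does the work. A secondary, purely bookkeeping concern is keeping the case analysis of $\ominus$ in Definition~\ref{def:trunctaion-removal} straight according to whether $p$ sits strictly inside the $\otimes$-segment or at its very start (and noting that if $p$ is the head of a \emph{permission} rule it does not belong to $R^{\OBL}[p]$, so such a rule is left unchanged); this introduces no conceptual difficulty.
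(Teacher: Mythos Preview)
Your proposal is correct and follows the same underlying approach as the paper --- induction on derivation length in the style of Lemmas~\ref{lem:facts} and~\ref{lem:obl-alg} --- but is considerably more explicit. The paper's own proof is essentially a two-sentence sketch: it observes that the consequent transformation $C(r)!p\ominus p$ coincides with one of the clauses of Lemma~\ref{lem:obl-alg} (instantiated with $l=\non p$, via Proposition~\ref{prop:oblperm} Part~\ref{oblobl}), and then notes the only differences are on the antecedent side (removing $\neg\OBL p$ and deleting rules with $\OBL p$), which are justified directly by $-\partial_{\OBL}p$. Your facts (a)--(d) unpack exactly this reasoning, and your explicit treatment of the $D'$-to-$D$ direction --- arguing that the rules and rule-positions absent from $D'$ are already inert in $D$ --- is precisely the point the paper's sketch glosses over. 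So you have done more work than strictly necessary, but nothing is wrong or wasted.
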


\begin{proof}
Since $D\vdash +\partial_{\OBL}l$ implies $D\vdash -\partial_{\OBL} \non l$ by
Proposition~\ref{prop:oblperm} Part \ref{oblobl}, the modifications on $R'$ and
$>'$ represent a particular case of Lemma~\ref{lem:obl-alg}, where $l = \non
p$. The only difference is that we eliminate from $D$ rules with $\OBL p$ in
the antecedent, and we modify the antecedent of the others eliminating $\neg \OBL
p$ (we recall that condition $-\partial_{\OBL} p$ makes $\neg \OBL p$
defeasibly proved in our framework). In the case that $R^{\OBL}[p] = \emptyset$, no modifications on the consequent
of rules are made since literal $p$ does not appear in any chain by hypothesis.
\end{proof}

\begin{lem}
  Let $D=(F,R,>)$ be a theory such that  $F\cap\LIT=\emptyset$ and 
  $D\vdash+\partial_{\PERM}p$. Let $D'=(F,R',>')$ be theory obtained from $D$
  where
  \begin{align*}   
    R' =  &\set{r: A(r)\setminus\set{\PERM p} \Rightarrow_{\OBL} C(r)! \non p \ominus \non p |
                r\in R, A(r)\cap\widetilde{\PERM p}=\emptyset} \cup \\
		 & \set{r: A(r)\setminus\set{\PERM p} \Rightarrow_{\PERM} C(r) ! p |
                r\in R, A(r)\cap\widetilde{\PERM p}=\emptyset} \\
    >' = & > \setminus \set{(r,s),(s,r)| r,s\in R, 
      A(r)\cap\widetilde{\PERM p}\neq\emptyset}.
  \end{align*} 
  Then $D\equiv D'$.    
\end{lem}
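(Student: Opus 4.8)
The plan is to establish $D\equiv D'$ by induction on the length of a derivation $P$, exactly along the lines of the proofs just given for Lemma~\ref{lem:facts} and Lemma~\ref{lem:obl-alg}: I show that an initial segment $P(1..i)$ is admissible in $D$ iff it is admissible in $D'$, treating the four tags $+\partial_{\OBL}$, $-\partial_{\OBL}$, $+\partial_{\PERM}$, $-\partial_{\PERM}$ separately in the base case $i=1$ and in the inductive step, and in each case inspecting clauses (1)--(2.3) of the relevant proof condition in parallel in the two theories.

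Before the induction I would record the consequences of the hypothesis that make the transformation legitimate. Since $D\vdash+\partial_{\PERM}p$, Proposition~\ref{prop:oblperm} Part~\ref{permobl} yields $D\vdash-\partial_{\OBL}\non p$, and Proposition~\ref{prop:consistence} rules out both $D\vdash+\partial_{\OBL}\non p$ and $D\vdash-\partial_{\PERM}p$. Moreover $F=F'$ and $F\cap\LIT=\emptyset$, so every clause of a proof condition that refers only to facts --- in particular the ``$c_k\notin F$ or $\non c_k\in F$'' side conditions on reparation chains, which hold trivially here --- has the same truth value in $D$ and in $D'$. Next I classify the rules of $D$ with no counterpart in $D'$, namely those with an element of $\widetilde{\PERM p}=\set{\neg\PERM p,\OBL\non p}$ in their antecedent: by clause (1.4) of Definition~\ref{defn:DISC+pO+pP} applied with $D\vdash+\partial_{\PERM}p$, and by clause (1.1) applied with $D\vdash-\partial_{\OBL}\non p$, each such rule is discarded in $D$; hence it behaves as a discarded rule in every clause of every proof condition, so its deletion --- together with the removal of the corresponding pairs from $>$, which is consulted only on non-discarded rules --- changes neither the positive nor the negative provability of any literal, in either direction. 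For the surviving rules whose antecedent merely loses $\PERM p$, applicability and discardedness are unchanged because $+\partial_{\PERM}p$ makes clause (1.3) of Definitions~\ref{defn:APPL+pO}--\ref{defn:APPL+pP} satisfied for that conjunct.

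I would then justify the rewriting of the consequents, $C(r)!p$ on permission rules and $C(r)!\non p\ominus\non p$ on obligation rules. Using $D\vdash+\partial_{\PERM}p$, clause (3) of Definition~\ref{defn:DISC+pO+pP} shows that a permission rule is discarded for every index lying past $p$ in its $\odot$-chain, so passing to $C(r)!p$ only deletes indices that were already unusable, while retaining $p$ itself keeps the rule in $R^{\PERM}[p]$ so that $+\partial_{\PERM}p$ is still derivable in $D'$. Likewise, $D\vdash-\partial_{\OBL}\non p$ makes an obligation rule discarded for every index past the occurrence of $\non p$ in its chain (via the consequent conditions of Definitions~\ref{defn:APPL+pO} and~\ref{defn:DISC+pO+pP}), which legitimates the truncation $C(r)!\non p$; and the subsequent removal of $\non p$ is sound because $\non p$ can never be obtained as an obligation through this rule ($-\partial_{\OBL}\non p$ holds globally), while the only remaining use of the rule --- as an attacker $s\in R[\non p,j]$ in clause (2.3) of the conditions for $+\partial_{\OBL}p$ or $+\partial_{\PERM}p$ --- is already neutralised in $D$: by $D\vdash-\partial_{\OBL}\non p$ the whole family $R^{\OBL}[\non p]$ is collectively discarded-or-defeated, and the set-elimination argument from the proof of Proposition~\ref{prop:oblperm} shows that dropping its members from $R[\non p]$ one at a time leaves the status of every literal unchanged.

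With these reductions in place, the induction itself is a routine variant of the one for Lemma~\ref{lem:facts}: the factual clauses transfer because $F=F'$, the rule-based clauses transfer using the classification above, in the inductive step the antecedent conditions (1.1)--(1.5) and the consequent conditions transfer by the inductive hypothesis applied to the shorter subproofs, and the ``$t\not>s$'' comparisons between applicable rules are untouched by the transformation. I expect the genuine obstacle to be the soundness of $\ominus\non p$ on obligation reparation chains --- making rigorous the claim that a rule which stops being a rule for $\non p$ in $D'$ was already inert in $D$ --- since this cannot be verified rule by rule but rests on the global conclusion $D\vdash-\partial_{\OBL}\non p$ and hence on the team/set-elimination argument rather than a purely local inspection.
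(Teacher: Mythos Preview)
Your proposal is correct and follows essentially the same route as the paper: an induction on the length of derivations, mirroring the proofs of Lemmas~\ref{lem:facts} and~\ref{lem:obl-alg}, with the key observation that $D\vdash+\partial_{\PERM}p$ yields $D\vdash-\partial_{\OBL}\non p$ via Proposition~\ref{prop:oblperm} Part~\ref{permobl}, so that rules containing an element of $\widetilde{\PERM p}$ in the antecedent are discarded and the consequent rewritings are justified by clauses (2) and (3) of Definition~\ref{defn:DISC+pO+pP}. The only organisational difference is that the paper explicitly splits into the subcases $\PERM p\in F$ and $\PERM p\notin F$ (handling the base case $P(1)$ separately in each), whereas you treat them uniformly; and the paper disposes of the $\ominus\non p$ step by arguing directly that the affected rule occurrences are discarded, rather than appealing to the team/set-elimination argument you anticipate needing.
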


\begin{proof}
The proof is by induction on the length of a derivation $P$; its structure is the same of that for Lemma~\ref{lem:obl-alg}. We
have to take into account the different proof conditions for permission, and
arrange the proof to analyse the inductive bases and steps of each derivation
either when $\PERM p$ is a fact or not.

For the inductive base, we consider all possible derivations of length one for
a generic literal $q$ in the theory, where $\PERM p \in \FACTS$.

\paragraph{$P(1)=+\partial_{\OBL}q$.}{From $D$ to $D'$, the structure of the
proof follows the inductive base for $+\partial_{\OBL}q$ of
Lemma~\ref{lem:facts}, where the cases depending on $F$ are trivially
satisfied since $F = F'$, and the other steps are obtained by substituting
$p$ with $\PERM p$ and $\non p$ with $\widetilde{\PERM p}$.

\noindent From $D'$ to $D$, there must exist an applicable rule $r$ proving
$+\partial_{\OBL}q$ at $P(1)$ in $D'$. Then $A(r) \subseteq \FACTS$. By
construction of $D'$, the antecedent of $r$ in $D$ is either the same, or
$A(r) \cup \set{\PERM p}$, while the consequent has $q$ as the first element.
In this case $p$ cannot precede $q$ since a permission never precedes an
obligation in a reparation chain. Since $\PERM p \in \FACTS$, then all rules
in $D$ corresponding to applicable rules in $D'$ are themselves applicable in
$D$.

As already argued, also in this case if a rule $s$ is discarded for $\non q$
at $P(1)$ in $D$, then either it is not in $D'$, or it is discarded in $D'$.
In particular, all rules in $R$ for which there is no corresponding rule in
$R'$ have either (i) $\neg \PERM p$ or $\OBL \non p$ in the antecedent,
(ii) $\non p$ precedes $\non q$ if $s$ is an obligation rule, (iii) $p$
precedes $\non q$ if $s$ is a permission rule. Since $+\partial_{\PERM} p$
holds, clauses (1.1) and (1.4) of Definition~\ref{defn:DISC+pO+pP}, and
Proposition~\ref{prop:oblperm} Part \ref{permobl} make the rules of the form (i)
discarded in $D$. Moreover, the rules of type (ii) and (iii) are also
discarded since $\PERM p \in \FACTS$, and by Definition~\ref{defn:DISC+pO+pP}
clauses (2) and (3), respectively.}

\paragraph{$P(n+1)=+\partial_{\OBL}q$.}{The proof is essentially identical to
situation $P(n+1)=+\partial_{\OBL}q$ of Lemma~\ref{lem:facts}. Notice that for
rule applicability, clauses $l \in \FACTS$ (condition (1.5)) and $c_k \not\in
\FACTS$ or $\non c_k \in \FACTS$ (condition (2)) are both true in $D$ and
$D'$, since the set of facts is the same and it does not contain non-modal
literals.}

\paragraph{$P(1)=+\partial_{\PERM}q$, $P(n+1)=+\partial_{\PERM}q$.}{For the
inductive base, notice that by construction of $D'$, the antecedent of a rule
$r$ for permission in $D$ is either the same, or $A(r) \cup \set{\PERM p}$,
while the consequent must have $q$ as first element. However, applicability
and refutability of this rule follow the analysis carried out for
$+\partial_{\OBL}$. We treat the inductive step as usual, using the inductive
hypothesis and the fact that $\FACTS = \FACTS '$. In this case, we do not
consider rules of type (iii), i.e., rules for $\PERM \non q$, since only rules
for obligation can attack rules for permission.}


\vspace{3mm}

\noindent The hypothesis $\FACTS = \FACTS '$ and the structure of $D'$ can
be easily used to prove the inductive base and the inductive step
for proof tags $-\partial_{\OBL}$ and $-\partial_{\PERM}$.

\vspace{3mm}

\noindent Both the inductive base and the inductive step for
$\pm\partial_{X}q$ with $X=\set{\OBL, \PERM}$ in the case where $\PERM p \not
\in \FACTS$ are straightforward. Indeed, even when $\PERM p \in A(r)$ in $D$,
the hypothesis $+\partial_{\PERM}p$ allows an applicable rule in $D'$ to be
also applicable in $D$. The same hypothesis allows us to conclude that a
discarded rule in $D$ is also discarded in $D'$, and the other way around.
\end{proof}

\begin{lem}
  Let $D=(F,R,>)$ be a theory such that  $F\cap\LIT=\emptyset$ and 
  $D\vdash-\partial_{\PERM}p$. Let $D'=(F,R',>')$ be theory obtained from $D$
  where
  \begin{align*}   
    R' & =  \set{r: A(r)\setminus\set{\neg \PERM p} \Rightarrow_{\PERM} C(r) \ominus p | r\in R, A(r)\cap\set{\PERM p}=\emptyset}\\
    >' & = > \setminus \set{(r,s),(s,r)| r,s\in R, 
      A(r)\cap\set{\PERM p}\neq\emptyset}.
  \end{align*} 
  Then $D\equiv D'$.    
\end{lem}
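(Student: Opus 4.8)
The plan is to adapt, almost line for line, the proof of Lemma~\ref{lem:perm-alg} (equivalently, that of Lemma~\ref{lem:notObl-alg}), since the present transformation is its ``negative'' counterpart: it is driven by $D\vdash-\partial_{\PERM}p$ in place of $D\vdash+\partial_{\PERM}p$, and, as there, the set of facts is purely modal and unchanged ($F=F'$). Concretely I would prove, by induction on the length of a derivation $P$, that for every literal $q$ and every $\Box\in\MOD$ one has $D\vdash\pm\partial_{\Box}q$ iff $D'\vdash\pm\partial_{\Box}q$.

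First I would record the three observations that justify the individual modifications. Since $D\vdash-\partial_{\PERM}p$ the modal literal $\neg\PERM p$ is defeasibly established, so (i) condition $(1.4)$ of Definition~\ref{defn:APPL+pP} is met by $\neg\PERM p$ in every derivation, hence erasing $\neg\PERM p$ from the antecedent of any rule changes no applicability verdict; (ii) by Proposition~\ref{prop:consistence} we cannot also have $D\vdash+\partial_{\PERM}p$, so clause $(1.3)$ of Definition~\ref{defn:DISC+pO+pP} discards, at every step of every derivation, each rule carrying $\PERM p$ in its antecedent, which may therefore be deleted together with the pairs of $>$ mentioning it. Finally, $\widetilde{\neg\PERM p}=\set{\PERM p}$, so $-\partial_{\PERM}p$ yields no information about $\OBL p$, $\OBL\non p$ or $\non p$; this is exactly why — unlike in Lemma~\ref{lem:perm-alg} — no truncation is applied to obligation chains, and the only surgery on consequents is the removal $C(r)\ominus p$ on permission rules. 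That step I would justify by noting that, since $-\partial_{\PERM}p$ holds, an occurrence of $p$ inside an $\odot$-segment satisfies clause $(3)$ of Definition~\ref{defn:APPL+pP} in exactly the circumstances in which it is vacuously satisfied once $p$ is deleted, and dually it never triggers clause $(3)$ of Definition~\ref{defn:DISC+pO+pP}; hence, for every literal lying \emph{after} $p$ in that segment, the applicability and the discardedness of the rule, as well as its membership in the relevant $R^{\PERM}[\cdot]$ set, are preserved in both directions.

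With these facts the induction is routine and mirrors Lemma~\ref{lem:perm-alg}. Because $F=F'$ and $F\cap\LIT=\emptyset$, every clause of Definitions~\ref{def:+pO}--\ref{def:-pP} that mentions only facts transfers verbatim, so for the inductive base I would treat, for a generic $q$, each of $+\partial_{\OBL}q$, $+\partial_{\PERM}q$, $-\partial_{\OBL}q$, $-\partial_{\PERM}q$ using the by-now-standard bookkeeping: a rule applicable in $D$ is applicable in $D'$ and conversely; a rule discarded in $D$ is either absent from $D'$ or discarded in $D'$; and every rule of $D$ with no counterpart in $D'$ carried $\PERM p$ in its antecedent, hence was already discarded in $D$ and contributes to no clause. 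The inductive step applies the inductive hypothesis to the antecedent literals and to the $c_k$'s preceding $q$ in a chain, exactly as in Lemma~\ref{lem:perm-alg}, and the negative tags follow from the positive ones by the strong-negation duality already used for Proposition~\ref{prop:consistence}.

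I expect the genuinely delicate point to be not the bulk of the induction but the interaction of $C(r)\ominus p$ with clauses $(2.3.2)$--$(2.3.3)$ of the obligation proof conditions. A permission rule whose head contains $p$ can, when applicable \emph{for $p$}, figure as an attacker $s\in R[p,j]$ of obligation rules supporting $\non p$; excising $p$ from its head removes it from $R[p,\cdot]$, and one must check that this cannot make some $+\partial_{\OBL}\non p$ (or $-\partial_{\OBL}\non p$) flip. The hypothesis has to be leaned on precisely here: using $D\vdash-\partial_{\PERM}p$ — and Lemma~\ref{lem:provePERM} when the antecedent of such a rule is empty — one would re-run the team-of-rules elimination argument of the proof of Proposition~\ref{prop:oblperm} Part~\ref{permobl} to show that any such rule that actually blocked an obligation for $\non p$ in $D$ was either dominated by an obligation rule for $\non p$ surviving in $D'$, or irrelevant because $\OBL\non p$ is settled by the facts. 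Pinning this case down — and verifying that the $\ominus$ operation of Definition~\ref{def:trunctaion-removal} is invoked only where it is defined, so that a permission rule whose chain reduces to $p$ alone is handled sensibly — is where the real care is required; depending on how these degenerate situations are resolved it may be cleanest to apply the head-surgery only to occurrences of $p$ strictly beyond the first position of a permission rule, or to dispatch the sub-case $\neg\PERM p\in F$ separately. Once that is settled, the remainder is a transcription of the proof of Lemma~\ref{lem:perm-alg} with its positive hypothesis on $p$ replaced by the negative one.
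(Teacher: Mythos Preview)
Your plan is correct and is essentially what the paper does, only more explicit. The paper's own proof is a two-sentence pointer: it observes that $-\partial_{\PERM}\non l$ is a consequence of $+\partial_{\OBL}l$ (Proposition~\ref{prop:oblperm}, Part~\ref{oblperm}) and then says the proof ``is derived by Lemma~\ref{lem:obl-alg}, where $l=\non p$''. In other words, it does not re-run the induction at all; it simply asserts that the present transformation is structurally the $-\partial_{\PERM}$ fragment of the transformation already justified in Lemma~\ref{lem:obl-alg}, so the same inductive argument carries over. Your choice of Lemma~\ref{lem:perm-alg} (or Lemma~\ref{lem:notObl-alg}) as the template is equally valid and yields the same bookkeeping.

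Where you differ is in care, not in route. Your ``delicate point'' about a permission rule with $p$ in its $\odot$-chain acting as an attacker in clauses~(2.3.2)--(2.3.3) of $\pm\partial_{\OBL}\non p$, and the question of what happens when $\ominus$ excises that occurrence, is a legitimate check that the paper simply does not perform: its proof is a bare cross-reference and does not engage with this interaction (nor with the degenerate case where the chain reduces to $p$ alone). So your extended discussion is additional rigor rather than a different method; if you want to match the paper's level of detail, you can replace the whole induction by a one-line appeal to Lemma~\ref{lem:obl-alg} via Proposition~\ref{prop:oblperm}.
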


\begin{proof}
Condition $D \vdash -\partial_{\PERM}\non l$ is another consequence of $D\vdash
+\partial_{\OBL} l$, as stated by
Proposition~\ref{prop:oblperm} Part \ref{oblperm}. As such, the proof is derived by
Lemma~\ref{lem:obl-alg}, where $l = \non p$.
\end{proof}
\addtocounter{thm}{1}
\begin{lem}
  Let $D=(F,R,>)$ be a theory such that $F\cap\MODLIT=\emptyset$, $\exists r\in 
  R^{\OBL}[p,1]$, $A(r)=\emptyset$, and $R[\non p]\subseteq R_{infd}$. 
  Then $D\vdash+\partial_{\OBL}p$.
\end{lem}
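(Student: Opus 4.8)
The plan is to prove this directly, by exhibiting the one-step derivation $P=(+\partial_{\OBL}p)$ and checking that clause~(2) of the proof condition for $+\partial_{\OBL}$ (Definition~\ref{def:+pO}) holds for $p$ at position $P(1)$; no induction on proof length is needed here, in contrast with the earlier lemmas, because the hypothesis $F\cap\MODLIT=\emptyset$ collapses every applicability check to a purely head-theoretic condition that can be discharged with the empty initial segment $P(1..0)$. Clause~(2.1) is immediate: since $F$ contains no modal literal at all, none of $\OBL\non p$, $\neg\OBL p$, $\PERM\non p$ belongs to $F$.

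For clause~(2.2) I would take the rule $r\in R^{\OBL}[p,1]$ provided by the hypothesis, with $A(r)=\emptyset$. Writing $C(r)=c_1\otimes\cdots\otimes c_{l-1}\odot c_l\odot\cdots\odot c_n$ with $c_1=p$ (so $1<l$, since the first literal of an obligation rule lies in the $\otimes$-prefix), Definition~\ref{defn:APPL+pO} applies at index $j=1$: its condition~(1) is vacuous because $A(r)=\emptyset$, and its condition~(2) is vacuous because there is no $k$ with $1\le k<1$. Hence $r$ is applicable for $p$, witnessing~(2.2). For clause~(2.3), fix any $s\in R[\non p,j]$. By the hypothesis $R[\non p]\subseteq R_{infd}$ and Definition~\ref{def:infd} (instantiated at the literal $\non p$), there is a rule $t$ having $p$ at index~$1$, having empty antecedent, and satisfying $t>s$, with the additional guarantee --- matching the way Algorithm~\ref{alg:defeasible} populates $R_{infd}$ --- that $t$ is an obligation rule, i.e.\ $t\in R^{\OBL}[p,1]$, whenever $s$ is a permission rule for $\non p$ or a defeater. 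As in the previous step, $t$ is applicable for $p$. I then split on the kind of $s$: if $s\in R^{\OBL}$, then $t\in R[p,1]$ is an applicable rule for $p$ with $t>s$, so clause~(2.3.2) holds for $s$; if $s\in R^{\PERM}\cup R_{def}$, then $t\in R^{\OBL}[p,1]$ is an applicable obligation rule for $p$ with $t>s$, so clause~(2.3.3) holds. Since $R=R^{\OBL}\cup R^{\PERM}\cup R_{def}$, one of the two subcases always applies, so (2.3) holds for every $s$; together with (2.1) and (2.2) this gives $D\vdash+\partial_{\OBL}p$.

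The only delicate point is this last step: I must make sure that the rule $t$ supplied by the ``inferiorly defeated'' hypothesis is at once \emph{applicable} for $p$ --- which follows from $t$ having an empty antecedent and $p$ at index~$1$, so that conditions~(1) and~(2) of Definition~\ref{defn:APPL+pO} are vacuous --- and of the \emph{right modal type}, namely an obligation rule precisely when the attacking rule $s$ is a permission rule or a defeater, as clause~(2.3.3) demands. Both properties are exactly what Definition~\ref{def:infd}, and underneath it the construction of $R_{infd}$ in Algorithm~\ref{alg:defeasible} out of leading-index empty-antecedent rules for $p$, is designed to guarantee, so this amounts to carefully unfolding that definition in the two subcases rather than to any genuine obstacle.
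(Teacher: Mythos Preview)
Your proof is correct and follows the same approach as the paper's: verify clauses (2.1), (2.2), and (2.3) of $+\partial_{\OBL}$ directly, using $F\cap\MODLIT=\emptyset$ for (2.1), the empty-antecedent rule $r$ for (2.2), and the inferiorly-defeated hypothesis to produce applicable defeating rules of the right modal type for (2.3). Your version is in fact more careful than the paper's own proof, which simply says the rules for $\non p$ are ``inferiorly defeated by an appropriate rule with empty antecedent'' without spelling out the case split on $s\in R^{\OBL}$ versus $s\in R^{\PERM}\cup R_{def}$; your observation that Definition~\ref{def:infd} (together with the way the algorithm populates $R_{infd}$) guarantees the defeating rule is an obligation rule precisely when needed for clause~(2.3.3) is the right way to close that gap.
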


\begin{proof}
Given that there are no modal literals in $F$ clause (2.1) of 
$+\partial_{\OBL}$ is satisfied. Let $r$ be a rule that meets the conditions of 
the Lemma. According to Definition \ref{defn:APPL+pO}, rule $r$ is trivially 
applicable for $p$ in the condition for $+\partial_{\OBL}$, and thus clause (2.2) is applicable 
as well. Finally, for clause (2.3) we have that all rules for $\non p$ are 
inferiorly defeated by an appropriate rule with empty antecedent for $p$, but 
a rule with empty body is applicable. Hence, all clauses for proving 
$+\partial_{\OBL}$ are satisfied. Thus, $D\vdash+\partial_{\OBL}p$.
\end{proof}

\begin{lem}
  Let $D=(F,R,>)$ be a theory such that $F\cap\MODLIT=\emptyset$, $\exists r\in 
  R^{\PERM}[p,1]$, $A(r)=\emptyset$, and $R^{\OBL}[\non p]\subseteq R_{infd}$.
  Then $D\vdash+\partial_{\PERM}p$.
\end{lem}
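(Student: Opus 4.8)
The plan is to follow the proof of Lemma~\ref{lem:proveOBL} step by step, verifying directly that the theory $D$ meets each clause of the proof condition for $+\partial_{\PERM}$ in Definition~\ref{def:+pP}. No induction is needed here: it suffices to exhibit one applicable rule for clause~(2.2) and then discharge the attack clause~(2.3) using the inferior-defeat hypothesis.

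First I would check clause~(2.1). Since $F\cap\MODLIT=\emptyset$, neither $\OBL\non p$ nor $\neg\PERM p$ — both modal literals — lies in $F$, so clause~(2.1) holds (and clause~(1) is moot, as we argue through the second disjunct). Next, for clause~(2.2), the hypothesis provides $r\in R^{\PERM}[p,1]$ with $A(r)=\emptyset$; since $r$ is a permission rule, $C(r)$ is an $\odot$-expression with $p$ at index~$1$, so in Definition~\ref{defn:APPL+pP} the conditions (1.1)--(1.5) quantify over the empty antecedent while conditions~(2) and~(3) quantify over the empty index range $1\le k<1$. Hence $r$ is vacuously applicable for $p$ in the condition for $\pm\partial_{\PERM}$, which establishes clause~(2.2).

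Finally, for clause~(2.3) I would invoke the hypothesis $R^{\OBL}[\non p]\subseteq R_{infd}$, i.e.\ that every obligation rule for $\non p$ is inferiorly defeated. Given any $s\in R^{\OBL}[\non p,j]$, Definition~\ref{def:infd} supplies a rule $t$ for $p$ with empty antecedent such that $t>s$; a rule with empty body is applicable, so $t$ witnesses clause~(2.3.2). Here lies the one point needing genuine care: clause~(2.3) of $+\partial_{\PERM}$ ranges only over obligation rules $s\in R^{\OBL}[\non p,j]$ — permission rules for $\non p$ cannot attack a permission rule for $p$ — so the comparatively weak hypothesis $R^{\OBL}[\non p]\subseteq R_{infd}$, rather than the $R[\non p]\subseteq R_{infd}$ appearing in Lemma~\ref{lem:proveOBL}, is exactly what is required. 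With all clauses of $+\partial_{\PERM}$ satisfied we conclude $D\vdash+\partial_{\PERM}p$. The main obstacle is thus merely the bookkeeping of matching the set of rules scanned by clause~(2.3) to the appropriate inferior-defeat hypothesis; the rest is a routine unfolding of the applicability definitions, entirely parallel to the obligation case.
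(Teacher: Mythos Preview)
Your proof is correct and follows essentially the same route as the paper's own argument: verify clause~(2.1) from $F\cap\MODLIT=\emptyset$, use the empty-body rule $r\in R^{\PERM}[p,1]$ for clause~(2.2) via Definition~\ref{defn:APPL+pP}, and discharge clause~(2.3) by noting that the inferior-defeat hypothesis supplies, for each $s\in R^{\OBL}[\non p]$, an applicable rule $t$ for $p$ with $t>s$. The paper's proof is in fact terser than yours---it simply says the argument is analogous to Lemma~\ref{lem:proveOBL} with applicability taken from Definition~\ref{defn:APPL+pP} and the defeated rules restricted to $R^{\OBL}[\non p]$---so your explicit unpacking of why the weaker hypothesis $R^{\OBL}[\non p]\subseteq R_{infd}$ suffices (because clause~(2.3) of $+\partial_{\PERM}$ ranges only over obligation rules) is a welcome clarification rather than a deviation.
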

\begin{proof}
The proof is analogous to the previous one. The differences are that we have to 
use the notion of applicability in Definition \ref{defn:APPL+pP}, and that the 
rules that are inferiorly defeated are restricted to rules in $R^{\OBL}[\non 
p]$.
\end{proof}

\begin{lemma}
  Let $D=(F,R,>)$ be a theory such that $F\cap\MODLIT=\emptyset$ and 
  $R^{X}[p]=\emptyset$, for $X\in\set{\OBL,\PERM}$. Then $D\vdash-\partial_Xp$.
\end{lemma}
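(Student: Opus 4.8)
The plan is to exhibit a derivation of length one: I claim the single-step sequence $P(1)=-\partial_X p$ already satisfies the relevant negative proof condition (Definition~\ref{def:-pO} when $X=\OBL$, Definition~\ref{def:-pP} when $X=\PERM$), so that $D\vdash-\partial_X p$. The point to stress is that both of these proof conditions refer only to the set of facts $F$ and to the rule set $R$ (and $>$), never to earlier tagged literals in the proof; hence, once their clauses are checked against $D$ itself, the conclusion can be placed at $P(1)$ with no prior steps required.

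First I would discharge clause~(1). Since $F\cap\MODLIT=\emptyset$, the fact set contains no modal literal whatsoever; in particular $\OBL p\notin F$ and $\PERM p\notin F$, so the requirement ``$\Box p\notin F$'' in clause~(1) of the appropriate definition holds for $X=\OBL$ and for $X=\PERM$.

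Next I would show clause~(2) holds via its subclause~(2.2). By hypothesis $R^{X}[p]=\emptyset$; recalling that $R^{X}[p]$ is the union over all admissible indices $i$ of the sets $R^{X}[p,i]$ (the rules where $p$ occurs at index $i$ with the mode of the rule equal to $X$ for $i=1$, or the appropriate operator $\otimes$/$\odot$ immediately preceding $p$ for $i>1$), we get $R^{X}[p,i]=\emptyset$ for every index $i$. Therefore the universally quantified statement in clause~(2.2), ``for all $r\in R^{X}[p,i]$, $r$ is discarded for $p$,'' is vacuously true. This suffices for clause~(2), so there is no need to verify clause~(2.1) — which in fact fails, precisely because $F$ has no modal literals — nor to produce an attacking rule as in clause~(2.3). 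Combining the verified clause~(1) with the vacuously verified clause~(2) shows that $-\partial_X p$ is derivable at step one, i.e.\ $D\vdash-\partial_X p$.

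I do not expect any genuine obstacle here; the only two things to be careful about are (i) reading $R^{X}[p]=\emptyset$ as killing clause~(2.2) uniformly over \emph{all} indices, and (ii) noting that the negative proof conditions are self-contained at the first proof step whenever no rules for $p$ and no modal facts are involved, so that the length-one derivation is legitimate. The argument for $X=\OBL$ and $X=\PERM$ is identical modulo the obvious substitution of the two definitions.
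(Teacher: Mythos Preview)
Your argument is correct and follows essentially the same approach as the paper: both proofs observe that clause~(1) holds because $F$ contains no modal literals, and that clause~(2.2) of $-\partial_{\OBL}$ (resp.\ $-\partial_{\PERM}$) is vacuously satisfied since $R^{X}[p]=\emptyset$. Your version is more explicit about why a length-one derivation is legitimate, but the core reasoning is identical.
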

\begin{proof}
If there are no modal literals and the set of defeasible (obigation/permission) 
rules for a literal $p$ is empty, then clause (2.2) of
$-\partial_{\OBL}$ and $-\partial_{\PERM}$ are vacuously satisfied.
\end{proof}

\begin{lemma}
  Let $D=(F,R,>)$ be a theory such that $F\cap\MODLIT=\emptyset$, and $\exists 
  r\in R[p,1]$ such that $A(r)=\emptyset$ and $r_{sup}=\emptyset$. Then
  \begin{enumerate}
    \item if $r\in R^{\OBL}$, then $D\vdash-\partial_{\Box}\non p$, 
      $\Box\in\set{\OBL,\PERM}$;
    \item if $r\in R^{\PERM}\cup R_{def}$, then $D\vdash-\partial_{\OBL}\non p$. 
  \end{enumerate}
\end{lemma}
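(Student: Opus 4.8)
The plan is to read a proof off directly from the negative proof conditions (Definitions~\ref{def:-pO} and~\ref{def:-pP}), in each case using the distinguished rule $r$ itself to witness the existential clause~(2.3). First I would dispose of the easy clauses. Since $F\cap\MODLIT=\emptyset$, no modal literal is a fact, so clause~(1) of both $-\partial_{\OBL}$ and $-\partial_{\PERM}$ holds for $\non p$ (these only require $\OBL\non p\notin F$, resp. $\PERM\non p\notin F$), while clause~(2.1) of each condition is falsified, since it would demand one of $\OBL p$, $\neg\OBL\non p$, $\PERM p$ (resp. $\OBL p$, $\neg\PERM\non p$) to be a fact. Hence in every case it remains only to establish clause~(2.3) for the literal $\non p$.

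Next I would record the two structural observations that make~(2.3) go through with $s=r$. Because $A(r)=\emptyset$, the antecedent conditions (1.1)--(1.5) of the relevant applicability definition are vacuous, and because $p$ occurs at index~$1$ the head conditions~(2) (and~(3), for permission) are vacuous as well; hence $r$ is applicable for $p$ — in the sense of Definition~\ref{defn:APPL+pO} if $r\in R^{\OBL}$, of Definition~\ref{defn:APPL+pP} if $r\in R^{\PERM}$, and in the obvious antecedent-only sense for a defeater with head $p$ (this mirrors the reasoning used in Lemmas~\ref{lem:proveOBL} and~\ref{lem:provePERM}). Moreover, $r_{sup}=\emptyset$ says that no rule is superior to $r$, so for \emph{every} rule $t$ we have $t\not>r$; in particular every rule $t$ for $\non p$ vacuously satisfies ``$t$ is discarded or $t\not>r$''.

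With these in hand the three sub-cases are immediate. For Part~1, where $r\in R^{\OBL}[p,1]$: to get $-\partial_{\OBL}\non p$, take $s=r$ in clause~(2.3) for $\non p$, so that (2.3.1) is the applicability just noted, (2.3.2) holds since $t\not>r$ for all $t$, and (2.3.3) is vacuous because $s\in R^{\OBL}$; to get $-\partial_{\PERM}\non p$, take the same $s=r$, which now serves as the member of $R^{\OBL}[p,1]$ required in clause~(2.3) of $-\partial_{\PERM}$, with (2.3.1)--(2.3.2) exactly as before. For Part~2, where $r\in R^{\PERM}\cup R_{def}$: to get $-\partial_{\OBL}\non p$, take $s=r$ in clause~(2.3) for $\non p$; now (2.3.2) is vacuous (since $s\notin R^{\OBL}$) and (2.3.3) holds because every $t\in R^{\OBL}[\non p,k]$ satisfies $t\not>r$. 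None of the conditions invoked mentions an earlier tagged literal, so each derivation is already a legitimate length-one proof and no induction is needed. The closest thing to an obstacle is simply the case bookkeeping: one must, for each type of $r$, invoke the matching sub-clause ((2.3.2) versus (2.3.3)) and the applicability notion appropriate to that type, and must read $r_{sup}=\emptyset$ as ``nothing beats $r$'', so that the ``$t\not>r$'' disjunct is always the one that fires.
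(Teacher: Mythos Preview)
Your proof is correct and follows essentially the same approach as the paper's own proof: take the distinguished rule $r$ as the witness $s$ in clause~(2.3) of the relevant negative proof condition, observe it is applicable (empty body, index~$1$), and use $r_{sup}=\emptyset$ to discharge (2.3.2)/(2.3.3). The paper's proof is a one-sentence sketch (``it is easy to verify that for both cases the rule satisfies clause~(2.3) of $-\partial_{\Box}$''), whereas you have spelled out the case analysis and the handling of clauses~(1) and~(2.1) explicitly; but the underlying argument is identical.
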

\begin{proof}
Let $r$ be a rule in a theory $D$ for which the conditions of the Lemma hold. 
It is easy to verify that for both cases the rule satisfies clause (2.3) of 
$-\partial_{\Box}$, in particular (2.3.2--3) for $-\partial_{\OBL}$ and (2.3.2) 
for $-\partial_{\PERM}$.
\end{proof}

\end{document}